\newcommand{\E}{\mathbb{E}}
\theoremstyle{plain}
\newtheorem{theorem}{Theorem}[section]
\newtheorem{lemma}[theorem]{Lemma}
\newtheorem{proposition}[theorem]{Proposition}
\newtheorem{corollary}[theorem]{Corollary}
\newtheorem{assumption}{Assumption}[section]
\theoremstyle{definition}
\theoremstyle{remark}
\crefname{assumption}{assumption}{assumptions}
\Crefname{assumption}{Assumption}{Assumptions}
\newcommand{\RomanNum}[1]{\uppercase\expandafter{\romannumeral #1}}
\newcommand\infoci{\mathtt{CI}}
\newcommand\infoaci{\mathtt{HOI}}
\newcommand\infodaci{\mathtt{NHI}}
\newcommand\infoall{\mathtt{ALL}}
\newcommand\aggwithpred{F_{+2}}
\newcommand\aggnopred{F_{+1}}
\newcommand\aggpure{f^{d}}
\newcommand\aggmix{f^{r}}
\newcommand\mixnaive{f_{uni}}
\newcommand\mixgood{f_{bir}}
\newcommand\mixalgo{f_{alg}}
\newcommand\purenaive{f_{ftfe}}
\newcommand\puregood{f_{thr}}
\title{Robust Decision Aggregation with Second-order Information}
\author{
Yuqi Pan\thanks{School of Electronics Engineering and Computer Science, Peking University. Email: \texttt{pyq0419@stu.pku.edu.cn}}
\and
Zhaohua Chen\thanks{CFCS, School of Computer Science, Peking University. Email: \texttt{\{chenzhaohua,yuqing.kong\}@pku.edu.cn}}
\and
Yuqing Kong\footnotemark[2]
}
\begin{document}

\maketitle

\thispagestyle{empty}
\begin{abstract}
We consider a decision aggregation problem with two experts who each make a binary recommendation after observing a private signal about an unknown binary world state. 
An agent, who does not know the joint information structure between signals and states, sees the experts' recommendations and aims to match the action with the true state. 
Under the scenario, we study whether supplemented additionally with second-order information (each expert's forecast on the other's recommendation) could enable a better aggregation. 

We adopt a minimax regret framework to evaluate the aggregator's performance, by comparing it to an omniscient benchmark that knows the joint information structure. 
With general information structures, we show that second-order information provides no benefit -- no aggregator can improve over a trivial aggregator, which always follows the first expert's recommendation. 
However, positive results emerge when we assume experts' signals are conditionally independent given the world state. 
When the aggregator is deterministic, we present a robust aggregator that leverages second-order information, which can significantly outperform counterparts without it. 
Second, when two experts are homogeneous, by adding a non-degenerate assumption on the signals, we demonstrate that random aggregators using second-order information can surpass optimal ones without it. 
In the remaining settings, the second-order information is not beneficial. 
We also extend the above results to the setting when the aggregator's utility function is more general. 
\end{abstract}


\section{Introduction}

Two marketing experts at a company are providing advice on whether to launch a new product now or delay the launch to next year. 
The unknown binary world state is whether the market has a high demand for the product now or a low demand currently. 
Each expert can recommend ``launch now'' or ``delay launch'' as the optimal action. 
If the action matches the true demand (launch now when high demand, or delay when low demand), the utility of the company is $+1$. 
If mismatched, the utility is $-1$.

The agent takes the experts' recommendations into consideration and outputs an aggregated decision. 
If both experts recommend the same action, the agent can follow this consensus recommendation. 
When they suggest opposite actions, the aggregator may follow the expert who has superior past accuracy. 
However, in the one-shot setting, the aggregator would fall into a dilemma without access to performance history.
Like the above motivating example, similar dilemmas universally exist in other scenarios, e.g., when a patient faces different diagnoses from two doctors, when an investor faces diverse opinions on a startup company, and when an editor faces conflicting recommendations from two referees.  

Now, let us ask each expert to provide a prediction about their peer's recommendation, called the second-order setting. Back to our motivating example, for instance, the first expert recommends ``launch now'' and predicts her peer recommends ``launch now'' with a probability of 0.4, and the second expert recommends ``delay launch'' and predicts her peer recommends ``launch now'' with a probability of 0.2. Our question is -- \emph{by having each expert additionally provide a prediction about their peer's recommendation, can we better aggregate their recommendations?} 

A series of works have demonstrated the benefit of the additional second-order information in the information aggregation problem. 
The most closed setting is considered by \citet{RePEc:nat:nature:v:541:y:2017:i:7638:d:10.1038_nature21054}, 
proving that second-order information enhances aggregation given a sufficiently large group size. 
However, the value of second-order information remains less explored for smaller expert groups.

In particular, prior analyses considered settings where second-order information identifies the true world state, even absent prior data. 
However, with only two experts, the information often cannot unambiguously determine the state. 
In such a case, we adopt a robust aggregation paradigm to evaluate the aggregator performance against an omniscient benchmark. 
The paradigm is introduced by \citet{areili2018robust}. 
The omniscient agent observes the experts' private signals and knows the true joint distribution between signals and states. 
This allows a perfect aggregation to output the optimal recommendation. 
In contrast, we focus on aggregators who only know the family of possible information structures, not the exact structure itself. 
The goal is to identify an aggregation rule with minimum regret compared to the omniscient benchmark. 
Here, regret is defined as the worst-case difference between the benchmark's expected utility and the aggregator's expected utility among all information structures. 

Under the robust aggregation paradigm, we will identify the optimal aggregator that only uses the experts' first-order recommendations; and the optimal aggregator that uses both the first-order recommendations and second-order predictions. 
By comparing the regret achieved by the optimal aggregators with and without second-order predictions, we can quantify the value of the additional higher-order information for robust aggregation.


\subsection{Summary of Results}
As indicated by the motivating example, our primary focus lies in the scenario where two experts are asked to provide recommendations for binary actions after observing a binary signal, with the agent aiming to align the action with a binary world state. 

\paragraph{General information structures.}
We initially make no assumption on the underlying information structure, allowing for the possibility of a strong conditional correlation between the signals observed by the two experts. 
In this context, we present a negative result, demonstrating that no random aggregator can ensure a regret below $0.5$, even when equipped with second-order information (\Cref{thm:lowerboundall}). 
Further, this regret can be guaranteed by a trivial aggregator that consistently follows the first expert's recommendation (\Cref{thm:regretofpurenaive}). 
Consequently, the second-order information does not provide any assistance in constructing a robust aggregator under these circumstances.

\paragraph{Conditionally independent information structures.}
Subsequently, we narrow our focus to conditionally independent information structures to reveal the power of second-order information. 
Here, conditionally independent information structures mean that two experts' signals are independent conditioning on the world state. 
We start with no additional assumptions, allowing for heterogeneous experts. 
We then consider homogeneous experts, meaning that two experts have an identical marginal signal distribution. 
Building on this, we further assume non-degenerate signals, i.e., experts recommend different actions after seeing distinct signals. 
On the aggregator side, we consider two kinds: (1) deterministic aggregators that output a fixed action, and (2) random aggregators that output a random action according to a probability distribution over actions.

Two key positive results emerge. 
First, for deterministic aggregators, significant improvements occur in the general conditionally independent setting. 
Second, for random aggregators, we demonstrate that second-order information enables lower regret guarantees under the assumptions of homogeneous experts with non-degenerate signals. 
The remainder of the results are negative, i.e., second-order information cannot enhance the robustness of the aggregator. 
All results are presented in \Cref{table:result}. 
We now discuss these results in more detail. 

\begin{table*}[!t]
    \centering
    \caption{An overview of our main results. }
    \label{table:result}
    \begin{threeparttable}

\begin{tabular}{cc p{7em}<{\centering} c p{7em}<{\centering} c}
\toprule
\multicolumn{2}{c}{Regret lower/upper bound\tnote{*}} & Deterministic & Helps? & Random & Helps? \\ 
\midrule
\multicolumn{1}{c}{\multirow{2}*{Heterogeneous}} & 1st\tnote{\textdagger} & \multicolumn{1}{c}{$0.5$} & \multirow{2}{*}{Yes} & \multicolumn{1}{c}{$0.25$} & \multirow{2}{*}{No}  \\ \cline{2-2} 
\multicolumn{1}{c}{} & 2nd\tnote{\textdagger} & \multicolumn{1}{c}{$0.3333$\tnote{\textdaggerdbl}} & & \multicolumn{1}{c}{$0.25$} &  \\  
\hline
\multicolumn{1}{c}{\multirow{2}*{Homogeneous}} & 1st & \multicolumn{1}{c}{$0.1716$\tnote{\textdaggerdbl}} & \multirow{2}{*}{No} & \multicolumn{1}{c}{$0.1716$} & \multirow{2}{*}{No} \\ \cline{2-2} 
\multicolumn{1}{c}{} & 2nd & \multicolumn{1}{c}{$0.1716$} & & \multicolumn{1}{c}{$0.1716$} & \\ 
\hline
\multicolumn{1}{c}{Homogeneous \&} & 1st & \multicolumn{1}{c}{$0.1716$} & \multirow{2}{*}{No} & \multicolumn{1}{c}{$0.1716$} & \multirow{2}{*}{Yes} \\ \cline{2-2} 
\multicolumn{1}{c}{Non-degenerate} & 2nd & \multicolumn{1}{c}{$0.1716$} & & \multicolumn{1}{c}{$[0.1667,0.1673]$\tnote{\textdaggerdbl}} & \\ 
\bottomrule
\end{tabular}

\medskip

\begin{tablenotes}[flushleft]
    \footnotesize
    \item {$^\text{*}$} When the entry is a single value, it means the lower/upper bound coincides at the value; when the entry is an interval, the endpoints of the interval respectively represent the lower/upper bound. 
    \item {$^\text{\textdagger}$} 1st: only using first-order recommendations, 2nd: also using second-order predictions. 
    \item {$^\text{\textdaggerdbl}$} $0.3333 = 1/3$, $0.1716 = 3 - 2\sqrt{2}$, $0.1667 = 1/6$, $0.1673$ is a numerically rounded estimate. 
\end{tablenotes}
\end{threeparttable}
\end{table*}

\qquad\emph{1. Heterogeneous experts.}
We first consider the general scenario where two experts can be heterogeneous. 
In this context, with respect to deterministic aggregators, we present a positive result. 
We construct an aggregator that adheres to the more ``informative'' expert when two experts' recommendations split. 
For a better understanding, the more ``informative'' expert has better accuracy in predicting the other's recommendation\footnote{Here accuracy is measured by the distance between the prediction and recommendation. For instance, predicting action 1 with a probability of $0.7$ is more accurate than predicting with $0.6$ when the actual recommendation is action $1$.}.
We show that such an aggregator guarantees a regret of $1/3 \approx 0.3333$, which is the most robust deterministic aggregator with second-order information (\Cref{thm:lowerboundpureheteyes,prop:threshold}). 
Furthermore, it is essential to note that no aggregator can guarantee a regret lower than $0.5$ without second-order information. 
This highlights the substantial assistance offered by second-order information (\Cref{thm:lowerboundpureheteno}).

However, considering random strategies, second-order information proves to be redundant in terms of robustness. 
In particular, no aggregator equipped with second-order information can ensure a regret of less than $0.25$ (\Cref{thm:lowerboundmixhete}); while such a regret can also be achieved by a simple aggregator that uniformly chooses an action in cases of different recommendations (\Cref{thm:regretofmixnaive}).

To summarize, when facing heterogeneous experts, under the robust aggregation paradigm, (1) when the aggregator is deterministic, second-order information can significantly enhance its decision-making capabilities; (2) when the aggregator can be random, second-order information is not beneficial to the regret. 

\qquad\emph{2. Homogeneous experts.}
We also investigate the scenarios where two experts are homogeneous. 
We show that without further assumption, neither deterministic aggregators (\Cref{thm:lowerboundpurehomo,prop:upperboundhomo}) nor random aggregators (\Cref{thm:lowerboundhomo,prop:upperboundhomomixed}) can derive benefits from second-order information. 
This is due to a special case where both experts always recommend the same action, rendering the second-order information useless. 
We show that such a case is the worst one, and could lead to a tight regret lower bound of $3-2\sqrt{2} \approx 0.1716$. 

\qquad\emph{3. Homogeneous experts with non-degenerate signals.}
To avoid the above special case, we focus on the information structures where experts will recommend different actions when observing different signals.
In this setting, we encounter a negative outcome considering deterministic strategies: no aggregator with second-order information can surpass the performance of the follow-the-first-expert aggregator, which guarantees a regret of $3-2\sqrt{2}$ (\Cref{prop:upperboundhomo} and \Cref{coro:daci}). 
Notably, when facing homogeneous experts, this aggregator is equivalent to the uniform aggregator, which uniformly chooses an action when two recommendations conflict.

However, when considering random aggregators, we provide two aggregators leveraging second-order information that give better performance. 
The first aggregator follows a similar principle to the robust deterministic aggregator when dealing with heterogeneous experts, i.e., granting more weight to the expert with more accurate predictions. 
This aggregator guarantees a regret of $0.1682$ (\Cref{prop: upperbounddaci}).
The second aggregator, derived from the online learning algorithm proposed by \citet{unpub2}, guarantees an even lower regret of $0.1673$ (\Cref{prop:upperbounddaci2}).
Both of these aggregators outperform the uniform aggregator with a regret of $3-2\sqrt{2}\approx 0.1716$, which is already optimal without second-order information (\Cref{thm:lowerboundhomouse}). 
This positive result aligns with the findings in \citet{RePEc:nat:nature:v:541:y:2017:i:7638:d:10.1038_nature21054}, which highlights that second-order information can enable the aggregator to achieve no regret when confronted with infinite experts. 
In our investigation, we extend this positive outcome to the case of two experts. 
We also establish a lower bound of $1/6\approx 0.1667$ for aggregators with second-order information (\Cref{thm:lowerboundhomose2}).

\emph{Extension: general utility functions with homogeneous experts.} 
In \Cref{sec:extension}, we also extend the above findings for homogeneous experts to encompass general utility functions, where the agent's objective goes beyond aligning actions with states. 
We first perform a reduction to allow us to focus solely on the ratio of the utility gap between adopting two actions when the state is 0 and when the state is 1. 
We observe that the results for random aggregators mirror those in the previous setting. 
Without further assumptions, the previous negative outcome still holds when both experts always advocate an identical action, resulting in high regret for all aggregators. 
However, when we introduce the non-degenerate signal assumption, we demonstrate that second-order information enhances aggregators' robustness with different ratios. 
These findings underscore the significance of second-order information for a wide range of utility functions.
\subsection{Related Work}
Our work focuses on decision aggregation, a subset of the broader field of information aggregation. A significant portion of information aggregation literature focuses on forecast aggregation. This body of work explores various methodologies, including simple techniques like averaging \cite{clemen1986combining}, median averaging \cite{jose2008simple}, and their respective modifications \cite{DBLP:journals/da/BaronMTSU14, Larrick2012TheSP, a95fa358-23d9-32a9-a202-59d0efe317ee}. These studies showcase the efficacy of straightforward aggregation rules, such as averaging or random dictating \cite{DBLP:conf/sigecom/ArieliBTZ23, clemen1986combining, degroot1974reaching,DBLP:conf/sigecom/NeymanR22,RePEc:jof:jforec:v:23:y:2004:i:6:p:405-430}, mirroring some of our findings. 
Furthermore, there exists a body of literature on decision aggregation, such as \citet{DBLP:conf/sigecom/OliveiraIL21}, \citet{DBLP:conf/sigecom/ArieliBTZ23}, and \citet{RePEc:nat:nature:v:541:y:2017:i:7638:d:10.1038_nature21054}, which closely align with our work. 

The most closely related paper is \citet{RePEc:nat:nature:v:541:y:2017:i:7638:d:10.1038_nature21054}, as it also examines the role of second-order information in decision aggregation. The key distinction from our setting is their focus on infinite, homogeneous experts. Leveraging second-order information, they develop an aggregator that identifies the true world state, i.e., has no regret compared to the omniscient agent. In contrast, we focus on two heterogeneous experts, representing a small expert group. We demonstrate that for such settings, regret is unavoidable even with second-order information. Further, we characterize cases where second-order information does not help reduce regret. Our regret analysis and findings on the limitations of small heterogeneous groups add new insights into decision aggregation with second-order information.

We adopt a regret-based minimax paradigm in our analysis. While \citet{DBLP:conf/sigecom/OliveiraIL21} also employ a minimax approach, they use a loss-based framework and show the optimal aggregator simply follows the best expert. In contrast, we evaluate aggregator performance in comparison to an omniscient benchmark using a regret formulation. This regret-based robust paradigm follows the approach of \citet{areili2018robust}, which studied forecast aggregation under Blackwell-ordered and conditionally independent structures. Additional works employing the regret-based approach include \citet{babichenko2018learning}, which explores partial-evidence information structures within a repeated game context, and \citet{unpub2}, which proposes an algorithmic framework for robust forecast aggregation. 


Our model focuses on the setting where the aggregator does not have access to the exact information structure. Many other works also consider settings with ambiguity but differ in the knowledge they assume the aggregator possesses. For instance, both \citet{DBLP:conf/sigecom/OliveiraIL21} and \citet{DBLP:conf/sigecom/ArieliBTZ23} assume that the aggregator possesses knowledge of the marginal distribution of each expert's signal and aim to enhance the robustness of correlations based on this information. This consideration is also prevalent in \cite{carroll2017robustness, he2022correlation,levy2022combining}. In contrast, our work assumes that the aggregator is ignorant about the full information structure and solely has access to the experts' outputs and the set to which the information structure belongs. This approach aligns with \citet{areili2018robust}, \citet{unpub1}, and \citet{RePEc:nat:nature:v:541:y:2017:i:7638:d:10.1038_nature21054}. Moreover, \citet{RePEc:eee:gamebe:v:120:y:2020:i:c:p:16-27} characterize the set of identifiable information structures and introduce a scheme that uniquely identifies the state of nature in finite cases.




We study whether second-order information improves the performance of the decision aggregator. A substantial body of literature also explores the benefit of second-order information. \citet{doi:10.1126/science.1102081} started the exploration and introduced a framework wherein agents provide both their answers and predictions for a single multi-choice question. Building upon this foundation, subsequent works have delved into the design of aggregators utilizing second-order information. Among these, the ``surprisingly popular'' approach, initially introduced by \citet{RePEc:nat:nature:v:541:y:2017:i:7638:d:10.1038_nature21054} and subsequently developed by researchers such as \citet{Palley2015ExtractingTW}, \citet{Palley2020BoostingTW}, and \citet{chen2021wisdom}, have garnered significant attention.

Furthermore, \citet{unpub1} employs second-order and even higher-order information in forecast aggregation, particularly in scenarios featuring two experts who are either Blackwell-ordered or receive signals that are conditionally independent and identically distributed. \citet{wang2022forecast}, \citet{10.1371/journal.pone.0232058}, and \citet{10.1287/mnsc.2020.3919} have designed prediction-aided forecast aggregators and conducted experimental analyses to showcase their effectiveness and potential in real-world applications. 

In addition to decision aggregation and forecast aggregation, the utilization of second-order information finds application in many other domains. For instance, \citet{DBLP:conf/nips/KongLZHW22} focus on open-response questions and ask agents what they think other people will answer. They use the information to rank the answers without any prior knowledge. 
Also, \citet{hosseini2021surprisingly} and \citet{schoenebeck2021wisdom} employ a similar framework to rank a predefined set of candidates. In the context of election forecasting, \citet{wolfers2011forecasting} utilize voters' expectations regarding other people's votes to provide more accurate predictions of election outcomes. 


\section{Problem Statement}\label{sec:setting} 

A company is deciding whether to launch a new product now (action 1) or delay it until next year (action 0). There are two marketing experts, expert 1 and expert 2, providing recommendations to the company's CEO (the agent). There are two possible world states about the current demand, high (state 1) and low (state 0), and we use $\omega\in \Omega=\{0, 1\}$ to denote the unknown true state. Let $a\in A=\{0, 1\}$ denote the action adopted by the CEO. If the action matches the true demand (launch now with high demand or delay with low demand), the CEO's utility is $+1$. Otherwise, the CEO's utility is $-1$.

Each expert $i \in \{1, 2\}$ receives a private signal $S_i\in S=\{L, H\}$ about the demand, where an $L$ signal implies a lower likelihood of the high demand state than an $H$ signal. The realization of $S_i$ is $s_i$. An information structure $\pi\in \Delta(\Omega\times S^2)$ is a joint distribution of the state and two private signals $S_1, S_2$, which encodes the correlation between the true state and private signals. Here, $\Delta(\cdot)$ stands for the set of all distributions on the support. The information structure is shared by both experts. Thus, the prior of the world state $\mu = \pi(\omega = 1)\in [0, 1]$ is also known to both experts. 
For simplicity, we rewrite the key parameters of any specific information structure in the rest of this paper.
Let 
\begin{gather*}
    k_1 \coloneqq \pi(S_1 = L \mid \omega = 1), \quad l_1 \coloneqq \pi(S_1 = L \mid \omega = 0); \\
    k_2 \coloneqq \pi(S_2 = L \mid \omega = 1), \quad l_2 \coloneqq \pi(S_2 = L \mid \omega = 0). 
\end{gather*} 
denote the signal probabilities and the posteriors are written as
\begin{gather*}
    b_{1L}\coloneqq\pi(\omega=1 \mid S_1=L)\leq b_{1H}\coloneqq\pi(\omega=1 \mid S_1=H);\\
    b_{2L}\coloneqq\pi(\omega=1 \mid S_2=L)\leq b_{2H}\coloneqq\pi(\omega=1 \mid S_2=H).
\end{gather*}
Here, $\pi(S_1 = L \mid \omega = 1)$ stands for the probability that $S_1 = L$ conditioning on $\omega = 1$. Similar meanings hold for similar notations. Note that $b_{1L} \leq b_{1H}$ and $b_{2L} \leq b_{2H}$ hold since an $L$ signal indicates a lower likelihood of the high demand state 1. 




The experts observe their private signals and compute posteriors on the demand state. They recommend ``launch now'' (action 1) if the posterior on the high demand $\geq 0.5$ and ``delay launch'' (action 0) otherwise. We assume that experts report truthfully. Such incentive compatibility can be guaranteed by rewarding the experts later with the revelation of the true state.

The CEO aims to aggregate the expert recommendations into an optimal product launch decision but lacks knowledge of the underlying information structure. Formally, the CEO observes the recommendations $a_1, a_2 \in A=\{0,1\}$ from the two marketing experts. The CEO's aggregator outputs a final decision, which may be deterministic (denoted by $\aggpure: A^2\rightarrow A$) or randomized (denoted by $\aggmix: A^2\rightarrow \Delta(A)$). Here $\Delta$ means its output is a random action following a probability distribution. 

\paragraph{Benchmark and regret.}
We compare the aggregator to an omniscient agent who knows the true information structure $\pi$ and observes the realized signals $s_1,s_2$. The benchmark's output is
\[
a^*(s_1,s_2,\pi)\coloneqq\phi(\pi(\omega=1 \mid S_1=s_1,S_2=s_2)),
\]
where $\phi(b)\coloneqq\mathbbm{1}\{b\geq 0.5\}$. $\mathbbm{1}\{\cdot\}$ is the indicator which is valued $1$ when the inner condition is true and $0$ otherwise. 



The loss of any aggregator regarding information structure $\pi$ is defined by the benchmark's expected utility subtracted by the aggregator's expected utility: 
\[
L(f,\pi)\coloneqq \E \left[\mathbbm{1}\{a^*(s_1,s_2,\pi)=\omega\}-\mathbbm{1}\{f(a_1(s_1),a_2(s_2))=\omega\}\right].
\]

Here, the expectation is taken on the world state $\omega$, experts' private signal realizations $s_1, s_2$, and the randomness of the aggregator's output, if it is random. 




\paragraph{Second-order information.}
Each expert, denoted as $i$, is also asked to provide a prediction, $p_i \in [0,1]$, for the probability that the other expert recommends action $1$. This is represented as follows:
\[
    p_1 \coloneqq \E_{s_1}[\phi(b_{2})=1 \mid S_1=s_1] 
    = \sum_{s_2} \pi(S_2=s_2 \mid S_1=s_1) \phi(\pi(\omega=1 \mid S_2 = s_2)),
\]
and $p_2$ is defined analogously. The CEO aims to find an aggregator that can effectively incorporate this additional second-order information. Such an aggregator is denoted as $f: A^2\times [0,1]^2 \rightarrow \Delta(A)$ or $A$. We distinguish between deterministic aggregators $\aggpure$ with deterministic outputs, and randomized aggregators $\aggmix$ that output probability distributions over actions.



Given the information structure $\pi$, the loss of any aggregator equipped with second-order information is defined as follows:
\[
L(f,\pi)\coloneqq \E \left[\mathbbm{1}\{a^*(s_1,s_2,\pi)=\omega\}-\mathbbm{1}\{f(a_1,a_2,p_1,p_2)=\omega\}\right].
\]
Here, $a_1, a_2, p_1, p_2$ are abbreviations for $a_1(s_1),a_2(s_2),p_1(s_1),p_2(s_1)$, respectively. Again, the expectation is taken on the state $\omega$, experts' private signal realizations $s_1, s_2$, and the randomness of the aggregator's output, if it is random. 

\paragraph{Robust aggregation.}
As the CEO lacks knowledge of the exact information structure, her goal is to design an aggregator that performs well across all possible information structures within the family $P$. To model such robustness, we define the \emph{regret} of a deterministic aggregator $f$ as the worst-case loss across all information structures in $P$:
\[L_P(f)\coloneqq\max_{\pi\in P}L(f,\pi).\]


This work considers different sets of information structures to capture various real-world scenarios. 
For each of these information structure family $P$, our objective is to identify the best aggregators $f$, both with and without second-order information, that minimize the regret $L_P(f)$. 
We denote the set of all deterministic aggregators without second-order information as $\aggnopred$, and the set of deterministic aggregators with second-order information as $\aggwithpred$. Formally, we address the following optimization problems:
\begin{gather*}
    \min_{\aggpure\in\aggnopred} L_P(\aggpure)=\min_{\aggpure\in \aggnopred}\max_{\pi\in P}L(\aggpure,\pi), \\
    \min_{\aggpure\in\aggwithpred} L_P(\aggpure)=\min_{\aggpure\in \aggwithpred}\max_{\pi\in P}L(\aggpure,\pi), \\
    \min_{\aggmix\in\Delta(\aggnopred)} L_P(\aggmix)=\min_{\aggmix\in \Delta(\aggnopred)}\max_{\pi\in P} L(\aggmix,\pi), \\
    \min_{\aggmix\in\Delta(\aggwithpred)} L_P(\aggmix)=\min_{\aggmix\in \Delta(\aggwithpred)}\max_{\pi\in P} L(\aggmix,\pi). \\
\end{gather*} 

We further study whether the inclusion of second-order information leads to a strict decrease in regret for the agent. In other words, we examine for different family $P$ whether the following two values $< 0$: 
\begin{gather*}
    \min_{\aggpure\in\aggwithpred} L_P(\aggpure) - \min_{\aggpure\in\aggnopred} L_P(\aggpure), \\
    \min_{\aggmix\in\Delta(\aggwithpred)} L_P(\aggmix) - \min_{\aggmix\in\Delta(\aggnopred)} L_P(\aggmix). 
\end{gather*}

\section{Warm-up: General Information Structures}\label{sec:warmup}

As a warm-up, this section examines the information structure family $\infoall$, which contains all information structures that encode the correlation between the state and two signals. 
Missing proofs of this section can be found in \Cref{proof:sec3}.
We start by presenting a universal lower bound. 
\begin{theorem}\label{thm:lowerboundall}
    For every random aggregator $\aggmix(a_1,a_2,p_1,p_2)\in\Delta(\aggwithpred)$, $L_{\infoall}(\aggmix)\geq 0.5$.
\end{theorem}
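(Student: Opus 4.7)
My plan is to exhibit a single information structure $\pi^{\star} \in \infoall$ on which every random aggregator with second-order information suffers loss exactly $1/2$. Since $L_{\infoall}(\aggmix) \geq L(\aggmix, \pi^{\star})$, the theorem follows immediately. The guiding idea is to design $\pi^{\star}$ so that the aggregator's entire observation $(a_1, a_2, p_1, p_2)$ is almost surely constant (hence uninformative about $\omega$) while the joint signal $(s_1, s_2)$ still determines $\omega$ for the omniscient benchmark.

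The structure I propose is the uniform distribution over the four triples
\[
(\omega, s_1, s_2) \in \{(1,H,H),(1,L,L),(0,H,L),(0,L,H)\},
\]
i.e., $\omega = \mathbbm{1}\{s_1 = s_2\}$ with $(S_1, S_2)$ uniform on $\{H,L\}^2$. A direct computation gives $b_{1L} = b_{1H} = b_{2L} = b_{2H} = 1/2$, so the labeling convention $b_{iL} \leq b_{iH}$ holds (with equality). By the tie-breaking rule ``posterior $\geq 1/2$ implies action $1$,'' each expert always recommends action $1$, giving $a_1 = a_2 = 1$ almost surely. Because each expert anticipates that the other will recommend $1$ regardless of signal, the second-order predictions collapse to $p_1 = p_2 = 1$. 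Hence the aggregator always observes $(1,1,1,1)$.

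With the observation reduced to a single point, every $\aggmix \in \Delta(\aggwithpred)$ samples its action from one fixed distribution on $\{0,1\}$; call the probability of outputting $1$ by $q$. The matching probability is then $q \cdot \Pr[\omega=1] + (1-q) \cdot \Pr[\omega=0] = 1/2$, independent of $q$. Meanwhile the benchmark observes $(s_1,s_2)$, under which $\omega$ is a deterministic function of the pair, so $\Pr[a^{*}(s_1,s_2,\pi^{\star}) = \omega] = 1$. Subtracting gives $L(\aggmix, \pi^{\star}) = 1 - 1/2 = 1/2$, finishing the proof.

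The only subtle step is the construction itself: one must arrange the marginal posteriors to sit exactly at the threshold $1/2$ so that both recommendations and both peer predictions are forced to be constants, while retaining enough joint informativeness for the benchmark to recover $\omega$. The symmetric XOR-like distribution above accomplishes both simultaneously, and beyond this the computation of posteriors, the aggregator's matching probability, and the benchmark's utility is entirely routine. I do not expect any further obstacle; in particular, no case analysis over the aggregator's strategy is needed because the collapse of the observation makes every aggregator behave identically on $\pi^{\star}$.
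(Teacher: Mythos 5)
Your proof is correct, and it takes a genuinely different route from the paper. The paper proves this bound via Yao's principle (its Lemma 3.2): it places a uniform prior over two $\epsilon$-perturbed information structures in which the experts' recommendations do vary with their signals, shows that the four possible observations carry no information about which structure is realized, and lets $\epsilon\to 0$. You instead exhibit a single XOR-type structure on which the observation $(a_1,a_2,p_1,p_2)$ collapses to the constant $(1,1,1,1)$ while $(s_1,s_2)$ determines $\omega$ exactly, so every aggregator in $\Delta(\aggwithpred)$ is forced to a fixed output distribution and suffers loss exactly $1/2$ with no limiting argument. Your computations check out: all four marginal posteriors equal $1/2$, the convention $\phi(b)=\mathbbm{1}\{b\geq 0.5\}$ forces $a_1=a_2=1$ and hence $p_1=p_2=1$, the benchmark is perfect, and the aggregator matches $\omega$ with probability $1/2$ regardless of $q$; moreover $\infoall$ imposes no conditional-independence restriction, so the structure is admissible. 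What each approach buys: yours is shorter, attains the bound exactly rather than in the limit, and needs no minimax machinery; the paper's version introduces the Yao's-principle lemma that it reuses in every subsequent lower bound, and its construction keeps all posteriors strictly away from the threshold $1/2$, so it does not lean on the knife-edge tie-breaking rule the way your construction does (though your argument would survive any deterministic tie-breaking convention, since the observation would still be constant).
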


To prove the above lower bound, we adapt Yao's principle \cite{yao1977probabilistic} to our setting, establishing a connection between the expected regret of any random aggregator and the best aggregator for any distribution over information structures. The lemma below will be invoked repeatedly in the subsequent sections.
\begin{lemma}[Yao's principle \cite{yao1977probabilistic}]\label{thm:minimax}
    In any aggregator family $F$ and information structure family $P$, for any random aggregator $\aggmix\in \Delta(F)$ and any distribution $D\in \Delta(P)$,
    \[
    \min_{\aggpure\in F}\E_{\pi\sim D}[L(\aggpure,\pi)]\leq\max_{\pi\in P}\E_{\aggpure\sim \aggmix}[L(\aggpure,\pi)].
    \]
\end{lemma}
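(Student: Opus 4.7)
The statement is a direct adaptation of the classical Yao minimax principle, and my plan is to establish it by a short two-step inequality sandwiching the expected loss $\E_{\aggpure \sim \aggmix} \E_{\pi \sim D}[L(\aggpure, \pi)]$ between the two sides.

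First, I would observe that for any distribution over objects, the minimum value attained by any object in the family is at most the expectation. Applied to the random aggregator $\aggmix \in \Delta(F)$ and the real-valued function $\aggpure \mapsto \E_{\pi \sim D}[L(\aggpure, \pi)]$ on $F$, this immediately yields
\[
\min_{\aggpure \in F} \E_{\pi \sim D}[L(\aggpure, \pi)] \;\leq\; \E_{\aggpure \sim \aggmix} \E_{\pi \sim D}[L(\aggpure, \pi)].
\]
Since the loss $L(\aggpure, \pi) \in [-1, 1]$ is uniformly bounded (it is the difference of two $\{0, 1\}$-valued indicator expectations, so it lies in $[-1, 1]$), Fubini's theorem applies and I can swap the order of the two expectations on the right-hand side.

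Second, the expectation over $\pi \sim D$ of the quantity $\E_{\aggpure \sim \aggmix}[L(\aggpure, \pi)]$ is trivially at most its supremum over $\pi \in P$ (every $\pi$ in the support of $D$ lies in $P$). Chaining these two steps produces
\[
\min_{\aggpure \in F} \E_{\pi \sim D}[L(\aggpure, \pi)] \;\leq\; \E_{\pi \sim D} \E_{\aggpure \sim \aggmix}[L(\aggpure, \pi)] \;\leq\; \max_{\pi \in P} \E_{\aggpure \sim \aggmix}[L(\aggpure, \pi)],
\]
which is exactly the claim. There is no real obstacle here: the argument is purely measure-theoretic bookkeeping, and the only substantive remarks are (i) that the loss is integrable so Fubini applies, and (ii) that we should interpret $\max_{\pi \in P}$ in the weak (supremum) sense if $P$ is not compact, although the statement is phrased with $\max$ so I would either assume compactness implicitly or simply note that replacing $\max$ by $\sup$ is harmless for the inequality.
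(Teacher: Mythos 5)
Your proof is correct: the two-step chain (minimum over $F$ is at most the average under $\aggmix$, swap the order of integration by boundedness of the loss, then bound the average over $D$ by the supremum over $P$) is the standard derivation of Yao's principle, and your caveats about $\min$/$\max$ versus $\inf$/$\sup$ are the right ones to flag. The paper does not supply its own proof of this lemma --- it imports the statement directly from the cited reference --- so there is no in-paper argument to compare against; your writeup fills that gap with exactly the canonical averaging argument.
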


We now present a deterministic aggregator in $\aggnopred$ that achieves for $\infoall$ the lowest regret among all aggregators in $\Delta(\aggwithpred)$, which we refer to as the follow-the-first-expert aggregator.

\paragraph{The follow-the-first-expert aggregator.} 
The aggregator is characterized by unconditionally adhering to the recommended action of expert 1, irrespective of expert 2's advice. This aggregator can be mathematically expressed as $\purenaive(a_1,a_2)=a_1$. We have the following regret guarantee for $\purenaive$. 

\begin{theorem}\label{thm:regretofpurenaive}
    $L_{\infoall}(\purenaive)=0.5$.
\end{theorem}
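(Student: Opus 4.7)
The plan is to establish the equality $L_{\infoall}(\purenaive) = 0.5$ by matching an upper bound and a lower bound, and both pieces are essentially immediate from earlier results and the definition of $\purenaive$.

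For the lower bound, I would simply invoke \Cref{thm:lowerboundall}. The aggregator $\purenaive$ is deterministic and ignores both $a_2$ and the two predictions, so it can be viewed as a degenerate element of $\aggwithpred$ (and hence of $\Delta(\aggwithpred)$) by defining $\tilde{\purenaive}(a_1,a_2,p_1,p_2) \coloneqq a_1$; the loss $L(\tilde{\purenaive},\pi)$ equals $L(\purenaive,\pi)$ pointwise in $\pi$. \Cref{thm:lowerboundall} then immediately yields $L_{\infoall}(\purenaive) \geq 0.5$.

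For the upper bound $L_{\infoall}(\purenaive) \leq 0.5$, I would fix an arbitrary $\pi \in \infoall$ and bound the two terms in the loss separately. The benchmark's accuracy $\E[\mathbbm{1}\{a^*(s_1,s_2,\pi) = \omega\}]$ is trivially at most $1$. For expert 1, since $a_1(s_1) = \phi(\pi(\omega=1\mid S_1=s_1))$ picks the more likely state under the posterior,
\[
\pi(\omega = a_1(s_1) \mid S_1 = s_1) \;=\; \max_{a\in\{0,1\}} \pi(\omega = a \mid S_1 = s_1) \;\geq\; \tfrac{1}{2}
\]
for every realization $s_1$, simply because the two posteriors sum to $1$. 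Taking expectation over $S_1$ gives $\E[\mathbbm{1}\{a_1(s_1)=\omega\}] \geq \tfrac12$, and subtracting from the benchmark's accuracy yields $L(\purenaive,\pi) \leq 1 - \tfrac12 = \tfrac12$. Maximizing over $\pi \in \infoall$ gives the upper bound.

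I do not expect any real obstacle here. The only substantive content is the binary-state observation that the posterior-maximizing action always achieves conditional accuracy at least $1/2$; the rest is bookkeeping, including the embedding of $\aggnopred$ into $\Delta(\aggwithpred)$ that lets us reuse \Cref{thm:lowerboundall} for the lower bound.
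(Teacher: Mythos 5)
Your proposal is correct and follows essentially the same route as the paper: the lower bound is inherited from \Cref{thm:lowerboundall} by viewing $\purenaive$ as an aggregator that ignores the predictions, and the upper bound comes from the fact that expert 1's recommendation is correct with probability at least $1/2$ (the paper phrases this as the error probability being at most $\min\{\mu,1-\mu\}\leq 1/2$, which is the same observation). No gaps.
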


Since $\purenaive$ is optimal among all aggregators in $\Delta(\aggwithpred)$ and is itself in $\aggnopred$, we conclude that the agent cannot reach a lower regret when two experts have conditional correlations even by using a random strategy or incorporating second-order information. We therefore turn our focus to conditionally independent information structures in the following sections.
\section{Heterogeneous Experts}\label{sec:hete}
We now come to consider conditionally independent information structures and make no additional assumptions on experts, allowing them to be heterogeneous. Specifically, a conditionally independent information structure ensures that two experts' signals are independent given the state. The set of all conditionally independent information structures is referred to as $\infoci$.
Missing proofs of this section can be found in \Cref{proof:sec4}.




\subsection{Deterministic Aggregators}

We first establish the following lower bound for deterministic aggregators. 
\begin{theorem}\label{thm:lowerboundpureheteno}
    For every deterministic aggregator $\aggpure(a_1,a_2)\in\aggnopred$, $L_{\infoci}(\aggpure)\geq 0.5$.
\end{theorem}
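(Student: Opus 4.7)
I would prove the bound by a case analysis on the deterministic aggregator $\aggpure \in \aggnopred$, which is determined by the four values $g_{ij} \coloneqq \aggpure(i,j)$ for $i,j \in \{0,1\}$. For each of the $16$ possible $\aggpure$'s, I exhibit a conditionally independent information structure (or a limiting sequence of them) whose loss reaches $1/2$.

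The first step is a reduction to ``consensus-respecting'' aggregators. If $g_{00}=1$, I take the CI structure in which both experts are perfectly informed (each signal fully reveals $\omega$, so $a_1=a_2=\omega$) with prior $\mu \in (0, 1/2)$. The benchmark outputs $\omega$, but $\aggpure(0,0)=1$ errs on $\omega=0$, which occurs with probability $1-\mu$; sending $\mu \uparrow 1/2$ pushes the loss to $1/2$. Symmetrically, $g_{11}=0$ is handled via $\mu \downarrow 1/2$. So I may assume $g_{00}=0$ and $g_{11}=1$, leaving four sub-cases indexed by $(g_{01}, g_{10})$.

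For three of the four sub-cases, a single CI structure at $\mu = 1/2$ exhibits loss $1/2$. When $\aggpure=a_1$, let expert $1$ be uninformative (so $b_{1L}=b_{1H}=1/2$, and by the tie-breaking convention $a_1=1$) and expert $2$ perfectly informed; then $\aggpure = 1$ is wrong exactly when $\omega=0$, producing loss $1/2$. The case $\aggpure=a_2$ is symmetric. When $\aggpure = a_1 \lor a_2$, take expert $2$ uninformative (so $a_2=1$) and expert $1$ perfectly informed; then $\aggpure = 1$ regardless of $a_1$, again producing loss $1/2$.

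The remaining sub-case $\aggpure = a_1 \land a_2$ is the main obstacle, because the tie-breaking convention prevents a ``dummy'' expert from deterministically reporting $0$ when $\mu = 1/2$. I would instead use a sequence of CI structures parametrized by $\delta \downarrow 0$: set $\mu = 1/2$, expert $1$ perfectly informed, and expert $2$ with $k_2 = 1-\delta$, $l_2 = 1$. Then $b_{2L} = (1-\delta)/(2-\delta) < 1/2$ and $b_{2H} = 1$, so $a_2 = \mathbbm{1}\{S_2 = H\}$; when $\omega = 1$, expert $2$ still recommends $0$ with probability $1-\delta$. The benchmark reads $\omega$ off of expert $1$, while $\aggpure = a_1 \land a_2 = 0$ whenever $a_2 = 0$, so the loss is $(1/2)(1-\delta) \to 1/2$. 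Combined with the other sub-cases, this establishes $L_{\infoci}(\aggpure) \geq 1/2$ for every $\aggpure \in \aggnopred$; the delicate part of the argument is precisely this boundary behavior in the $a_1 \land a_2$ sub-case, which forces a limiting rather than a single-structure construction.
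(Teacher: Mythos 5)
Your proof is correct, but it takes a genuinely different route from the paper's. You enumerate the $16$ deterministic maps $\{0,1\}^2\to\{0,1\}$, dispose of those violating consensus via perfectly informed experts, and then build a tailored (in one case, limiting) conditionally independent witness for each of the four remaining Boolean functions of $(a_1,a_2)$. The paper instead gives a single uniform argument with no case analysis: it pairs two structures, one with prior $1/2+\epsilon$ where expert 1 is a dummy always recommending $1$ and expert 2 is omniscient, and a mirrored one with prior $1/2-\epsilon$ where expert 1 is omniscient and expert 2 is a dummy always recommending $0$. The input $(1,0)$ then occurs with probability $1/2-\epsilon$ under both structures but with opposite underlying states, so whatever a deterministic aggregator outputs on that one input, it loses $1/2-\epsilon$ against one of the two structures; all other inputs are irrelevant. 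Note that the paper's $\pm\epsilon$ prior perturbation serves exactly the purpose of letting a dummy expert deterministically report $0$ despite the tie-breaking rule $\phi(b)=\mathbbm{1}\{b\geq 0.5\}$ --- the same obstruction you hit in your $a_1\land a_2$ sub-case and resolve with the $\delta$-perturbed signal distribution. Your enumeration buys an explicit identification of which aggregators are ``hard'' (only the AND case needs a limit; the others admit exact single-structure witnesses at $\mu=1/2$), at the cost of a 16-way case split; the paper's two-structure adversarial pairing is shorter and, with minor modification, is the template reused for the randomized and second-order lower bounds elsewhere in the paper.
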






We now revisit the follow-the-first-expert aggregator introduced in \Cref{sec:warmup}. Since $\infoci \subset \infoall$, as a corollary of \Cref{thm:regretofpurenaive,thm:lowerboundpureheteno}, within $\infoci$, this aggregator is still the optimal among all deterministic aggregators.

\begin{corollary}\label{prop:upperboundpurenaive}
    $L_{\infoci}(\purenaive)=0.5$.
\end{corollary}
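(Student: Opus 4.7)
The plan is to derive the equality as the combination of a matching upper and lower bound, both of which come essentially for free from the two results cited in the statement.

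First, for the upper bound, I would observe that the set inclusion $\infoci \subset \infoall$ immediately implies
\[
L_{\infoci}(\purenaive) \;=\; \max_{\pi\in\infoci} L(\purenaive,\pi) \;\leq\; \max_{\pi\in\infoall} L(\purenaive,\pi) \;=\; L_{\infoall}(\purenaive) \;=\; 0.5,
\]
where the last equality is exactly the content of \Cref{thm:regretofpurenaive}. No additional work on information structures is needed here; the monotonicity of the worst-case operator in the family $P$ does the job.

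Next, for the lower bound, I would note that $\purenaive$ is a deterministic aggregator that does not consult second-order information, so $\purenaive \in \aggnopred$. \Cref{thm:lowerboundpureheteno} applies to every aggregator in $\aggnopred$ and gives $L_{\infoci}(\purenaive) \geq 0.5$ directly.

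Combining the two bounds yields $L_{\infoci}(\purenaive) = 0.5$. There is no real obstacle in this argument; the work has already been done in \Cref{thm:regretofpurenaive} (which supplies a witnessing structure in $\infoall$ — if one wanted to strengthen the presentation one could check whether that witness already lies in $\infoci$, avoiding any reliance on \Cref{thm:lowerboundpureheteno}, but invoking the two cited results as the corollary statement suggests is the cleanest route).
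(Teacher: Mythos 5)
Your proof is correct and follows exactly the paper's own route: the upper bound comes from $\infoci \subset \infoall$ together with \Cref{thm:regretofpurenaive}, and the lower bound from applying \Cref{thm:lowerboundpureheteno} to $\purenaive \in \aggnopred$. This is precisely how the paper justifies the corollary, so nothing further is needed.
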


We proceed to establish a lower bound for deterministic aggregators equipped with second-order information, which notably falls far below the lower bound for deterministic aggregators lacking second-order information.

\begin{theorem}\label{thm:lowerboundpureheteyes}
    For every deterministic aggregator $\aggpure(a_1,a_2,p_1,p_2)\in \aggwithpred$, $L_{\infoci}(\aggpure)\geq 1/3\approx 0.3333$.
\end{theorem}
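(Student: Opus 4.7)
The plan is to apply Yao's principle (Lemma \ref{thm:minimax}): it suffices to exhibit a distribution $D$ over $\infoci$ such that every deterministic $\aggpure \in \aggwithpred$ has expected loss $\E_{\pi \sim D}[L(\aggpure, \pi)] \geq 1/3$. This reduces the problem from analyzing an arbitrary aggregator against the worst-case structure to finding a single adversarial prior under which no deterministic aggregator can respond well.

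I would construct $D$ supported on a small collection (the $1/3$ factor strongly suggests three) of conditionally independent structures, carefully arranged so that they all produce the same observable tuple $(a_1, a_2, p_1, p_2)$ on the relevant signal profiles with identical probability, yet differ in the conditional distribution of $\omega$ given that tuple. Because $\aggpure$ is deterministic and can only condition on the observable, once $(a_1, a_2, p_1, p_2)$ is fixed, $\aggpure$ must commit to a single action $a \in \{0,1\}$; this is the key reduction that converts a functional minimization over $\aggpure$ into a pointwise binary choice at each observable pattern.

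To make the observables coincide across structures in the support of $D$, I would exploit the conditional-independence decomposition
\[
p_1(s_1) = \pi(S_2 = L \mid S_1 = s_1)\,\phi(b_{2L}) + \pi(S_2 = H \mid S_1 = s_1)\,\phi(b_{2H}),
\]
and the symmetric identity for $p_2$, then pick the parameters $(\mu, k_i, l_i)$ for each structure so that the resulting $(p_1, p_2)$ on the relevant signal profiles are identical. For each structure, the benchmark's expected accuracy on that pattern equals $\max\{\pi(\omega=0\mid o), \pi(\omega=1\mid o)\}$, while the aggregator's accuracy equals $\pi(\omega = \aggpure(o) \mid o)$. I would tune the three structures so that, under $D$-weighted averaging, on each observable pattern the benchmark beats every deterministic $\aggpure$ by at least $1/3$.

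The main obstacle will be the simultaneous algebraic constraints: one needs parameters $(\mu, k_1, l_1, k_2, l_2)$ for each structure in the support that (i) lie in $\infoci$, (ii) realize the desired observable pattern with matching probabilities and matching second-order values, and (iii) leave a $1/3$ regret gap after averaging. The gap of $1/3$ rather than $1/2$ strongly suggests a three-structure construction in which two of the three demand opposite deterministic outputs on the same pattern while a third contributes a cost on either choice, and the weights in $D$ are chosen so that the best-response accuracy at each observable equals $2/3$ of the benchmark accuracy, yielding the bound.
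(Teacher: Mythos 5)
Your plan has a fatal flaw at the very first step: the reduction via Yao's principle (\Cref{thm:minimax}) cannot certify a lower bound of $1/3$ for this statement, because no distribution $D\in\Delta(\infoci)$ makes every deterministic aggregator suffer expected loss $\geq 1/3$. To see why, recall that the uniform aggregator $\mixnaive$ guarantees $L(\mixnaive,\pi)\leq 1/4$ for every $\pi\in\infoci$ (\Cref{thm:regretofmixnaive}), hence $\E_{\pi\sim D}[L(\mixnaive,\pi)]\leq 1/4$ for every $D$. Since $\mixnaive$ is a uniform mixture of the two deterministic aggregators ``follow expert 1'' and ``follow expert 2'' and the loss is linear in the aggregator's output probability, at least one of these two deterministic aggregators has expected loss $\leq 1/4$ under $D$. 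So $\min_{\aggpure}\E_{\pi\sim D}[L(\aggpure,\pi)]\leq 1/4 < 1/3$ for every distribution $D$, and the averaging route is capped at $1/4$. The extra value of $1/3$ over $1/4$ is precisely the price of determinism, and it is invisible to any argument that averages over structures.

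The correct argument --- and the paper's --- exploits that the regret is a \emph{maximum} over structures rather than an average. One constructs two conditionally independent structures (with $\epsilon\to 0^+$): the first with $\mu=(1+2\epsilon)/(3+2\epsilon)$, $(k_1,k_2,l_1,l_2)=(0,0,0.5+\epsilon,1)$, the second with $\mu=2/(3+2\epsilon)$, $(k_1,k_2,l_1,l_2)=(0,0.5-\epsilon,1,1)$. Both generate the identical observable $(a_1,a_2,p_1,p_2)=(1,0,0.5+\epsilon,0.5-\epsilon)$ with probability $(1-2\epsilon)/(3+2\epsilon)$, but the true state on that event is $0$ in the first structure and $1$ in the second, while the benchmark always identifies it. A deterministic aggregator must commit to a single action on this observable, so against at least one of the two structures it incurs loss $(1-2\epsilon)/(3+2\epsilon)\to 1/3$; since $L_{\infoci}$ takes the max, the bound follows. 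Your intuition about forcing identical observable tuples with conflicting states is exactly the right mechanism, but the $1/3$ arises from the mass of the ambiguous observable in a two-structure construction, not from a three-structure mixture, and the final step must be a worst-case (not averaged) comparison.
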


To further establish the effect of second-order information in this setting, we now introduce a robust ``threshold aggregator". Remarkably, within $\infoci$, $\puregood$ attains the lowest regret among all deterministic aggregators in $\aggwithpred$. This observation underscores the potential of prediction knowledge in enabling an agent without randomness to achieve a lower regret in $\infoci$.

\paragraph{The threshold aggregator.}
This aggregator follows experts' recommendations if the experts agree. When the experts disagree, it compares the sum of their predictions to 1. If the predictions sum to less than 1, it chooses action 1. Otherwise, it chooses action 0. Concretely, we have


\[\puregood(a_1,a_2,p_1,p_2)=\begin{cases}a_1 & a_1=a_2 \\ 1 & a_1\neq a_2, p_1+p_2\leq1 \\ 0 & a_1\neq a_2, p_1+p_2> 1 \\\end{cases}.\]


The aggregator tends to trust the expert who makes a more accurate prediction. 
From another perspective, it is equivalent to the ``surprisingly popular'' approach proposed by \citet{RePEc:nat:nature:v:541:y:2017:i:7638:d:10.1038_nature21054}. When $p_1 + p_2 \leq 1$, the answer 1 is more popular than predicted. Conversely, when $p_1+p_2>1$, the answer 0 is the surprisingly popular one.
We demonstrate that the threshold aggregator has a regret of $1/3$. 


\begin{theorem}\label{prop:threshold}
    $L_{\infoci}(\puregood)=1/3$. 
\end{theorem}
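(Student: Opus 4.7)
Since Theorem~\ref{thm:lowerboundpureheteyes} already establishes $L_{\infoci}(f) \geq 1/3$ for every $f \in \aggwithpred$, the plan is to prove the matching upper bound $L(\puregood, \pi) \leq 1/3$ for every $\pi \in \infoci$. Fix such a $\pi$, parametrized by $(\mu, k_1, l_1, k_2, l_2)$. My starting point is the standard decomposition
\[
L(\puregood, \pi) = \sum_{(s_1, s_2)} \pi(s_1, s_2) \cdot \mathbbm{1}\{\puregood \neq a^*(s_1, s_2, \pi)\} \cdot \bigl|2\pi(\omega=1 \mid s_1, s_2) - 1\bigr|,
\]
so that only the four signal cells on which the aggregator errs against the benchmark contribute, each weighted by the benchmark's confidence. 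Under conditional independence, I would exploit the closed form $p_i = b_i q_{3-i} + (1 - b_i) r_{3-i}$, where $q_j \coloneqq \pi(a_j = 1 \mid \omega = 1)$, $r_j \coloneqq \pi(a_j = 1 \mid \omega = 0)$, and $b_i$ is expert $i$'s posterior on their realized signal, so that the threshold test $p_1 + p_2 \leq 1$ becomes a tractable algebraic condition in the posteriors.

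The next step is a case split on each expert's type, determined by whether $b_{iL}, b_{iH}$ straddle $1/2$: always-0, informative, or always-1. In every case where at least one expert is non-informative, the two recommendations agree on all signal cells with positive mass, so $\puregood$ simply echoes the consensus; the benchmark can overrule this consensus only when the joint posterior flips sign, and its weighted contribution can be bounded directly by the posterior constraints that force an uninformative expert's type. The principal case is both experts informative, where disagreement occurs precisely on the mixed cells $(L, H)$ and $(H, L)$ and the threshold rule is activated. For those cells, I would verify via the surprisingly popular identity that the sign of $p_1 + p_2 - 1$ tracks the sign of $\pi(\omega = 1 \mid s_1, s_2) - 1/2$ on the portion of the parameter region where the benchmark is sufficiently confident, so that high-weight cells contribute zero loss.

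The main obstacle is bounding the residual loss on the remaining parameter region, where $\puregood$ and $a^*$ can genuinely disagree; this is a non-linear function of five parameters on a polytope. My plan is to reduce by the symmetry $\omega \leftrightarrow 1 - \omega$ to assume WLOG $\mu \leq 1/2$, and then observe that once the expert types are fixed the loss becomes multilinear in the remaining parameters, so the maximum is attained at a vertex or boundary face and can be found via KKT conditions or a direct parametric sweep. I expect the maximizing information structure to coincide with the adversarial family used in the proof of Theorem~\ref{thm:lowerboundpureheteyes}, yielding the tight value $1/3$ and completing the proof.
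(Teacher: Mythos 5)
Your overall skeleton matches the paper's: take the lower bound from \Cref{thm:lowerboundpureheteyes}, then bound $L(\puregood,\pi)$ above by $1/3$ via a case split on the experts' types followed by constrained optimization over $(\mu,k_1,l_1,k_2,l_2)$. However, two structural claims you plan to ``verify'' are false, and they are load-bearing. First, it is not true that whenever at least one expert is uninformative the recommendations agree on all positive-mass cells: if expert $1$ is informative and expert $2$ always recommends $0$, the experts split on every cell with $S_1=H$, the threshold rule is invoked there (with $p_1=0$, forcing action $1$), and this family alone already pushes the regret up to $3-2\sqrt{2}$; these cases require the same kind of optimization as the main one. Second, and more seriously, the claim that on cells where the benchmark is sufficiently confident the sign of $p_1+p_2-1$ agrees with the sign of $\pi(\omega=1\mid s_1,s_2)-1/2$ fails exactly at the worst case. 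Take $\mu=2/3$, $k_1=0$, $k_2=1/2-\delta$, $l_1=l_2=1$ with small $\delta>0$: both experts are informative, on the split cell $(H,L)$ the benchmark's posterior equals $1$ (full confidence in state $1$), yet the predictions are $1/2+\delta$ and $1/2-\delta/2+O(\delta^2)$, so $p_1+p_2>1$ and $\puregood$ outputs $0$, incurring a loss of $1/3-O(\delta)$ on that single cell. The high-confidence, high-weight cells are not loss-free; they are precisely where the regret $1/3$ is realized, so your ``residual region'' is in effect the whole region.

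Consequently the entire burden falls on the optimization step, which you do not carry out and whose claimed structure is inaccurate: the feasible sets are cut out by conditions such as $\mu k_i \leq (1-\mu)l_i$ (and its reverse) and by the sign of $p_1+p_2-1$, which after clearing denominators are bilinear and higher-degree polynomial constraints, not linear ones; the region is not a polytope, so ``multilinear objective attains its maximum at a vertex'' does not apply as stated. The paper instead enumerates seven type-cases, uses $k_i\le l_i$ and the factorization of $p_1^{\pm}+p_2^{\pm}-1$ to show the two split cells always land on the same side of the threshold, and then solves the resulting nonconvex programs explicitly (by computer algebra), obtaining the maximum $1/3$ in the both-informative case at $\mu=2/3$, $k_1=0$, $k_2=1/2$, $l_1=l_2=1$. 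Until that computation, or an equivalent analytic bound, is actually supplied, your upper bound is not established.
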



In summary, the threshold aggregator resolves the experts' disagreement based on who can predict the other more accurately. While there still exists a gap compared with the omniscient benchmark, using second-order information already significantly improves the aggregator's performance versus relying solely on raw recommendations.

\subsection{Random Aggregators}

For random aggregators, we first establish that it is impossible to ensure a regret below $0.25$ even when utilizing second-order information.

\begin{theorem}\label{thm:lowerboundmixhete}
    For every random aggregator $\aggmix(a_1,a_2,p_1,p_2)\in \Delta(\aggwithpred)$, $L_{\infoci}(\aggmix)\geq 0.25$.
\end{theorem}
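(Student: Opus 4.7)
The plan is to invoke Yao's principle (\Cref{thm:minimax}), which reduces the lower bound to exhibiting, for each $\varepsilon > 0$, a distribution $D$ over $\infoci$ under which no deterministic aggregator $\aggpure \in \aggwithpred$ achieves expected loss below $1/4 - \varepsilon$. I would construct a one-parameter family of hard instances $D_\alpha = \frac{1}{2}\pi_a^\alpha + \frac{1}{2}\pi_b^\alpha$ for $\alpha \in (0, 1/2)$, with $\mu = 1/2$ in both structures and swapped expert roles: in $\pi_a^\alpha$, expert $2$ is perfectly informative (so $a_2 = \omega$) while expert $1$ has $k_1 = \alpha$, $l_1 = 1 - \alpha$; in $\pi_b^\alpha$ the roles of the two experts are exchanged. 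Sending $\alpha \to 1/2^-$ at the end will yield the claimed $1/4$ bound.

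The crux is that $\pi_a^\alpha$ and $\pi_b^\alpha$ realize an identical set of second-order signatures, with swapped benchmarks on the ambiguous ones. A direct computation using conditional independence and $\mu = 1/2$ gives $p_i \in \{\alpha, 1-\alpha\}$, and the signatures realized in $\pi_a^\alpha$ are exactly
\[
(1,1,1-\alpha,1-\alpha),\; (0,0,\alpha,\alpha),\; (0,1,\alpha,1-\alpha),\; (1,0,1-\alpha,\alpha),
\]
and the same four signatures arise in $\pi_b^\alpha$. On the two consensus signatures the two structures witness the same underlying state (benchmark $1$ and $0$ respectively), but each disagreement signature is witnessed by opposite states: for instance, $(0,1,\alpha,1-\alpha)$ arises in $\pi_a^\alpha$ only from $(\omega,S_1,S_2)=(1,L,H)$ with benchmark $1$, while in $\pi_b^\alpha$ it arises only from $(0,L,H)$ with benchmark $0$; symmetrically for $(1,0,1-\alpha,\alpha)$. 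Each disagreement signature occurs with probability $\alpha/2$ in its respective structure.

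For any deterministic $\aggpure$, whichever action it commits to on a disagreement signature matches the benchmark in exactly one of the two structures, contributing $\alpha/4$ to the expected loss under $D_\alpha$; the two disagreement signatures together force $\E_{\pi \sim D_\alpha}[L(\aggpure, \pi)] \geq \alpha/2$ regardless of $\aggpure$'s choices on the consensus signatures. Yao's principle then lifts this to $L_{\infoci}(\aggmix) \geq \alpha/2$ for every $\aggmix \in \Delta(\aggwithpred)$, and letting $\alpha \to 1/2^-$ gives the claimed bound of $0.25$. The main obstacle I anticipate is the fiddly bookkeeping needed to verify that the prediction values $p_1, p_2$ really do coincide across the two structures (so the aggregator genuinely cannot distinguish them on the disagreement signatures) and to check that no extra signatures arise; once that is established, the Yao step and the limit $\alpha \to 1/2^-$ are routine.
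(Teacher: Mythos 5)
Your proposal is correct and follows essentially the same route as the paper: Yao's principle applied to a uniform mixture of two conditionally independent structures with $\mu=1/2$, one expert omniscient and the other nearly uninformative (your $\alpha\to 1/2^-$ is the paper's $\epsilon\to 0^+$ via $\alpha=1/2-\epsilon$), so that the disagreement signatures are indistinguishable across the two structures but carry opposite benchmark actions, each contributing $\alpha/4$ to the expected loss. The signature bookkeeping you flag does check out, so nothing further is needed.
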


We now introduce a random aggregator that does not require predictive information yet still guarantees tight regret. We refer to the aggregator as the uniform aggregator. It can be seen as a random version of the follow-the-first-expert aggregator. 

\paragraph{The uniform aggregator.} 
The uniform aggregator outputs the recommendations of experts when they agree; when they disagree, the aggregator uniformly selects an action. In other words, the uniform aggregator can be expressed as 
\[\mixnaive(a_1,a_2)=\begin{cases}a_1 & a_1=a_2\\ 0.5 & a_1\neq a_2\end{cases}.\]
Here, when $\mixnaive$ outputs $0.5$, it means choosing action 1 with probability $0.5$. A similar interpretation also holds for random aggregators to be introduced later. 

\begin{theorem}\label{thm:regretofmixnaive}
    $L_{\infoci}(\mixnaive)=0.25$.
\end{theorem}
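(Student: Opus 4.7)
The plan is as follows. The lower bound $L_{\infoci}(\mixnaive) \ge 0.25$ is immediate from \Cref{thm:lowerboundmixhete}: since $\mixnaive$ ignores the prediction inputs, it can be viewed as an element of $\Delta(\aggwithpred)$ (extending a member of $\Delta(\aggnopred)$ to be constant in $p_1, p_2$), so the lower bound on random aggregators with second-order information applies verbatim.

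For the matching upper bound $L(\mixnaive, \pi) \le 0.25$ on every $\pi \in \infoci$, the starting observation is that $\mixnaive$ is equivalent to the rule ``draw $i \in \{1,2\}$ uniformly at random and output $a_i$'': when the two experts agree, both rules output the consensus; when they disagree, both output $0$ or $1$ with probability $1/2$. Writing $P^* = \Pr_\pi[a^*(s_1,s_2,\pi) = \omega]$ and $P_i = \Pr_\pi[a_i(s_i)=\omega]$, linearity of expectation gives $L(\mixnaive, \pi) = P^* - (P_1+P_2)/2$, so the task reduces to proving $2 P^* - P_1 - P_2 \le 1/2$ for every $\pi \in \infoci$.

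I would then proceed by case analysis on the two experts' recommendation rules. For each expert $i$, the posteriors $b_{iL} \le b_{iH}$ determine one of three rules: always recommend $0$ (when $b_{iH} < 1/2$), always recommend $1$ (when $b_{iL} \ge 1/2$), or be signal-dependent recommending $1$ iff $s_i = H$ (when $b_{iL} < 1/2 \le b_{iH}$). Up to swapping experts and swapping the two states, this yields only a handful of combinations. For each combination I substitute the CI factorization $\pi(\omega, s_1, s_2) = \mu_\omega r_{1\omega}(s_1) r_{2\omega}(s_2)$ into $P^*, P_1, P_2$ and use the rule-defining inequalities to verify the bound by elementary algebra.

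The main obstacle is the case where both experts are signal-dependent. There, the rule-defining inequalities $\mu k_i < (1-\mu) l_i$ and $\mu(1-k_i) \ge (1-\mu)(1-l_i)$, combined with the ordering $k_i \le l_i$, together force the benchmark $a^*$ to coincide with the common recommendation on both diagonal pairs $(L,L)$ and $(H,H)$, so the loss is supported only on the off-diagonal pairs. Concretely, using the identity $P^* - P_i = \sum_{(s_1,s_2):\,a^* \ne a_i}|\Delta(s_1,s_2)|$ with $\Delta(s_1,s_2) = \pi(1,s_1,s_2) - \pi(0,s_1,s_2)$, the quantity $2P^* - P_1 - P_2$ collapses to $|\Delta(L,H)| + |\Delta(H,L)|$, and I would bound this sum by $1/2$ using the CI expansion together with the distinguishing-rule constraints on $\mu, k_i, l_i$. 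The bound is asymptotically tight in the limit where one expert is perfectly informative and the other is essentially uninformative, which matches exactly the extremal configuration that saturates the lower bound in \Cref{thm:lowerboundmixhete} and confirms the regret equals $0.25$.
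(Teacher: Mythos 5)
Your architecture is sound and, once unwrapped, is essentially the paper's: the lower bound is imported from \Cref{thm:lowerboundmixhete} exactly as in \Cref{proof:regretofmixnaive}, and your upper bound is the same case analysis over the experts' recommendation rules (constant-$0$, constant-$1$, or signal-dependent for each expert) followed by a constrained maximization in each case. Your reformulation is a genuine clarification, though: the random-dictator identity $L(\mixnaive,\pi)=P^*-\tfrac12(P_1+P_2)$ together with $P^*-P_i=\sum_{(s_1,s_2):\,a^*\neq a_i}\lvert\Delta(s_1,s_2)\rvert$ reproduces, in one line, exactly the objectives of the paper's seven programs (e.g.\ your $\tfrac12\bigl(\lvert\Delta(L,H)\rvert+\lvert\Delta(H,L)\rvert\bigr)$ is precisely the Case~7 objective in the paper's proof), and it makes transparent why only the cells where benchmark and aggregator disagree matter. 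The boundary subtlety at posterior exactly $1/2$ (where $\phi$ outputs $1$ but both experts output $0$) is harmless since there $\Delta=0$, as you implicitly use.

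The gap is that the decisive quantitative step is asserted, not proved. Showing $\lvert\Delta(L,H)\rvert+\lvert\Delta(H,L)\rvert\le 1/2$ under the constraints $\mu k_i\le(1-\mu)l_i$, $\mu(1-k_i)\ge(1-\mu)(1-l_i)$, $k_i\le l_i$ is a nonconvex optimization in five variables $(\mu,k_1,k_2,l_1,l_2)$; it is the entire technical content of the upper bound, and the paper itself delegates it (and its analogues in the other cases) to Wolfram Mathematica rather than ``elementary algebra.'' Your plan contains no argument for it beyond the intention to ``bound this sum by $1/2$.'' Relatedly, your triage of the cases is off: the extremal value $0.25$ is \emph{also} attained in the mixed cases where one expert is signal-dependent and the other always recommends the same action (the paper's Cases 3--6, tight in the limit $\mu=1/2$, $k_1=0$, $l_1=1$, $k_2=l_2=1/2$), so those cases require the same kind of optimization and cannot be waved through; only the two cases where \emph{both} experts are constant reduce to the easier $3-2\sqrt{2}$ bound. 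To complete the proof you would need either to carry out these maximizations explicitly or to find a uniform inequality covering all cells at once; as written, the key inequalities remain unestablished.
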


\section{Homogeneous Experts} \label{sec:homo}

In the above results for random aggregators, an important reason why the predictions are useless is that, in the worst cases, they do not assist the agent in distinguishing the omniscient expert from the ignorant one, thus the aggregator can only choose the uniform strategy at best. 
However, the benchmark, aided by the information structure, is always able to identify the more informed expert. As a result, there exists a substantial utility gap between the agent and the benchmark, irrespective of the agent possessing knowledge of the prediction.

In this section, we assume that the two experts are homogeneous, which means their marginal signal distribution is the same.
Intuitively, the knowledge of prediction may help the agent identify the representative signal that includes the information of the real state.
Formally, we take the following assumption in this section. 
\begin{assumption}[Homogeneous experts]\label{asp:homogeneous}
    The two experts are homogeneous. In other words, $k_1=k_2=k$, $l_1=l_2=l$, $b_{1L}=b_{2L}=b_L$ and $b_{1H}=b_{2H}=b_H$.
\end{assumption}

We then focus on the set of all conditionally independent information structures with homogeneous experts, which is referred to as $\infoaci$.
All missing proofs of this section are deferred to \Cref{proof:sec5}.


\subsection{Deterministic Aggregators}
To begin, we establish a lower bound for deterministic aggregators, demonstrating that no deterministic aggregator in $\aggwithpred$ can achieve a regret less than $3-2\sqrt{2}\approx 0.1716$.

\begin{theorem}\label{thm:lowerboundpurehomo}
    For every deterministic aggregator $\aggpure(a_1,a_2,p_1,p_2)\in \aggwithpred$, $L_{\infoaci}(\aggpure)\geq 3-2\sqrt{2}$.
\end{theorem}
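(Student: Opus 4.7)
The plan is to exhibit a single information structure $\pi^\star\in\infoaci$ on which any deterministic aggregator, even one using second-order information, must pay regret at least $3-2\sqrt{2}$. The enabling observation is that $\infoaci$ contains structures in which both experts always recommend the same action regardless of their private signal; in such a structure the tuple $(a_1,a_2,p_1,p_2)$ observed by the aggregator is deterministic, so any $\aggpure\in\aggwithpred$ collapses to a single fixed choice in $\{0,1\}$, and I need only show that neither choice can do well.

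Concretely I would take the homogeneous conditionally independent structure $\pi^\star$ with $\mu=1/\sqrt{2}$, $k_1=k_2=k=\sqrt{2}-1$, and $l_1=l_2=l=1$. The identity $\mu k=(1-\mu)l=1-1/\sqrt{2}$ yields $b_L=1/2$, so each expert (breaking ties toward 1, per the paper's convention) recommends action $1$ on an $L$ signal; since $l=1$ also forces $b_H=1$, each expert recommends $1$ on an $H$ signal as well. Hence $a_1=a_2=1$ with probability $1$, each prediction satisfies $p_i=1$, and the aggregator's input is deterministically $(1,1,1,1)$.

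Next I would compute the benchmark's behavior. The joint posterior $\pi(\omega=1\mid L,L)=k/(k+l)=1-1/\sqrt{2}<1/2$ shows the benchmark picks $0$ on $(L,L)$; any signal pair other than $(L,L)$ contains an $H$, which is impossible under $\omega=0$ because $l=1$, so the benchmark picks $1$ on those pairs and is in fact always correct. Therefore $P(a^*=\omega)=(1-\mu)l^2+\mu(1-k^2)=3-3\sqrt{2}/2$. The aggregator's output on the only realized input matches $\omega$ with probability $\mu=1/\sqrt{2}$ if it picks $1$ and $1-\mu=1-1/\sqrt{2}$ if it picks $0$, giving regrets $3-2\sqrt{2}$ and $2-\sqrt{2}$, respectively. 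The smaller of the two is $3-2\sqrt{2}$, so $L_{\infoaci}(\aggpure)\ge L(\aggpure,\pi^\star)\ge 3-2\sqrt{2}$.

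The only genuine computation is locating the right $(\mu,k,l)$. Restricting attention to the ``both experts always recommend $1$'' subfamily $\{\mu k=(1-\mu)l,\ 0\le k\le l\le 1\}$, the regret of the best aggregator action simplifies to $\mu k(l-k)$; this is monotone in $l$, so $l=1$ is optimal, and the resulting one-variable problem $\max_{k\in[0,1]} k(1-k)/(1+k)$ yields $k=\sqrt{2}-1$ with value $3-2\sqrt{2}$ by elementary calculus. I expect pinning down this optimization --- specifically, convincing oneself that pushing $l$ to $1$ really does maximize the relevant quantity and that the ``always-recommend-$0$'' branch is strictly worse at the optimizer --- to be the main (and only) obstacle; everything else is direct substitution.
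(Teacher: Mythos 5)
Your proof is correct. The structure $(\mu,k,l)=(\sqrt{2}/2,\sqrt{2}-1,1)$ lies in $\infoaci$, the tie at $b_L=1/2$ is resolved toward action $1$ exactly as the paper's convention prescribes, the aggregator's input is then the constant $(1,1,1,1)$, and the two candidate regrets $3-2\sqrt{2}$ (output $1$) and $2-\sqrt{2}$ (output $0$) are computed correctly, so the minimum over the aggregator's only two available behaviors is $3-2\sqrt{2}$.

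Your route, however, is genuinely different from the paper's proof of this theorem. The paper exhibits \emph{two} structures in which the experts' recommendations split, observes that the inputs $(1,0,\sqrt{2}/2,\sqrt{2}-1)$ and $(0,1,\sqrt{2}-1,\sqrt{2}/2)$ arise under both with different underlying states, and checks that each of the three possible output patterns on these two inputs incurs regret at least $3-2\sqrt{2}$ on one of the structures. Your single degenerate structure is in fact the one the paper reserves for the stronger \Cref{thm:lowerboundhomo} (random aggregators), and your argument extends verbatim to that setting: on a constant input a randomized output's loss is a convex combination of the two deterministic losses, both of which you show are at least $3-2\sqrt{2}$. So your approach is simpler and subsumes the random-aggregator bound. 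What you lose relative to the paper is reusability: the paper's two structures satisfy $b_L<1/2\leq b_H$ and hence lie in $\infodaci$, which is precisely why \Cref{coro:daci} follows ``with no changes in the proof''; your structure has $b_L=1/2$, so both experts always recommend action $1$ and it falls outside $\infodaci$, meaning the non-degenerate corollary would require a separate construction.
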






Interestingly, the threshold aggregator, which is optimal with heterogeneous experts, achieves suboptimal performance in the homogeneous setting, still giving a regret of $1/3$.
\begin{theorem}\label{prop:thresholdbad}
    $L_{\infoaci}(\puregood)=1/3$.
\end{theorem}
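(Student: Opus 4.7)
The plan is to establish the matching upper and lower bounds separately; the upper bound is essentially free and the lower bound requires exhibiting one well-chosen instance.

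For the upper bound, I would simply observe that $\infoaci \subseteq \infoci$: every homogeneous conditionally independent structure is a fortiori conditionally independent. Hence $L_{\infoaci}(\puregood) \leq L_{\infoci}(\puregood) = 1/3$ by \Cref{prop:threshold}, with no extra work.

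For the lower bound, I plan to exhibit a single $\pi^* \in \infoaci$ with $L(\puregood, \pi^*) = 1/3$. My candidate is $\mu = 1/4$, $k_1 = k_2 = 0$, $l_1 = l_2 = 2/3$. Under $\pi^*$, the state $\omega = 1$ (prior $1/4$) deterministically produces the signal pair $(H,H)$, while $\omega = 0$ (prior $3/4$) draws each signal independently as $L$ with probability $2/3$. From this one reads off $b_L = 0$ and $b_H = 1/2$, so both experts recommend $0$ on $L$ and $1$ on $H$, and the marginal distribution is $\pi^*(L,L) = 1/3$, $\pi^*(L,H) = \pi^*(H,L) = 1/6$, $\pi^*(H,H) = 1/3$.

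The decisive calculation is at the disagreement events. On $(L,H)$ one has $p_1 = \pi^*(S_2 = H \mid S_1 = L) = 1/3$ and $p_2 = \pi^*(S_1 = H \mid S_2 = H) = 2/3$, so $p_1 + p_2 = 1$; by the definition of $\puregood$ this triggers the output action $1$. Yet $k = 0$ forces $\pi^*(\omega = 1 \mid L, H) = 0$, so the omniscient benchmark correctly picks $0$. The case $(H,L)$ is identical by homogeneity. On the agreement events $(L,L)$ and $(H,H)$ the aggregator and benchmark trivially agree and contribute zero loss. Summing the two disagreement contributions yields $L(\puregood, \pi^*) = \pi^*(L,H) + \pi^*(H,L) = 1/3$, matching the upper bound.

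I expect no real obstacle beyond locating this boundary instance. The homogeneity assumption collapses the rule $p_1 + p_2 \leq 1$ on disagreement events into the single condition $P_L \geq 1/2$, and the construction above sits exactly on $P_L = 1/2$ while making $(L,H)$ and $(H,L)$ deterministic state-$0$ events. That alignment is what pushes $\puregood$ onto the wrong side of its threshold and produces the tight loss of $1/3$, confirming that the threshold aggregator, so powerful in the heterogeneous regime, loses its edge once the experts are interchangeable.
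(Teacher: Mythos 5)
Your proposal is correct and follows essentially the same route the paper relies on: the upper bound is inherited from $L_{\infoci}(\puregood)=1/3$ via $\infoaci\subseteq\infoci$, and your tight instance $\mu=1/4$, $k_1=k_2=0$, $l_1=l_2=2/3$ is exactly one of the homogeneous maximizers already identified in Case~7 of the paper's proof of \Cref{prop:threshold}. All of your computations check out ($p_1+p_2=1$ on both disagreement events, each contributing loss $1/6$), so the bound is attained exactly, with no limiting argument needed.
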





Nevertheless, the follow-the-first-expert aggregator
can guarantee a lower regret. 
Moreover, $\purenaive$ achieves the lowest regret among all deterministic aggregators in $\aggwithpred$ regarding $\infoaci$. This also indicates that knowledge of prediction cannot help the agent without randomness attain a lower regret in $\infoaci$.

\begin{theorem}\label{prop:upperboundhomo}
    $L_{\infoaci}(\purenaive)=3-2\sqrt{2}$. 
\end{theorem}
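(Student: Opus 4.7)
The plan is to combine the already-established lower bound $L_{\infoaci}(\purenaive) \ge 3 - 2\sqrt{2}$ from \Cref{thm:lowerboundpurehomo}, which applies to every element of $\aggwithpred$ and hence to $\purenaive \in \aggnopred \subset \aggwithpred$, with a matching upper bound. For the upper bound, I parameterize an arbitrary $\pi \in \infoaci$ by $(\mu, k, l)$ with $k \le l$ (which is equivalent to the monotonicity $b_L \le b_H$), and use the identity
\[
L(\purenaive, \pi) = \sum_{s_1, s_2} \bigl[\max_\omega \pi(\omega, s_1, s_2) - \pi(a_1(s_1), s_1, s_2)\bigr],
\]
so that each summand is nonzero only on cells $(s_1, s_2)$ where the benchmark disagrees with $\purenaive$, in which case it equals $|\pi(1, s_1, s_2) - \pi(0, s_1, s_2)|$.

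Next I split by the shape of expert $1$'s policy. In Case A (non-degenerate: $a_1(L) = 0, a_1(H) = 1$, i.e., $\mu k \le (1-\mu) l$ and $\mu(1-k) \ge (1-\mu)(1-l)$), the monotonicity $k \le l$ combined with these defining inequalities forces the benchmark to agree with $\purenaive$ on $(L, L)$ and $(H, H)$, so disagreement can only happen on $(L, H)$ and $(H, L)$; since these cells share the same joint distribution the benchmark acts identically on both, and unwinding the two sub-cases the loss collapses to $|\mu k(1-k) - (1-\mu) l(1-l)|$. In Case B ($a_1 \equiv 0$), a parallel sign check shows that benchmark can only disagree on $(H, H)$, giving loss $\max\bigl(0,\, \mu(1-k)^2 - (1-\mu)(1-l)^2\bigr)$; Case C ($a_1 \equiv 1$) is symmetric with loss $\max\bigl(0,\, (1-\mu) l^2 - \mu k^2\bigr)$.

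I then maximize in each case. For Case A it suffices to study the direction $(1-\mu) l(1-l) - \mu k(1-k) \ge 0$ (the other is mapped to it by the $\omega \leftrightarrow 1-\omega$ symmetry): the objective is decreasing in $\mu k(1-k)$, so $k = 0$ is optimal, and then the binding constraint becomes $\mu \ge (1-l)/(2-l)$; substituting reduces the loss to $l(1-l)/(2-l)$, whose derivative vanishes at $l = 2 - \sqrt{2}$, yielding exactly $3 - 2\sqrt{2}$. For Case B, writing $x = 1-k$, $y = 1-l$ and binding $\mu = y/(x+y)$ reduces the objective to $xy(x-y)/(x+y)$; since the partial in $x$ is positive, $x = 1$ is optimal, and a one-variable optimization in $y$ again gives $3 - 2\sqrt{2}$ at $y = \sqrt{2} - 1$. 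The hard part will be the bookkeeping in Case A: verifying that the defining inequalities of the case already pin down the benchmark's action on $(L, L)$ and $(H, H)$, so only the two mixed-signal cells can contribute, and then checking that the two sub-cases share a single maximum. It is reassuring that the extremal $\pi$ lies on the common boundary $\mu(1-k) = (1-\mu)(1-l)$ of Cases A and B, which is exactly why the two optimizations produce the same value.
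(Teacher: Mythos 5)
Your proposal is correct, and the substance matches the paper's argument even though the packaging differs. The paper proves this theorem in one line by observing that, under homogeneity, $\purenaive$ and the uniform aggregator $\mixnaive$ incur the same loss on every $\pi\in\infoaci$ (the two mixed-signal cells $(L,H)$ and $(H,L)$ have identical joint distributions, so always following expert 1 and randomizing uniformly give the same expected utility), and then citing \Cref{prop:upperboundhomomixed}; the lower bound is inherited from \Cref{thm:lowerboundhomo}. You instead compute $L(\purenaive,\pi)$ directly, but your three cases A, B, C are exactly the paper's Cases 3, 2, 1 in the proof of \Cref{prop:upperboundhomomixed}, and they yield literally the same optimization programs (in Case A your single disagreement cell contributes $\lvert\mu k(1-k)-(1-\mu)l(1-l)\rvert$, which equals the uniform aggregator's $2\times\tfrac12\lvert\cdot\rvert$ over both cells). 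The one thing your route adds is an analytic solution of the programs --- reducing to $l(1-l)/(2-l)$ and $xy(x-y)/(x+y)$ and optimizing by hand --- where the paper delegates to Mathematica; your lower-bound citation of \Cref{thm:lowerboundpurehomo} (via $\aggnopred\subset\aggwithpred$) is also valid and non-circular. The boundary subtlety you flag (the Case B extremum sitting on $\mu(1-k)=(1-\mu)(1-l)$, which under the $\phi(b)=\mathbbm{1}\{b\geq 0.5\}$ convention formally belongs to Case A) is harmless since the supremum over each closed region is still $3-2\sqrt{2}$.
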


\subsection{Random Aggregators}




For random aggregators utilizing second-order information, we first show that their regret lower bound in $\infoaci$ is $3-2\sqrt{2}$.
\begin{theorem}\label{thm:lowerboundhomo}
    For every random aggregator $\aggmix(a_1,a_2,p_1,p_2)\in \Delta(\aggwithpred)$, $L_{\infoaci}(\aggmix)\geq 3-2\sqrt{2}$.
\end{theorem}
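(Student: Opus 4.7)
Plan: I would apply Yao's principle (\Cref{thm:minimax}) with a point-mass distribution concentrated at a single carefully chosen $\pi^*\in\infoaci$. That is, I will exhibit one information structure on which every random aggregator already incurs loss $\geq 3-2\sqrt{2}$; since $L_{\infoaci}(\aggmix)\geq L(\aggmix,\pi^*)$, the claimed bound follows immediately (equivalently, this is Yao's principle instantiated at $D=\delta_{\pi^*}$).

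The construction: I take the homogeneous, conditionally independent structure with $\mu = 1/\sqrt{2}$, $k_1 = k_2 = \sqrt{2}-1$, and $l_1 = l_2 = 1$. A direct substitution gives $b_L = \mu k / (\mu k + (1-\mu) l) = 1/2$ and $b_H = \mu(1-k)/(\mu(1-k)+0) = 1$, so under the tie-breaking convention $\phi(0.5)=1$ each expert recommends action $1$ regardless of her signal. Consequently $p_1 = p_2 = 1$ deterministically, and under $\pi^*$ the aggregator always observes the constant input $(1,1,1,1)$. Its behavior is therefore captured entirely by the single number $q \coloneqq \Pr[\aggmix(1,1,1,1)=1]\in[0,1]$.

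Loss evaluation: I would then compute the benchmark and aggregator match probabilities. Since $l=1$, any $H$-signal perfectly reveals $\omega=1$, so the benchmark outputs $1$ on every pair except $(L,L)$; on $(L,L)$ one checks $\pi(\omega=1\mid L,L) = \mu k^2/(\mu k^2 + (1-\mu)l^2) < 1/2$, so the benchmark outputs $0$. This yields a benchmark match probability of $\mu(1-k^2)+(1-\mu)l^2 = 3 - 3\sqrt{2}/2$, while the aggregator's match probability is $q\mu + (1-q)(1-\mu)$. Subtracting and simplifying, $L(\aggmix,\pi^*) = 2-\sqrt{2} - q(\sqrt{2}-1)$, whose minimum over $q\in[0,1]$ is attained at $q=1$ with value $3 - 2\sqrt{2}$. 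This completes the lower bound.

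Main obstacle: The only genuinely creative step is identifying $\pi^*$. The guiding intuition is that the worst case must make both recommendations and both predictions completely uninformative (they are constants) while still leaving the benchmark room to exploit the joint distribution on the rare $(L,L)$ realization. Taking $l=1$ makes any $H$ conclusive for $\omega=1$; maximizing the resulting loss $(1-\mu)l^2-\mu k^2$ subject to $b_L\geq 1/2$ with the constraint tight then pins down $k=\sqrt{2}-1$ and $\mu = 1/(1+k) = 1/\sqrt{2}$. This is precisely the same extremal structure used to establish tightness in \Cref{prop:upperboundhomo}; the content of the present theorem is that even allowing randomization and second-order predictions cannot improve upon the deterministic $\purenaive$ on this worst case.
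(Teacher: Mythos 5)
Your proposal is correct and uses exactly the paper's construction: the single structure $(\mu,k,l)=(\sqrt{2}/2,\sqrt{2}-1,1)$ under which both experts always report $(1,1,1,1)$, with the loss reduced to a one-parameter optimization over $q=\Pr[\aggmix(1,1,1,1)=1]$ and minimized at $q=1$ with value $3-2\sqrt{2}$. The paper argues the same way (without explicitly invoking Yao's principle, which is unnecessary for a point-mass distribution), so this is essentially the same proof with the $q$-minimization spelled out slightly more explicitly.
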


As an intuition of the proof, we provide an information structure in which the aggregator always observes the input $(1,1,1,1)$, which means the best strategy is to adopt action 1 all the time. However, when two signals are both $L$, the benchmark's posterior is less than 1. Therefore, no aggregator can avoid such a difference, leading to an unavoidable regret of $3-2\sqrt{2}$.

We then show that the uniform aggregator is optimal with a tight regret of $3-2\sqrt{2}$. We notice here that since experts are homogeneous, the uniform aggregator is equivalent to the follow-the-first-expert-aggregator.

\begin{theorem}\label{prop:upperboundhomomixed}
$L_{\infoaci}(\mixnaive)=3-2\sqrt{2}$. 
\end{theorem}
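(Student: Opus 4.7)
The proof breaks into a matching lower and upper bound, both of which can be read off from results already established.

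For the lower bound, $\mixnaive$ is a random aggregator that happens to ignore the second-order predictions, so it lies in $\Delta(\aggnopred) \subseteq \Delta(\aggwithpred)$. \Cref{thm:lowerboundhomo} then gives $L_{\infoaci}(\mixnaive) \geq 3 - 2\sqrt{2}$ immediately.

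For the matching upper bound, the plan is to show that on $\infoaci$ the uniform aggregator and the follow-the-first-expert aggregator have identical expected loss, i.e.\ $L(\mixnaive,\pi) = L(\purenaive,\pi)$ for every $\pi \in \infoaci$. Granted this, \Cref{prop:upperboundhomo} yields $L_{\infoaci}(\mixnaive) = L_{\infoaci}(\purenaive) = 3-2\sqrt{2}$. To establish the pointwise equality I would case on where $b_L,b_H$ sit relative to $1/2$. When $b_L \geq 1/2$ (resp.\ $b_H < 1/2$) both experts always recommend the same action, so $\mixnaive$ and $\purenaive$ output identical actions on every realization and the losses agree trivially. The genuine case is $b_L < 1/2 \leq b_H$, where each expert recommends $1$ iff she sees $H$; here the aggregators already agree on the consensus cells $(L,L)$ and $(H,H)$, so it suffices to compare their expected accuracies on the disagreement cells $(L,H)$ and $(H,L)$.

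The key step is a symmetry observation that uses both homogeneity and conditional independence. Under \Cref{asp:homogeneous} with conditionally independent signals,
\[
\pi(S_1=L,S_2=H) \;=\; \mu k(1-k) + (1-\mu)l(1-l) \;=\; \pi(S_1=H,S_2=L),
\]
and the conditional posteriors $\pi(\omega=1 \mid L,H)$ and $\pi(\omega=1 \mid H,L)$ coincide; call their common value $q_{LH}$. The aggregator $\purenaive$ outputs $a_1$, so on these two cells its expected accuracy is $\pi(L,H)(1-q_{LH}) + \pi(H,L)\,q_{LH} = \tfrac12[\pi(L,H)+\pi(H,L)]$. The aggregator $\mixnaive$ picks $0$ or $1$ uniformly on both disagreement cells, giving conditional accuracy $1/2$ on each cell and total $\tfrac12[\pi(L,H)+\pi(H,L)]$ as well. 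The two accuracies are equal, hence $L(\mixnaive,\pi) = L(\purenaive,\pi)$.

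I do not anticipate a real obstacle: the whole argument is the symmetry $\pi(L,H) = \pi(H,L)$ and $q_{LH} = q_{HL}$ granted by homogeneity plus conditional independence, plus a careful enumeration of the boundary cases $b_L \geq 1/2$ and $b_H < 1/2$ so that the equivalence is truly pointwise in $\pi$. The only thing to be precise about is the tie-breaking convention $\phi(b)=\mathbbm{1}\{b \geq 0.5\}$, which determines how the degenerate boundary cases are classified but does not affect the equality itself.
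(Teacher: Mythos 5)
There is a genuine gap in the upper-bound half of your argument: it is circular relative to the paper. You derive $L_{\infoaci}(\mixnaive)\leq 3-2\sqrt{2}$ by showing $L(\mixnaive,\pi)=L(\purenaive,\pi)$ pointwise on $\infoaci$ and then citing \Cref{prop:upperboundhomo} for the value $L_{\infoaci}(\purenaive)=3-2\sqrt{2}$. But the paper's entire proof of \Cref{prop:upperboundhomo} is the one-line observation that, under homogeneity, $\purenaive$ is equivalent to $\mixnaive$, and therefore the result follows \emph{from} \Cref{prop:upperboundhomomixed} --- the very statement you are proving. Your equivalence argument (the symmetry $\pi(L,H)=\pi(H,L)$ and $q_{LH}=q_{HL}$, plus the boundary cases where both experts always agree) is correct, but it is exactly the same observation the paper uses in the opposite direction, so combining the two proofs yields no independent derivation of the number $3-2\sqrt{2}$ for either aggregator.

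What is missing is the substantive computation that the paper places in the proof of \Cref{prop:upperboundhomomixed} itself: a case analysis over $(\mu,k,l)$ with $k\leq l$ (via \Cref{lemma: infopara}), splitting into the degenerate cases $\mu k>(1-\mu)l$ and $\mu(1-k)<(1-\mu)(1-l)$ --- where the only loss comes from the consensus cell $(L,L)$ or $(H,H)$ and one maximizes, e.g., $-\mu k^2+(1-\mu)l^2$ subject to $\mu k\geq(1-\mu)l$ and $\mu k^2\leq(1-\mu)l^2$ --- and the non-degenerate case, where the loss on the disagreement cells is $\lvert\mu k(1-k)-(1-\mu)l(1-l)\rvert$ and one maximizes that subject to $\mu k\leq(1-\mu)l\leq\mu k+(1-\mu)-\mu$. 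Each program attains its maximum $3-2\sqrt{2}$ at $\mu=\sqrt{2}/2$, $k=\sqrt{2}-1$, $l=1$ (or the symmetric point). Your proof needs either this direct optimization for $\mixnaive$, or an equally explicit worst-case computation for $\purenaive$ that does not route through \Cref{prop:upperboundhomo}. The lower-bound half of your proposal, via \Cref{thm:lowerboundhomo}, is correct and matches the paper.
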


Thus, surprisingly, the second-order information offers no help under the robust paradigm even if we assume experts are homogeneous. This motivates an additional natural assumption that we will introduce in the following section.
 






\section{Homogeneous Experts with Non-Degenerate Signals}\label{sec:homoplus}

According to the proof of \Cref{thm:lowerboundhomo}, when experts' recommended actions do not vary with their observed signals, the predictions contain no useful information and thus cannot aid the agent in achieving lower regret. In this section, alongside assuming expert homogeneity, we further assume the experts will recommend different actions when observing different signals, specifically that $b_L < 1/2 \leq b_H$.

\begin{assumption}[Homogeneous experts with non-degenerate signals]\label{asp:useful}
Two experts are homogeneous. Also, they recommend different actions when observing different signals. In other words, $b_L<1/2\leq b_H$.
\end{assumption}

This section studies the set of all possible conditionally independent information structures satisfying \Cref{asp:useful}, denoted by $\infodaci$.
Missing proofs of this section can be found in \Cref{proof:sec6}.


\subsection{Deterministic Aggregators}

For deterministic aggregators, we notice that all results we establish in \Cref{sec:homo} still work for the information structure family $\infodaci$, with no changes in the proof. To summarize, we have the following corollaries. 

\begin{corollary}\label{coro:daci}
    For every deterministic aggregator $\aggpure(a_1,a_2,p_1,p_2)\in \aggwithpred$, $L_{\infodaci}(\aggpure)\geq 3-2\sqrt{2}\approx 0.1716$.
\end{corollary}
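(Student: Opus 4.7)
The plan is to inherit the lower bound from \Cref{thm:lowerboundpurehomo} by checking that the hard witness produced there actually lies in $\infodaci$. Since $\infodaci \subsetneq \infoaci$, this direction of transfer is not automatic: a max over a smaller family can only be smaller, not larger. The argument works only because the extremal information structure that makes \Cref{thm:lowerboundpurehomo} tight is itself non-degenerate in the sense $b_L < 1/2 \leq b_H$.

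Concretely, I would carry out three steps. First, revisit the proof of \Cref{thm:lowerboundpurehomo}: for an arbitrary deterministic $\aggpure \in \aggwithpred$, that argument exhibits (either directly, or via a Yao-style distributional argument using \Cref{thm:minimax}) an information structure $\pi_{\aggpure} \in \infoaci$, parametrized by $(\mu,k,l)$, with $L(\aggpure, \pi_{\aggpure}) \geq 3 - 2\sqrt{2}$. Second, compute the induced posteriors
\[
b_L = \frac{\mu k}{\mu k + (1-\mu) l}, \qquad b_H = \frac{\mu(1-k)}{\mu(1-k) + (1-\mu)(1-l)},
\]
and verify $b_L < 1/2 \leq b_H$, placing $\pi_{\aggpure}$ inside $\infodaci$. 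Third, conclude
\[
L_{\infodaci}(\aggpure) \;\geq\; L(\aggpure, \pi_{\aggpure}) \;\geq\; 3 - 2\sqrt{2}.
\]

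The only real obstacle is the verification in step two. It is plausible that the extremal construction used for \Cref{thm:lowerboundpurehomo} must already be informative, since an uninformative structure would collapse the omniscient benchmark to the best constant action and kill the gap, so non-degeneracy should hold essentially for free. If some extremal $\pi_{\aggpure}$ sits exactly on the boundary, for instance with $b_L = 1/2$, I would close the argument by a continuity/perturbation step: for a fixed deterministic $\aggpure$, the loss $\pi \mapsto L(\aggpure, \pi)$ depends continuously on $(\mu,k,l)$ away from a measure-zero tie-breaking set, so any sufficiently close $\pi' \in \infodaci$ still attains loss at least $3 - 2\sqrt{2} - \varepsilon$, and taking $\varepsilon \to 0$ recovers the bound via the supremum. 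Beyond this bookkeeping, the corollary is a direct citation of \Cref{thm:lowerboundpurehomo}, which is exactly why the paper remarks that the proof needs ``no changes''.
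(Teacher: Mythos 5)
Your proposal is correct and matches the paper's (implicit) argument: the paper simply observes that the proof of \Cref{thm:lowerboundpurehomo} carries over "with no changes," which is exactly your step two, and the verification goes through since both witnesses there --- $(\mu,k,l)=(1-\sqrt{2}/2,\,0,\,2-\sqrt{2})$ with $(b_L,b_H)=(0,1/2)$ and $(\mu,k,l)=(\sqrt{2}/2,\,3\sqrt{2}-4,\,2\sqrt{2}-2)$ with $(b_L,b_H)=(\sqrt{2}-1,\,\sqrt{2}-1/2)$ --- satisfy $b_L<1/2\leq b_H$ and hence lie in $\infodaci$. Your boundary worry does not materialize (the first witness has $b_H=1/2$ exactly, which \Cref{asp:useful} permits), so the perturbation fallback is unnecessary.
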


\begin{corollary}
\label{prop:thresholdbaddaci}
    $L_{\infodaci}(\puregood)=1/3$.
\end{corollary}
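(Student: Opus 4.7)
\begin{aproof}{proposal for \Cref{prop:thresholdbaddaci}}
My plan is to prove the equality by sandwiching: an easy upper bound from the inclusion $\infodaci \subseteq \infoaci$, and a matching lower bound via an explicit family of information structures satisfying \Cref{asp:useful}.

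\textbf{Upper bound.} Since every information structure in $\infodaci$ is, in particular, a conditionally independent information structure with homogeneous experts, we have $\infodaci \subseteq \infoaci$, and therefore
\[
L_{\infodaci}(\puregood) \;=\; \max_{\pi \in \infodaci} L(\puregood,\pi) \;\leq\; \max_{\pi \in \infoaci} L(\puregood,\pi) \;=\; L_{\infoaci}(\puregood) \;=\; 1/3,
\]
where the last equality is \Cref{prop:thresholdbad}. This step is essentially free.

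\textbf{Lower bound.} The main work is to exhibit a family of information structures $\pi_\varepsilon \in \infodaci$ (that is, conditionally independent, homogeneous, and with $b_L < 1/2 \leq b_H$) for which $L(\puregood,\pi_\varepsilon) \to 1/3$ as $\varepsilon \to 0$. The natural route is to revisit the worst-case instance used in \Cref{prop:thresholdbad} and perturb it so that the non-degeneracy condition holds while the regret is barely affected. Concretely, I would start from a homogeneous instance in which the split case $(a_1,a_2)=(0,1)$ or $(1,0)$ arises and the threshold rule is led to the wrong action by the second-order predictions $p_1,p_2$ (recall that for homogeneous experts with non-degenerate signals, if expert $1$ sees $H$ and expert $2$ sees $L$, then $p_1 = \pi(S_2=H\mid S_1=H)$ and $p_2 = \pi(S_1=H\mid S_2=L)$, which depend only on $k,l,\mu$). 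I would parametrize $(k,l,\mu)$ so that (i) the sum $p_1+p_2$ straddles $1$ on the wrong side, forcing $\puregood$ to deviate from the benchmark whenever signals disagree, and (ii) $b_L < 1/2 \leq b_H$, i.e.\ the ratios of posteriors are bounded away from the degenerate regime used in \Cref{prop:thresholdbad}. A small $\varepsilon$-perturbation of the extremal instance of \Cref{prop:thresholdbad} should suffice: if the original bad instance has $b_L = b_H = $ same side of $1/2$ (the degenerate case), I would shift $k$ and $l$ by $\varepsilon$ so that $b_H$ crosses $1/2$ while keeping $b_L$ strictly below $1/2$, and verify by direct computation that $L(\puregood,\pi_\varepsilon) = 1/3 - O(\varepsilon)$. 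Taking $\varepsilon \to 0$ and using the continuity of $L(\puregood,\cdot)$ on this one-parameter family yields $L_{\infodaci}(\puregood) \geq 1/3$.

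\textbf{Main obstacle.} The only nontrivial step is checking that the perturbation can be chosen so that (a) non-degeneracy holds strictly, and (b) the threshold rule is still misled on the disagreement branch by the predictions $p_1+p_2$. If the extremal instance of \Cref{prop:thresholdbad} is itself already non-degenerate, no perturbation is needed and the lower bound is immediate from restricting that instance to $\infodaci$; I would check this first, as it would collapse the proof to a one-line corollary. Otherwise, the bookkeeping to show that the $O(\varepsilon)$ term is indeed small and of the right sign is the routine but careful part of the argument.
\end{aproof}
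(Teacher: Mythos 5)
Your proposal is correct and takes essentially the same route as the paper, which simply observes that the Section~\ref{sec:homo} results carry over to $\infodaci$ unchanged: the upper bound is the inclusion $\infodaci\subseteq\infoaci$ plus \Cref{prop:thresholdbad}, and the lower bound holds because the extremal instance is already non-degenerate. Indeed, the worst case $(\mu,k,l)=(1/4,0,2/3)$ appearing in Case~7 of the proof of \Cref{prop:threshold} satisfies $b_L=0<1/2\leq b_H=1/2$ and forces the threshold rule to err on both split events (each of probability $1/6$), giving regret $1/3$; so your first branch (``no perturbation needed'') applies and the corollary is immediate.
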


\begin{corollary}\label{prop:upperbounddaci}
    $L_{\infodaci}(\purenaive)=3-2\sqrt{2}$. 
\end{corollary}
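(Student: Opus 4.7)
The plan is to prove the corollary as a direct consequence of two already-established results: \Cref{prop:upperboundhomo}, which states $L_{\infoaci}(\purenaive) = 3 - 2\sqrt{2}$, and \Cref{coro:daci}, which provides the matching lower bound for every deterministic aggregator on $\infodaci$. Since the proof is a two-line sandwich argument, the main task is simply to verify that both bounds apply to $\purenaive$ on the restricted family $\infodaci$.

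For the upper bound, I would note that the non-degenerate signal assumption only shrinks the admissible family, so $\infodaci \subseteq \infoaci$. Consequently, the worst-case loss of $\purenaive$ over $\infodaci$ cannot exceed its worst-case loss over $\infoaci$, giving
\[
L_{\infodaci}(\purenaive) = \max_{\pi \in \infodaci} L(\purenaive, \pi) \leq \max_{\pi \in \infoaci} L(\purenaive, \pi) = 3 - 2\sqrt{2},
\]
where the last equality is \Cref{prop:upperboundhomo}.

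For the lower bound, I would observe that $\purenaive(a_1, a_2) = a_1$ is a deterministic aggregator that ignores the second-order predictions $p_1, p_2$, and hence trivially embeds as a member of $\aggwithpred$. Therefore \Cref{coro:daci} applies directly to $\purenaive$, yielding $L_{\infodaci}(\purenaive) \geq 3 - 2\sqrt{2}$. Combining both inequalities gives the desired equality.

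There is no real obstacle; the only thing worth double-checking is that the worst-case instance implicit in \Cref{coro:daci}'s lower bound lies genuinely in $\infodaci$ (otherwise one could not invoke it against $\purenaive$ here). Since that corollary is already stated with respect to the family $\infodaci$, this is automatic, and no explicit construction of a bad instance for $\purenaive$ satisfying $b_L < 1/2 \leq b_H$ is required within the present proof.
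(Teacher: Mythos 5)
Your proposal is correct and matches the paper's (implicit) argument: the paper derives this corollary by observing that the Section 5 results carry over to $\infodaci$, which amounts to exactly your sandwich --- the upper bound from $\infodaci\subseteq\infoaci$ together with \Cref{prop:upperboundhomo}, and the lower bound from \Cref{coro:daci} (whose witness structures indeed satisfy $b_L<1/2\leq b_H$). No gap.
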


\subsection{Random Aggregators}





We first establish the lower bound for random aggregators in $\Delta(\aggnopred)$. As with $\infoaci$, no random aggregator without second-order information can guarantee a regret less than $3-2\sqrt{2}$ for the worst case over $\infodaci$.

\begin{theorem}\label{thm:lowerboundhomouse}
    For every random aggregator $\aggmix(a_1,a_2)\in \Delta(\aggnopred)$, $L_{\infodaci}(\aggmix)\geq 3-2\sqrt{2}$.
\end{theorem}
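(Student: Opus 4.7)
The plan is to apply Yao's principle (\Cref{thm:minimax}): it suffices to exhibit a distribution $D$ over $\infodaci$ with $\min_{\aggpure \in \aggnopred} \E_{\pi \sim D}[L(\aggpure, \pi)] \geq 3 - 2\sqrt{2}$. Since on any single non-degenerate structure some deterministic rule coincides with the benchmark and therefore has zero loss, $D$ must be supported on at least two structures chosen so that no single rule in $\aggnopred$ performs well across both. My candidate is $D = \tfrac12 \pi_1 + \tfrac12 \pi_2$, where $\pi_1 = (\mu, k, l) = (1 - 1/\sqrt{2},\, 0,\, 2 - \sqrt{2})$ is the extremal instance from the proof of \Cref{prop:upperboundhomo} --- under which the Bayes-optimal benchmark reduces to the AND rule and $\purenaive$ attains loss exactly $3 - 2\sqrt{2}$ --- and $\pi_2$ is its image under the involution swapping $\omega \leftrightarrow 1-\omega$ and $L \leftrightarrow H$, for which the benchmark reduces to the OR rule. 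The exact dual sits on the boundary $b_L = 1/2$ of $\infodaci$, so I instantiate it as $(\mu_2, k_2, l_2) = (1/\sqrt{2},\, \sqrt{2}-1,\, 1-\epsilon)$ and take $\epsilon \to 0^+$.

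The core of the proof is a case enumeration over the $16$ deterministic aggregators in $\aggnopred$. Exploiting $k_1 = 0$ (so no signal profile other than $(H,H)$ occurs under $\omega = 1$), a direct computation at $\pi_1$ gives a single linear formula for every aggregator with outputs $(b_{00}, b_{01}, b_{10}, b_{11})$ on the four recommendation profiles:
\[
    L(\aggpure, \pi_1) \;=\; \alpha\, b_{00} \;+\; \beta\,\bigl(1 + b_{01} + b_{10} - b_{11}\bigr), \qquad \alpha := 3\sqrt{2} - 4,\ \beta := 3 - 2\sqrt{2}.
\]
The involution maps $\aggpure$ to $\aggpure'(a_1, a_2) := 1 - \aggpure(1-a_1, 1-a_2)$ and satisfies $L(\aggpure, \pi_2^\epsilon) \to L(\aggpure', \pi_1)$ as $\epsilon \to 0$, so averaging under $D$ reduces to computations on $\pi_1$ alone. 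Pairing each $\aggpure$ with its dual (and noting that $a_1, a_2, \neg a_1, \neg a_2$ are self-dual), one verifies $\E_D[L(\aggpure, \pi)] \geq \beta$ for every deterministic aggregator, with equality attained on $\{\,\text{AND},\, \text{OR},\, a_1,\, a_2\,\}$.

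The main obstacle I anticipate is the boundary location of the dual: the strict inequality $b_L < 1/2$ in \Cref{asp:useful} excludes the exact dual, so a continuity argument is needed to transfer the bound from the perturbed mixture $D_\epsilon = \tfrac12 \pi_1 + \tfrac12 \pi_2^\epsilon$ into $\infodaci$. Since $L(\aggpure, \cdot)$ is continuous in the information-structure parameters, $\min_{\aggpure \in \aggnopred} \E_{D_\epsilon}[L(\aggpure, \pi)] \to 3 - 2\sqrt{2}$ as $\epsilon \to 0^+$; invoking Yao's principle for each $\epsilon > 0$ and passing to the limit delivers the claim.
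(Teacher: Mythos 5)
Your overall architecture matches the paper's: Yao's principle applied to a uniform mixture of a structure in which one signal is fully revealing and its state-swapped mirror, so that the split inputs $(1,0)$ and $(0,1)$ occur with equal probability under both structures but call for opposite actions. Your explicit linear formula $L(\aggpure,\pi_1)=\alpha b_{00}+\beta(1+b_{01}+b_{10}-b_{11})$ with $\alpha=3\sqrt{2}-4$, $\beta=3-2\sqrt{2}$ is correct (I verified it), and pairing each aggregator with its dual gives $\tfrac12\bigl(L(\aggpure,\pi_1)+L(\aggpure',\pi_1)\bigr)=\tfrac12\bigl((\alpha+3\beta)+(\alpha+\beta)(b_{00}-b_{11})\bigr)\geq\beta$, which is a somewhat more explicit route to the final minimization than the paper's argument (which simply lower-bounds the loss by the contribution of the split inputs, on which any fixed response errs under one of the two structures).

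However, your instantiation of the second structure fails. For $(\mu_2,k_2,l_2)=(1/\sqrt{2},\,\sqrt{2}-1,\,1-\epsilon)$ one computes $b_L=\mu k/(\mu k+(1-\mu)l)=1/(2-\epsilon)>1/2$ for every $\epsilon>0$: decreasing $l$ below $1$ pushes $b_L$ \emph{above} the threshold, not below it. Hence $\pi_2^\epsilon$ violates \Cref{asp:useful} (the expert recommends action $1$ after either signal) and is not in $\infodaci$; worse, under $\pi_2^\epsilon$ no split ever occurs, so an aggregator that observes a split knows it faces $\pi_1$ and the confusion argument collapses. The continuity claim you invoke to pass to the limit is also false at exactly this point: $L(\aggpure,\cdot)$ is discontinuous across $b_L=1/2$ because the experts' recommendations are threshold functions of the posterior. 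The repair is to perturb $k$ downward rather than $l$ downward: taking $(\mu,k,l)=(\sqrt{2}/2,\,\sqrt{2}-1-\epsilon,\,1)$ gives $b_L<1/2\leq b_H=1$, so the structure lies in $\infodaci$, splits occur with probability $\mu k(1-k)\to 3-2\sqrt{2}$ under state $1$, and your computation then goes through as $\epsilon\to 0^+$. This is precisely the perturbation the paper uses.
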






Since the uniform aggregator guarantees a regret of $3-2\sqrt{2}$ for any information structure in $\infoaci$ by \Cref{prop:upperboundhomomixed}, it guarantees at least this regret against all structures in $\infodaci \subset \infoaci$. Thus, the aggregator is also optimal in this setting. 

\begin{corollary}\label{prop:uniformdaci}
    $L_{\infodaci}(\mixnaive)=3-2\sqrt{2}$. 
\end{corollary}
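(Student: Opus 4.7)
The plan is to establish the equality $L_{\infodaci}(\mixnaive) = 3 - 2\sqrt{2}$ by combining the two bounds already available from earlier in the paper, since this corollary is essentially a synthesis step rather than a new technical result.

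For the upper bound $L_{\infodaci}(\mixnaive) \le 3 - 2\sqrt{2}$, I would simply invoke the inclusion $\infodaci \subset \infoaci$: every information structure satisfying \Cref{asp:useful} also satisfies \Cref{asp:homogeneous} (homogeneity is common to both), so taking a maximum of $L(\mixnaive,\pi)$ over the smaller family is bounded above by the maximum over the larger family. By \Cref{prop:upperboundhomomixed}, the latter equals $3-2\sqrt{2}$, giving the desired upper bound immediately.

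For the matching lower bound $L_{\infodaci}(\mixnaive) \ge 3 - 2\sqrt{2}$, I would observe that $\mixnaive$ does not use second-order information, i.e., $\mixnaive \in \Delta(\aggnopred)$. Hence the lower bound established in \Cref{thm:lowerboundhomouse} for every random aggregator in $\Delta(\aggnopred)$ applies to $\mixnaive$, yielding $L_{\infodaci}(\mixnaive) \ge 3 - 2\sqrt{2}$. Combining the two bounds gives equality.

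There is no real obstacle here: both bounds are already available from statements proved earlier, and the only subtlety is confirming that the witness information structure used for \Cref{thm:lowerboundhomouse} indeed satisfies the non-degeneracy condition $b_L < 1/2 \le b_H$, so that it lies in $\infodaci$ rather than merely in $\infoaci$. Since \Cref{thm:lowerboundhomouse} is stated directly over $\infodaci$, this is automatic, and the corollary follows in two lines.
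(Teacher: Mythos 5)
Your proposal is correct and matches the paper's own argument: the upper bound follows from $\infodaci \subset \infoaci$ together with \Cref{prop:upperboundhomomixed}, and the matching lower bound is exactly \Cref{thm:lowerboundhomouse} applied to $\mixnaive \in \Delta(\aggnopred)$. Your extra check that the witness structures for the lower bound lie in $\infodaci$ is indeed automatic, since that theorem is stated over $\infodaci$ directly.
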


We then come to consider aggregators utilizing second-order information, starting by establishing a lower bound, which is slightly smaller than that for random aggregators without second-order information.

\begin{theorem}\label{thm:lowerboundhomose2}
    For every random aggregator $\aggmix(a_1,a_2,p_1,p_2)\in \Delta(\aggwithpred)$, $L_{\infodaci}(\aggmix)\geq 1/6\approx 0.1667$.
\end{theorem}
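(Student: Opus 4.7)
The plan is to invoke Yao's minimax principle (\Cref{thm:minimax}): to lower-bound the worst-case regret of every $\aggmix \in \Delta(\aggwithpred)$ by $1/6$, it suffices to exhibit a prior $D$ over information structures in $\infodaci$ under which every deterministic aggregator $\aggpure \in \aggwithpred$ incurs expected loss at least $1/6$, i.e., $\min_{\aggpure \in \aggwithpred} \E_{\pi \sim D}[L(\aggpure,\pi)] \geq 1/6$.

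Under \Cref{asp:useful}, each expert's recommendation is in bijection with their signal, $a_i = \mathbbm{1}\{s_i = H\}$, and by homogeneity the prediction $p_i$ takes only two values $p(L), p(H)$ depending on $s_i$. Hence an aggregator's observation $(a_1, a_2, p_1, p_2)$ is equivalent to $(s_1, s_2, p(L), p(H))$, and two information structures that share both $(p(L), p(H))$ and the joint distribution of $(s_1, s_2)$ are perfectly indistinguishable to any aggregator in $\aggwithpred$. Choosing a prior $D$ supported on such observationally indistinguishable structures (so $\pi(s_1, s_2)$ is a common quantity across $\mathrm{supp}(D)$), the Bayes risk decomposes as
\[
\min_{\aggpure} \E_{\pi \sim D}[L(\aggpure,\pi)] = \sum_{s_1, s_2} \pi(s_1,s_2)\left( \E_{\pi \sim D}\!\left[\max_a \pi(\omega=a \mid s_1,s_2)\right] - \max_a \E_{\pi \sim D}\!\left[\pi(\omega=a \mid s_1,s_2)\right] \right),
\]
a nonnegative Jensen-type gap at each input, strictly positive whenever the per-structure posteriors on $\omega$ straddle $1/2$.

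I would construct such a $D$ by exploiting the symmetry that simultaneously swaps states and signal labels: the transformation $\mu \mapsto 1-\mu$, $(k,l) \mapsto (1-l, 1-k)$ preserves both $(p(L), p(H))$ and the marginal of $(s_1, s_2)$ but sends $\pi(\omega=1 \mid s_1, s_2) \mapsto 1 - \pi(\omega=1 \mid s_1, s_2)$. Thus any $\pi \in \infodaci$ with $\mu \neq 1/2$ admits a swap companion in $\infodaci$ that is observationally indistinguishable from $\pi$ but whose benchmark's optimal action is flipped on the conflict inputs $(L,H)$ and $(H,L)$. Building $D$ as an appropriately weighted combination of such swap-paired structures (possibly augmented by a $\mu = 1/2$ representative) and tuning the free parameter $(p(L), p(H))$, I would compute the Jensen gap observation-by-observation and verify that the total attains exactly $1/6$.

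The main obstacle is the algebraic optimization: the parameters $(\mu, k, l)$ must simultaneously satisfy the non-degeneracy conditions $b_L < 1/2 \leq b_H$ for every structure in $\mathrm{supp}(D)$, their predictions $(p(L), p(H))$ and marginals $\pi(s_1, s_2)$ must match across the support, and the $D$-averaged posterior on the conflict inputs must fall on or across $1/2$ so that the Jensen gap is maximized rather than vanishing. Extracting the extremal value $1/6$ from the resulting coupled polynomial system, and ruling out any richer support that could push the bound higher within $\infodaci$, is the core technical step; the near-match with the $0.1673$ upper bound in the subsequent construction suggests that the extremal $D$ is essentially canonical, supported on a small symmetric family of information structures.
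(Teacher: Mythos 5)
Your framework is the right one and, in fact, coincides with the paper's: the proof is by Yao's principle (\Cref{thm:minimax}) applied to a uniform prior over a swap-symmetric pair of observationally indistinguishable structures, exactly the pairing $\mu\mapsto 1-\mu$, $(k,l)\mapsto(1-l,1-k)$ you describe. The paper's witness pair is (in the limit) $(\mu,k,l)=(3/4,\,1/3,\,1)$ and its swap companion $(1/4,\,0,\,2/3)$; these share the joint law of $(s_1,s_2)$ and the prediction values $(p(L),p(H))=(1/3,2/3)$, the conflict event $\{s_1\neq s_2\}$ has probability $1/3$ in both, and because $l=1$ (resp.\ $k=0$) the benchmark's action on conflict is deterministically $1$ in the first structure and $0$ in the second, forcing any aggregator to pay $\tfrac12\cdot\tfrac13=\tfrac16$ on average. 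So your Jensen-gap quantity does evaluate to $1/6$ at this pair.

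However, as written the proposal is a plan rather than a proof: the entire quantitative content --- which structures to use and why the gap equals $1/6$ --- is deferred to ``the core technical step,'' and two issues there deserve attention. First, the extremal pair sits on the boundary of \Cref{asp:useful}: for $(\mu,k,l)=(3/4,1/3,1)$ one gets $b_L=\tfrac{\mu k}{\mu k+(1-\mu)l}=1/2$, violating $b_L<1/2$, so the structures must be perturbed by a small $\epsilon$ to lie in $\infodaci$ (the paper uses $k=1/3-8\epsilon/(9-12\epsilon)$, etc.), yielding regret $1/6-O(\epsilon)$ and hence the bound only in the limit $\epsilon\to 0^+$; your optimization must be set up to handle this active constraint rather than expecting an interior maximizer. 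Second, your stated need to ``rule out any richer support that could push the bound higher'' is not part of proving the theorem: any single admissible prior $D$ certifies a lower bound, so exhibiting one pair achieving $1/6$ suffices; maximizing over $D$ would only be relevant to showing $1/6$ is the best bound obtainable by this method, which the theorem does not claim. With the explicit perturbed pair supplied and the probability computation carried out, your argument closes and matches the paper's.
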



To reduce the search space, we now study the characteristics of a robust random aggregator in $\Delta(\aggwithpred)$, aiming to achieve a low regret regarding information structures in $\infodaci$. 
First, we present a lemma showing that when two experts split in recommendation, the expert with recommendation 1 always has a no less prediction value than the other expert. 
\begin{lemma}\label{lem:half}
    Suppose $a_1 = 1$ and $a_2 = 0$, then $p_1 \geq p_2$. 
\end{lemma}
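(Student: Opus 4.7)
The plan is to exploit the non-degenerate signal assumption to rewrite both predictions as concrete conditional signal probabilities, and then use conditional independence plus homogeneity to reduce the inequality $p_1 \geq p_2$ to a simple product of two non-negative factors.

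First I would observe that under \Cref{asp:useful}, $b_L < 1/2 \leq b_H$, so the map from signals to recommendations becomes a bijection: an expert recommends action $1$ iff she observed signal $H$, and action $0$ iff she observed signal $L$. Hence the hypothesis $a_1 = 1,\ a_2 = 0$ is equivalent to $S_1 = H,\ S_2 = L$. Unfolding the definition of the second-order predictions in this regime and using that $\phi(b_2) = 1 \iff S_2 = H$ (and symmetrically for expert 1), I get
\[
 p_1 = \pi(S_2 = H \mid S_1 = H), \qquad p_2 = \pi(S_1 = H \mid S_2 = L).
\]

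Next I would expand each of these by conditioning on the world state $\omega$, invoking conditional independence of $S_1, S_2$ given $\omega$ together with homogeneity ($k_1 = k_2 = k$, $l_1 = l_2 = l$, $b_{1H} = b_{2H} = b_H$, $b_{1L} = b_{2L} = b_L$). This yields
\[
 p_1 = b_H(1-k) + (1-b_H)(1-l), \qquad p_2 = b_L(1-k) + (1-b_L)(1-l),
\]
and therefore
\[
 p_1 - p_2 = (b_H - b_L)\bigl((1-k) - (1-l)\bigr) = (b_H - b_L)(l - k).
\]

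It remains to argue both factors are non-negative. The first factor $b_H - b_L$ is non-negative directly from the posterior ordering $b_L \leq b_H$ fixed in \Cref{sec:setting}. For the second factor, the main (and only) thing to verify carefully is $l \geq k$: this I would deduce by plugging $b_L = \mu k / (\mu k + (1-\mu) l)$ and $b_H = \mu(1-k)/(\mu(1-k) + (1-\mu)(1-l))$ into $b_L \leq b_H$ and simplifying, which reduces to $k(1-l) \leq l(1-k)$, i.e.\ $k \leq l$. Combining the two non-negative factors gives $p_1 \geq p_2$, completing the proof. I do not expect a real obstacle here; the only subtle bookkeeping is confirming the sign $l \geq k$ implied by the convention that $L$ signals lower the posterior on state $1$.
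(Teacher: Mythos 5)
Your proof is correct, and it reaches the same two quantities the paper compares: under the non-degeneracy assumption the expert recommending $1$ saw $H$ and the one recommending $0$ saw $L$, so $p_1=\pi(S_2=H\mid S_1=H)$ and $p_2=\pi(S_1=H\mid S_2=L)$. Where you diverge is in how the inequality is established. The paper writes both predictions as explicit fractions, $p^1=\frac{\mu(1-k)^2+(1-\mu)(1-l)^2}{\mu(1-k)+(1-\mu)(1-l)}$ and $p^0=\frac{\mu k(1-k)+(1-\mu)l(1-l)}{\mu k+(1-\mu)l}$, and simply asserts that $p^1\geq p^0$ ``naturally holds'' with equality at $k=l$; verifying this directly requires cross-multiplying and simplifying to $\mu(1-\mu)(l-k)^2\geq 0$, a step the paper omits. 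You instead condition on $\omega$ and use conditional independence to write each prediction as a convex combination $b\cdot(1-k)+(1-b)\cdot(1-l)$ with weight $b_H$ for the $H$-observer and $b_L$ for the $L$-observer, so the difference factors immediately as $(b_H-b_L)(l-k)$, with both factors non-negative by the posterior ordering and by $k\leq l$ (the paper's Lemma 4.8, which you re-derive correctly). Your decomposition is the cleaner route: it makes the sign of $p_1-p_2$ transparent, supplies the verification the paper glosses over, and exhibits the equality case $k=l$ as a byproduct.
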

Hence, it suffices for us to consider the scenario that $p_1 \geq p_2$ when $a_1 = 1$ and $a_2 = 0$. To add to this, we also have the following results: 
\begin{proposition}\label{prop: optimalagg}
    There exists a random aggregator $\aggmix$ that achieves the lowest regret among all random aggregators in $\Delta(\aggwithpred)$ regarding $\infodaci$ that satisfies the following for any $a_1,a_2\in\{0,1\}$, $p_1,p_2,p\in[0,1]$: 
    \begin{itemize}
         \item[(a)] $\aggmix(1,1,p_1,p_2)=1$ and $\aggmix(0,0,p_1,p_2)=0$. 
         \item[(b)] $\aggmix(a_1,a_2,p_1,p_2)=\aggmix(a_2,a_1,p_2,p_1)$.
         \item[(c)] $\aggmix(a_1,a_2,p_1,p_2)+\aggmix(1-a_1,1-a_2,1-p_1,1-p_2)=1$. 
         \item[(d)] when $a_1 \neq a_2$, $\aggmix(a_1,a_2,p,1-p)=0.5$. 
    \end{itemize}
\end{proposition}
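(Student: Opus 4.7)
The plan is to start from any optimal random aggregator $\aggmix^\star\in\Delta(\aggwithpred)$ for $\infodaci$ and transform it in three stages, each enforcing one of (a)--(c) while preserving the previously established properties and not increasing $L_{\infodaci}$; property (d) will then follow algebraically. I use throughout that $L(\cdot,\pi)$ is linear in the aggregator's output probability at each input, so pointwise resets and convex averages of aggregators compose additively in the loss. For (a), under \Cref{asp:useful}, an input of the form $(1,1,p_1,p_2)$ can only arise when both signals are $H$. Conditional independence, together with the easily-verified consequence $k\le l$ of $b_L\le b_H$, gives $\pi(\omega=1\mid H,H)\ge 1/2$, so the benchmark also outputs $1$; hence resetting $\aggmix^\star(1,1,\cdot,\cdot)$ to the constant $1$ weakly decreases the integrand of $L(\cdot,\pi)$ pointwise for every $\pi\in\infodaci$, and symmetrically for $(0,0,\cdot,\cdot)$. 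Call the result $\aggmix_1$.

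For (b), set $\aggmix_2(a_1,a_2,p_1,p_2)=\tfrac{1}{2}\bigl[\aggmix_1(a_1,a_2,p_1,p_2)+\aggmix_1(a_2,a_1,p_2,p_1)\bigr]$. Because every $\pi\in\infodaci$ is invariant under swapping expert indices (by homogeneity), $L(\aggmix_2,\pi)=L(\aggmix_1,\pi)$ and (b) now holds; (a) is preserved. For (c), set $\aggmix_3(a_1,a_2,p_1,p_2)=\tfrac{1}{2}\bigl[\aggmix_2(a_1,a_2,p_1,p_2)+1-\aggmix_2(1-a_1,1-a_2,1-p_1,1-p_2)\bigr]$. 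A direct change of variables shows that the loss of $g(a_1,a_2,p_1,p_2)=1-\aggmix_2(1-a_1,1-a_2,1-p_1,1-p_2)$ against $\pi$ equals $L(\aggmix_2,\pi^{\mathrm{flip}})$, where $\pi^{\mathrm{flip}}$ is obtained by flipping $\omega$ together with both signals; provided $\pi^{\mathrm{flip}}\in\infodaci$, we obtain $L(\aggmix_3,\pi)\le\max_{\pi'\in\infodaci}L(\aggmix_2,\pi')$, so the worst-case regret does not increase, and (a) and (b) carry over by direct inspection.

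For (d), apply (b) and then (c) to $\aggmix_3$ at $(1,0,p,1-p)$: $\aggmix_3(1,0,p,1-p)=\aggmix_3(0,1,1-p,p)=1-\aggmix_3(1,0,p,1-p)$, forcing the value to $1/2$; the case $(0,1,\cdot,\cdot)$ is identical. The main obstacle is ensuring $\pi^{\mathrm{flip}}\in\infodaci$ in the (c) step: flipping sends $(b_L,b_H)$ to $(1-b_H,1-b_L)$, so the strict inequality $b_L<1/2$ transfers to the flipped structure only when $b_H>1/2$. The boundary case $b_H=1/2$ requires either a perturbation argument (replacing $b_H=1/2$ by $b_H=1/2+\epsilon$ and invoking continuity of $L$ in the parameters of $\pi$) or a direct check that the supremum of regret is attained in the interior. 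Once this boundary issue is handled, the three stages compose to produce the aggregator claimed.
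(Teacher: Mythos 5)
Your three-stage construction — pointwise reset to the consensus action for (a), averaging with the expert-swapped aggregator for (b), averaging with the state-and-signal-flipped aggregator for (c), and deriving (d) algebraically from (b) and (c) — is exactly the paper's argument. The boundary subtlety you flag in the (c) step (when $b_H = 1/2$ the flipped structure has $b_L^{\mathrm{flip}} = 1/2$ and so falls outside $\infodaci$) is a genuine gap that the paper's own proof silently ignores by assuming the flip maps $\infodaci$ to itself, so your explicit perturbation/continuity patch makes the argument more careful than the original.
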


The intuition behind (a) is that when the experts' recommendations agree, the agent straightforwardly takes that action. This follows directly from the information structure definition. (b) means that the agent treats the two experts equally. (c) shows the equivalence of the two states. At last, (d) indicates that when two experts' predictions deviate from each other's true recommendations by the same amount, the aggregator shows no inclination toward either action. 
These three properties are proved by constructing another aggregator for any optimal one with the same regret guarantee, and then linearly combining them.

We now introduce a random aggregator, referred to as the ``bipolar radial aggregator'', that satisfies the criteria in \Cref{prop: optimalagg}. Moreover, $\mixgood$ attains a lower regret over $\infodaci$ compared to random aggregators without second-order information. This shows that predictive knowledge can help agents achieve better performance.

\paragraph{The bipolar radial aggregator.} This aggregator follows the recommendation when the experts agree. When recommendations differ, it treats the experts equally and chooses based on how much their predictions deviate. Specifically, it fixes a center point $(0.6, 0.4)$ on the $p_1$-$p_2$ graph, outputs $0.5$ around this center, and moves toward the extremes as the distance increases. This aggregator tends to trust the expert who predicts the other's action more accurately. Formally, when $a_1=1,a_2=0$, the aggregator is:
\begin{gather*}
    \mixgood(1,0,p_1,p_2)= 
    \begin{cases}
    \min\{1,(p_1-0.6)^2+(p_2-0.4)^2+0.5\},\quad & p_1+p_2< 0.98\\
    \max\{0,0.5-(p_1-0.6)^2-(p_2-0.4)^2\},\quad
    & p_1+p_2>1.02 \\
    0.5,\quad & \text{otherwise} 
    \end{cases} 
\end{gather*}

These parameters are set via experimentation. The case that $a_1=0,a_2=1$ is symmetric. We show the contour graph of the aggregator in the case of $a_1=1,a_2=0$ in \Cref{fig:selfdesign}.

\begin{theorem}\label{prop: upperbounddaci}
     $L_{\infodaci}(\mixgood)\approx 0.1682$. 
\end{theorem}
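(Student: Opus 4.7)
The plan is to reduce the computation of $L_{\infodaci}(\mixgood)$ to a low-dimensional optimization over the parameters of $\pi$, then handle the piecewise definition of $\mixgood$ via case analysis. First, I would parameterize every $\pi \in \infodaci$ by $(\mu, k, l)$ (equivalently by $(\mu, b_L, b_H)$ after applying Bayes' rule) subject to the boundary condition $b_L < 1/2 \leq b_H$ coming from \Cref{asp:useful}. Under homogeneity and non-degeneracy, expert $i$ recommends $0$ on signal $L$ and $1$ on signal $H$; using conditional independence, both predictions depend only on the observed signal:
\[
p_i(L) = b_L(1-k) + (1-b_L)(1-l) =: p_L, \qquad p_i(H) = b_H(1-k) + (1-b_H)(1-l) =: p_H.
\]

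Next, I would enumerate the four possible signal realizations $(s_1,s_2)\in\{L,H\}^2$. For $(L,L)$ and $(H,H)$, property (a) of $\mixgood$ implies the aggregator deterministically agrees with the experts' consensus action, so the loss against the benchmark in these cases is controlled by the usual ``both experts are wrong together'' events. For the mixed cases $(L,H)$ and $(H,L)$, property (b) ensures the aggregator's output is identical (as a probability distribution), and the relevant input is the point $(p_H, p_L) \in [0,1]^2$. I would then classify this point into one of the three regions defined by the sum $p_H + p_L$ and distance from $(0.6, 0.4)$, obtaining a closed-form expression for $\mixgood$'s output in each region.

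With the output computed, I would assemble the loss
\[
L(\mixgood, \pi) = \sum_{s_1,s_2,\omega} \pi(\omega, s_1, s_2)\bigl(\mathbbm{1}\{a^*(s_1,s_2,\pi)=\omega\} - \Pr[\mixgood = \omega \mid s_1, s_2]\bigr)
\]
as an explicit (piecewise) function of $(\mu, k, l)$. The benchmark $a^*$ itself depends on $\pi$ through the Bayes posterior $\pi(\omega=1 \mid s_1, s_2)$, which under conditional independence has a clean expression in terms of $b_L, b_H, \mu$, and switches at the threshold $1/2$; I would further subdivide cases according to which mixed-signal realization puts the posterior on which side of $1/2$.

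Finally, I would maximize the resulting regret over the three-dimensional parameter space. I expect the closed-form symbolic maximum to be intractable due to the quadratic terms $(p_1-0.6)^2 + (p_2-0.4)^2$ in $\mixgood$ and the piecewise branches, so the practical route is to perform a region-by-region analysis: in each region, either take first-order conditions in $(\mu, k, l)$ or appeal to monotonicity along one coordinate while numerically searching along the remaining ones. The maximum regret across all regions equals approximately $0.1682$, and reporting it to the stated precision is a numerical verification. The principal obstacle will be the bookkeeping of the piecewise structure together with the constraint $b_L < 1/2 \leq b_H$ and the monotonicity conditions $0 \leq k, l \leq 1$: one must ensure that the parameter configurations attaining values near $0.1682$ are admissible and that no region harbors a larger local maximum.
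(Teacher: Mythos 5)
Your proposal is correct and follows essentially the same route as the paper: reduce to the disagreement case $(a_1,a_2)=(1,0)$ by symmetry (consensus cases contribute zero regret under the non-degeneracy constraints), split into the three regions of $\mixgood$ determined by $p_1+p_2$, write the regret as a piecewise function of $(\mu,k,l)$ with sub-cases for which side of $1/2$ the benchmark's posterior falls on, and maximize each constrained program numerically (the paper uses Mathematica, obtaining $\approx 0.0841$ per disagreement direction and doubling to $0.1682$).
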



\Cref{prop: upperbounddaci} leaves a gap between the upper and lower bound in the context of worst-case scenarios within $\infodaci$. Although closing this gap is challenging, we can enhance the upper bound by employing a more intricate aggregator derived from the algorithm introduced by \citet{unpub2}, which views robust aggregation as a zero-sum game between nature and the aggregator and enables online learning techniques to solve the game effectively.

\paragraph{The aggregator from the online learning algorithm.} As suggested by \Cref{prop: optimalagg}, this aggregator follows experts' recommendation when they agree. 
When they disagree, the aggregator treats the two experts equally and 
selects an action based on their predictions. Specifically, the algorithm learns an aggregator on discretized points $(p_1, p_2)$ where $p_1$ and $p_2$ are multiples of 0.1, and uses linear interpolation to give the output at other points. 
We present a contour graph of the algorithmic aggregator when $a_1=1$ and $a_2=0$ in \Cref{fig:utility1}.


\begin{figure}[!t]
    \centering 
    \subfigure[The bipolar radial aggregator.]{
        \label{fig:selfdesign}
\includegraphics[height=4cm,keepaspectratio]{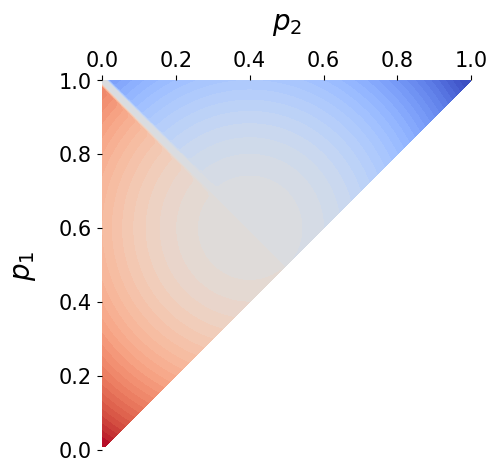}}
    \subfigure[The algorithmic aggregator.]{
        \label{fig:utility1}
        \includegraphics[height=4cm,keepaspectratio]{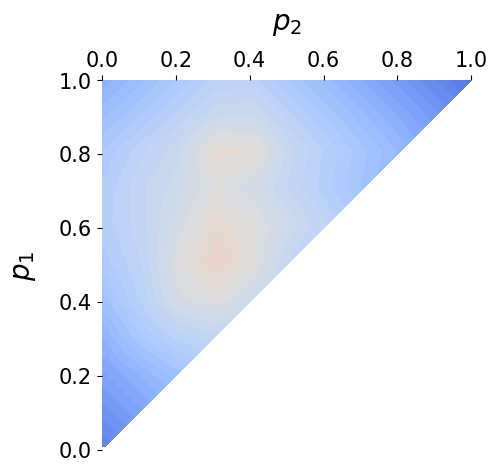}}
    \subfigure[The mirror of the algorithmic aggregator.]{
        \label{fig:utility1_sym}
        \includegraphics[height=4cm,keepaspectratio]{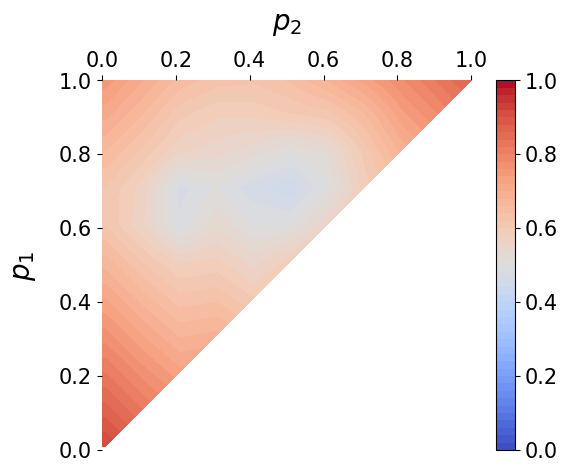}}    
    \caption{Contour graphs of different aggregators $f(1,0,p_1,p_2)$ when the first expert recommends action $1$ and the second expert recommends action $0$. Colder to hotter shades represent the range of $f(1,0,p_1,p_2)$ from $0$ to $1$. $f(0, 1, p_1, p_2) = f(1, 0, p_2, p_1)$. The region where $p_1<p_2$ is not shown in the plot. This is because $p_1<p_2$ is impossible under the assumption of homogeneous experts, as proved in \Cref{lem:half}. In practice, if we obtain reports $p_1 < p_2$, we simply pick an action uniformly at random.}
    \label{Fig1}
\end{figure}
 
\begin{theorem}\label{prop:upperbounddaci2}
 $L_{\infodaci}(\mixalgo)\approx 0.1673$.
\end{theorem}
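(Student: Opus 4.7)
The plan is to certify that $\mixalgo$, the aggregator produced by the online learning procedure of \citet{unpub2}, attains worst-case regret numerically equal to $0.1673$. Since $\mixalgo$ is not given in closed form but only via its values on the $0.1$-spaced grid of $(p_1,p_2)$ together with linear interpolation, the argument is an explicit computation rather than an analytic derivation. The online learning framework casts robust aggregation as a two-player zero-sum game between nature (choosing $\pi\in\infodaci$) and the aggregator, and its output is a primal candidate $\mixalgo$ paired with a dual distribution over information structures; these two objects together sandwich $L_{\infodaci}(\mixalgo)$.

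First I would compress the parameter space of $\infodaci$. Under \Cref{asp:useful}, each information structure is determined by three scalars: the prior $\mu=\pi(\omega=1)$ and the conditional probabilities $k=\pi(S_i=L\mid\omega=1)$ and $l=\pi(S_i=L\mid\omega=0)$, subject to $b_L<1/2\leq b_H$. The benchmark action $a^*(s_1,s_2,\pi)$ and the prediction pair $(p_1,p_2)$ are explicit rational functions of $(\mu,k,l)$. By \Cref{prop: optimalagg} and \Cref{lem:half}, I may assume $\mixalgo$ is symmetric between the experts and between the two states, and that it is only queried at $(p_1,p_2)$ with $p_1\geq p_2$ when $a_1\neq a_2$, which cuts the effective parameter region by roughly a factor of four and allows the algorithmic values to be tabulated on a small grid.

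Next I would compute $L(\mixalgo,\pi)$ on each fixed information structure. Because $\mixalgo$ is piecewise-linear on the unit square of predictions and $(p_1,p_2)$ is a rational function of $(\mu,k,l)$, the loss $L(\mixalgo,\pi)$ is piecewise-rational in $(\mu,k,l)$, with pieces determined by the grid boundaries of $\mixalgo$ together with the threshold events $\{b_L=1/2\}$, $\{b_H=1/2\}$, and $\{\pi(\omega=1\mid s_1,s_2)=1/2\}$. On each cell the loss has a simple closed form and can be maximized either analytically or by a finite sub-grid search equipped with Lipschitz bounds on the derivatives in $(\mu,k,l)$, converting the numerical maximum into a provable upper bound. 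Aggregating over all cells yields the certified upper bound of $0.1673$, matching the value that the primal-dual solver of \citet{unpub2} outputs for this game. Combining with \Cref{thm:lowerboundhomose2} then gives the sandwich $L_{\infodaci}(\mixalgo)\in[1/6,0.1673]\approx[0.1667,0.1673]$ reported in \Cref{table:result}.

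The main obstacle is combinatorial rather than conceptual: there is no closed-form expression to manipulate, so the proof reduces to a careful enumeration of the piecewise-rational pieces of $L$, together with Lipschitz-controlled discretization fine enough to certify the bound to the stated precision. The symbol $\approx$ in the statement reflects that $0.1673$ is the rounded numerical value of the online learning output, not an algebraic constant, and any proof must either accept this numerical certification or derive a matching exact expression (which the online learning procedure does not produce).
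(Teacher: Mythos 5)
Your proposal matches the paper's proof in essence: the paper likewise partitions the prediction square into the $0.1\times 0.1$ grid cells on which the interpolated $\mixalgo$ is bilinear, sets up for each cell two constrained maximization programs over $(\mu,k,l)$ (one for each possible benchmark action), restricts to the $a_1=1, a_2=0$ case by the symmetry of \Cref{prop: optimalagg} and \Cref{lem:half}, and certifies the maximum of $0.1673$ numerically (via Wolfram Mathematica, with the Lipschitz-bounded programs, attained at $\mu=0.763$, $k=0.3106$, $l=1$). The only slight refinement worth noting is that the ``$\approx 0.1673$'' in the statement is witnessed by the maximizing information structure itself rather than by sandwiching against the universal lower bound of $1/6$ from \Cref{thm:lowerboundhomose2}, which bounds all aggregators and would only yield the interval $[1/6, 0.1673]$.
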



The bipolar radial aggregator in \Cref{fig:selfdesign} provides an intuitive and symmetrical way to weigh expert recommendations based on prediction accuracy and satisfies \Cref{prop: optimalagg}. The algorithmic aggregator in \Cref{fig:utility1} always favors action 0 in conflicts, which may seem unintuitive. However, its regret guarantee can be mirrored by an aggregator always favoring action 1 instead. Specifically, according to \Cref{prop: optimalagg} and the linearity of the loss function, defining $\mixalgo^\circ$ as the mirror that flips predictions and outputs $\mixalgo^\circ(a_1, a_2, p_1, p_2) = 1 - \mixalgo(1 - a_2, 1 - a_1, 1 - p_2, 1 - p_1)$ guarantees the same regret $0.1673$. We present the contour graph of this aggregator when $a_1=1$ and $a_2=0$ in \Cref{fig:utility1_sym}. Therefore, to have a low regret, we can either favor action 0 or favor action 1 when the experts disagree, as long as the tendency varies with the predictions. Furthermore, we can observe a common pattern: medium values at the center and extreme values on both sides. These unexpected findings illustrate the intricacy of second-order information's role and complexity in small expert groups.

\section{Extension: General Utility Functions with Homogeneous Experts}\label{sec:extension}

This section extends the setting to general utility functions. We no longer assume that the agent's goal is to match the action with the correct state. Instead, we consider a more general scenario where the agent's utility is determined by both the state and the action taken, captured by a utility function $u: A\times \Omega\rightarrow R$.

Same as the original setting, the two experts will each recommend their preferred action $a_1,a_2$ according to the utility function and provide prediction $p_1,p_2$ about the probability of the other expert recommending action 1. We further assume these two experts are homogeneous (\Cref{asp:homogeneous}). We still explore two layers of information: one where the agent can observe both recommended actions and predictions, and the other where the agent can only observe the recommended actions.

We also assume that $u(0,0)>u(1,0)$ and $u(1,1)>u(0,1)$, otherwise one action is dominated by another and the problem is trivial.
Let the utility gap of two actions when the state is $0$ be $\Delta u_0 \coloneqq u(0,0) - u(1,0)$ and the gap when the state is $1$ be $\Delta u_1 \coloneqq u(1,1) - u(0,1)$. Also, we introduce their ratio as $t \coloneqq \Delta u_1 / \Delta u_0$. Apparently, $t = 1$ in our original setting. 

Note that the recommended action is still a threshold function of the posterior, parameterized by $t$, that is
\begin{align*}
    a_i=\phi^t(b_i)&=\mathbbm{1}\left\{b_i\geq \frac{u(0,0)-u(1,0)}{u(0,0)-u(1,0)+u(1,1)-u(0,1)}\right\} \\
    &=\mathbbm{1}\left\{b_i\geq \frac{1}{t+1}\right\}.
\end{align*}

The action of the benchmark when observing signals $(s_1,s_2)$ regarding information structure $\pi$ should be $a^*\coloneqq \phi^t(\pi(\omega=1 \mid S_1=s_1,S_2=s_2))$. 

We focus on random aggregators. The regret of any random aggregator $\aggmix$ with second-order information regarding information structure $\pi$ can be defined as
\[
L(\aggmix,\pi,u)\coloneqq\E_{\pi,\aggmix(\cdot)}[u(a^*,\omega)-u(\aggmix(a_1,a_2,p_1,p_2),\omega)].
\]
The regret of any random aggregator without second-order information is similar:
\[
L(\aggmix,\pi,u)\coloneqq\E_{\pi,\aggmix(\cdot)}[u(a^*,\omega)-u(\aggmix(a_1,a_2),\omega)].
\]
Here the randomness comes from the information structure $\pi$ and the output of the random aggregator $\aggmix$.




Building upon the preceding results, we start by establishing negative outcomes for possibly degenerate signals. We then focus on non-degenerate signals as \Cref{sec:homoplus} and introduce robust aggregators with second-order information, tailored to different utility ratios, demonstrating that second-order information empowers the agent to achieve lower regret across many utility functions.


Now, we consider a special utility function that satisfies $u^t(0,1)=u^t(1,0)=0$, $u^t(0,0)=1$, and $u^t(1,1)=t$, and define the regret function facing this utility function:
\begin{align*}
     L^t(\aggmix,\pi)&\coloneqq \E_{\pi,\aggmix(\cdot)}[u^t(a^*,\omega)-u^t(\aggmix(a_1,a_2,p_1,p_2),\omega)].\\
     L^t(\aggmix,\pi)&\coloneqq \E_{\pi,\aggmix(\cdot)}[u^t(a^*,\omega)-u^t(\aggmix(a_1,a_2),\omega)].
\end{align*}

We then show that for all general utility functions with the utility gap ratio $t$, the problem is identical to the one with utility function $u^t$.
\begin{proposition}\label{prop:ratio}
    For any utility function $u$ with the utility gap $\Delta u_1,\Delta u_0$ and ratio $t$, $L(\aggmix,\pi,u)=\Delta u_0\cdot L^t(\aggmix,\pi)$ holds for any random aggregator $\aggmix$ and information structure $\pi$.
\end{proposition}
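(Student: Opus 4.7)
The plan is to show that $u$ differs from $\Delta u_0 \cdot u^t$ only by a function of $\omega$ alone, so the state-dependent part cancels in the regret difference. Crucially, the benchmark action $a^*$, the experts' recommendations $a_1, a_2$, and the predictions $p_1, p_2$ all depend on $u$ only through the threshold $1/(t+1)$; hence they are identical under $u$ and under $u^t$, and the only thing left to verify is the scaling of the pointwise regret $u(a^*,\omega) - u(f(\cdot),\omega)$.

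Concretely, I would first observe that for any action $a\in\{0,1\}$ and state $\omega\in\{0,1\}$,
\begin{align*}
u(a,0) &= u(1,0) + (1-a)\,\Delta u_0, \\
u(a,1) &= u(0,1) + a\,\Delta u_1,
\end{align*}
while the canonical utility satisfies $u^t(a,0) = 1-a$ and $u^t(a,1) = a\,t$. Since $\Delta u_1 = t\,\Delta u_0$, both lines can be rewritten as $u(a,\omega) = g(\omega) + \Delta u_0 \cdot u^t(a,\omega)$, where $g(0)\coloneqq u(1,0)$ and $g(1)\coloneqq u(0,1)$. That is, $u$ and $\Delta u_0\cdot u^t$ agree up to an additive function of $\omega$ only.

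Next, I would substitute this decomposition into the regret. For every realization of $\omega$, signals $s_1,s_2$, and the aggregator's randomness,
\[
u(a^*,\omega) - u(f(a_1,a_2,p_1,p_2),\omega) = \Delta u_0 \bigl(u^t(a^*,\omega) - u^t(f(a_1,a_2,p_1,p_2),\omega)\bigr),
\]
because the $g(\omega)$ terms cancel (the state is the same on both sides of the subtraction). Taking the expectation over $\omega$, the signals, and the randomness of $f$ yields $L(\aggmix,\pi,u) = \Delta u_0 \cdot L^t(\aggmix,\pi)$, which is exactly the claim. The argument for the aggregator without second-order information is identical, simply replacing $f(a_1,a_2,p_1,p_2)$ by $f(a_1,a_2)$.

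The main subtlety, and essentially the only thing that needs care, is to check that $a^*$, $a_1$, $a_2$, and $p_1,p_2$ are unchanged when we swap $u$ for $u^t$. This follows because each of these quantities is defined via $\phi^t$, which thresholds the relevant posterior at $1/(t+1)$; since $u$ and $u^t$ share the same $t$, they induce the same threshold and hence the same values. Once this is noted, the rest is pure algebra, so I do not anticipate a genuine obstacle.
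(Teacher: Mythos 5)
Your proof is correct and follows essentially the same route as the paper's (much terser) argument: both rest on the observation that $a^*$, $a_1$, $a_2$, $p_1$, $p_2$ depend on $u$ only through the threshold $1/(t+1)$, hence are unchanged when $u$ is replaced by $u^t$. Your explicit decomposition $u(a,\omega) = g(\omega) + \Delta u_0 \cdot u^t(a,\omega)$ and the cancellation of the $g(\omega)$ terms supply the algebraic step that the paper leaves implicit.
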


Therefore, we only consider the utility functions of this special utility family in the rest of this section.
Under the utility function $u^t$, the regret of random aggregator $\aggmix$ regarding information structure family $P$ is defined as:
\[
L^t_P(\aggmix)\coloneqq\max_{\pi\in P}L^t(\aggmix,\pi).
\]

As suggested, we focus on random aggregators in this part. 
Missing proofs of this section can be found in \Cref{proof:sec7}.

\subsection{A Negative Result: Predictions are Useless in General}

We begin by presenting the lower bound for the regret of any random aggregator equipped with second-order information. Notably, the expressions differ slightly depending on whether $t$ is greater than 1 or less than 1.
\begin{theorem}\label{thm:lowerboundgene1}
    For any $t\geq 1$ and random aggregator $\aggmix(a_1,a_2,p_1,p_2)\in \Delta(\aggwithpred)$, $L^t_{\infoaci}(\aggmix)\geq (\sqrt{1+\nicefrac{1}{t}}-\sqrt{\nicefrac{1}{t}})^2$.
\end{theorem}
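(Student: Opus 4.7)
The plan is to exhibit one specific information structure $\pi^* \in \infoaci$ on which the random aggregator's input is deterministically $(1,1,1,1)$ regardless of the realized signals. Once that is arranged, the aggregator's behavior on $\pi^*$ collapses to a single Bernoulli parameter $q \coloneqq \Pr[\aggmix(1,1,1,1) = 1] \in [0,1]$, and the entire argument reduces to a linear-in-$q$ calculation. This is the same construction strategy as in \Cref{thm:lowerboundhomo} (the $t = 1$ case), but with parameters re-tuned for the $t$-dependent threshold $\phi^t(b) = \mathbbm{1}\{b \geq 1/(t+1)\}$.

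The structure I would use is homogeneous and conditionally independent with prior $\mu^* = 1/\sqrt{t+1}$ and signal parameters $k^* = (\sqrt{t+1}-1)/t$ and $l^* = 1$. Two short checks then suffice: (a) the one-signal posteriors satisfy $b_L = 1/(t+1)$ and $b_H = 1$, so under $\phi^t$ both signals trigger action $1$, forcing $a_1 = a_2 = 1$ and $p_1 = p_2 = 1$ deterministically; and (b) the joint posterior $b_{LL}$ is strictly below $1/(t+1)$, so the omniscient benchmark picks action $0$ only at $(L,L)$ and action $1$ on the remaining three cells (where in fact $b = 1$ thanks to $l^* = 1$). With the input constant, $L^t(\aggmix, \pi^*)$ decomposes linearly as $q R_1 + (1-q) R_0$, where $R_1 \coloneqq (1-\mu^*)(l^*)^2 - t\mu^*(k^*)^2$ and $R_0 \coloneqq t\mu^*(1-(k^*)^2) - (1-\mu^*)(1-(l^*)^2)$ are the regrets of the pure strategies ``always $1$'' and ``always $0$''.

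The final step is to compute $R_1$ and $R_0$ at $(\mu^*, k^*, l^*)$. Substitution yields $R_1 = (\sqrt{t+1}-1)^2/t = (\sqrt{1+\nicefrac{1}{t}} - \sqrt{\nicefrac{1}{t}})^2$, and a clean algebraic identity gives $R_0 - R_1 = (t+1)\mu^* - 1 = \sqrt{t+1} - 1 \geq 0$ whenever $t \geq 1$. Since $R_0 \geq R_1$, the convex combination $q R_1 + (1-q) R_0$ is at least $R_1$ for every $q \in [0,1]$, so $L^t_{\infoaci}(\aggmix) \geq L^t(\aggmix, \pi^*) \geq (\sqrt{1+\nicefrac{1}{t}} - \sqrt{\nicefrac{1}{t}})^2$, as required. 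The main obstacle is locating the extremizing parameters: after fixing $l = 1$ and saturating $b_L = 1/(t+1)$, $R_1$ becomes a one-variable function of $\mu$, and setting its derivative to zero cleanly delivers $\mu^* = 1/\sqrt{t+1}$. Everything else is verification, and no appeal to Yao's principle is needed because the constant-input property already forces the aggregator's hand on $\pi^*$.
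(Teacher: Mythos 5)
Your proposal is correct and matches the paper's own proof essentially exactly: the same single information structure $(\mu,k,l)=(1/\sqrt{t+1},(\sqrt{t+1}-1)/t,1)$, the same observation that the input is constantly $(1,1,1,1)$, and the same regret value $(1-\mu)-t\mu k^2$. The only (welcome) addition is that you explicitly verify $R_0-R_1=\sqrt{t+1}-1\geq 0$, where the paper merely asserts that outputting $1$ is optimal because $\mu>1/(t+1)$.
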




\begin{theorem}\label{thm:lowerboundgene2}
    For any $t\leq 1$ and random aggregator $\aggmix(a_1,a_2,p_1,p_2)\in \Delta(\aggwithpred)$, $L^t_{\infoaci}(\aggmix)\geq (\sqrt{t+t^2}-t)^2$.
\end{theorem}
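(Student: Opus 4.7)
The plan is to adapt the hard-instance construction behind \Cref{thm:lowerboundhomo} (the $t=1$ case) and \Cref{thm:lowerboundgene1} (the $t\ge 1$ case) to the regime $t\le 1$. Since the threshold $1/(t+1)\ge 1/2$ when $t\le 1$, recommending action $0$ is the ``cheap'' option for an expert, so I would construct a single conditionally independent, homogeneous information structure $\pi^\star\in \infoaci$ in which both experts always recommend action $0$, while the omniscient benchmark still plays action $1$ on the $HH$ signal realization. This forces every random aggregator to respond to the same constant input $(0,0,0,0)$, so its behavior on $\pi^\star$ is entirely captured by a single scalar $q\coloneqq\Pr[\aggmix(0,0,0,0)=1]$.

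The concrete parameters I would take are $k_1=k_2=0$, $l_1=l_2=l$ with $l=1+t-\sqrt{t(t+1)}$, and prior $\mu=1-\sqrt{t/(t+1)}$. Writing $y\coloneqq 1-l=\sqrt{t(t+1)}-t$, these satisfy $t\mu=(1-\mu)y$, which places $b_H$ exactly at the threshold $1/(t+1)$. A small perturbation of $\mu$ downward pushes $b_H$ strictly below $1/(t+1)$ while keeping all probabilities valid; by continuity of $L^t(\aggmix,\cdot)$ in the parameters it suffices to analyze $\pi^\star$ itself and take a limit at the end. Because $k=0$, state $\omega=1$ always produces the signal $HH$, so $\pi^\star(\omega=1\mid HH)=1/(1+t(1-l))\ge 1/(t+1)$ and the benchmark plays action $1$ precisely on $HH$. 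Moreover, since both experts always recommend $0$, each expert's forecast of the other is $p_i=0$, so the aggregator's input is indeed the constant $(0,0,0,0)$.

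The remaining step is a short regret calculation. The benchmark's expected utility on $\pi^\star$ is $U_B=(1-\mu)(1-y^2)+t\mu$, and the aggregator with parameter $q$ has expected utility $U_A(q)=qt\mu+(1-q)(1-\mu)$. Subtracting gives
\[
L^t(\aggmix,\pi^\star) \;=\; t\mu - (1-\mu)y^2 + q\bigl((1-\mu)-t\mu\bigr).
\]
I would then combine $t\mu=(1-\mu)y$ with the identity $(1-\mu)(1-y)=y$, which follows from our choice of $\mu$ via the optimality relation $y^2+2ty-t=0$, to collapse the right-hand side to $L^t(\aggmix,\pi^\star)=y(y+q)\ge y^2=(\sqrt{t+t^2}-t)^2$ for every $q\in[0,1]$, which is exactly the stated lower bound.

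The main obstacle is guessing the right adversarial structure. The key insight is that the adversary wants (i) the experts to issue the same, uninformative recommendation so as to hide signal information from the aggregator, and (ii) the benchmark's informed action $a^\star=1$ on $HH$ to remain as valuable as possible. Optimizing over the feasible region collapses to a one-variable problem whose critical point is the positive root of $y^2+2ty-t=0$, namely $y=\sqrt{t(t+1)}-t$, and this immediately produces the stated bound. Once the construction is in place, the regret computation is routine, and the monotonicity of $L^t(\aggmix,\pi^\star)$ in $q$ ensures that no randomization on the input $(0,0,0,0)$ can help the aggregator evade the lower bound.
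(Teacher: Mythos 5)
Your proposal is correct and takes essentially the same route as the paper: the same single hard instance in $\infoaci$ with $k=0$, $\mu=1-\sqrt{t/(t+1)}$, and $1-l=\sqrt{t(t+1)}-t$, forcing the constant input $(0,0,0,0)$ and reducing the aggregator to one scalar $q$, followed by the same regret computation. The only cosmetic difference is that you perturb $\mu$ to break the tie $b_H=1/(t+1)$ while the paper perturbs $l$; your explicit monotonicity-in-$q$ step and the identity $(1-\mu)(1-y)=y$ are a slightly cleaner packaging of the same calculation.
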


The proofs of \Cref{thm:lowerboundgene1,thm:lowerboundgene2} can be easily extended from the proof of \Cref{thm:lowerboundhomo}, so we omit them here and defer them to \Cref{proof:lowerboundgene1,proof:lowerboundgene2}.



\paragraph{The prob-$p$ aggregator.} The prob-$p$ aggregator follows the recommendations of experts when they agree. In cases where they disagree, the aggregator selects action 1 with probability $p$ and action 0 with probability $1-p$. In mathematical terms, the prob-$p$ aggregator can be expressed as 
\[f_p(a_1,a_2)=\begin{cases}a_1 & a_1=a_2\\ p & a_1\neq a_2\end{cases}.\]
Also, this aggregator can be seen as a generalization of the uniform aggregator ($p = 0.5$) we introduced earlier. 
We have the following two positive results. 


\begin{theorem}\label{thm:upperboundgene1}
    For any $t\geq 1$, and $p\in [0.5,(\sqrt{1+\nicefrac{1}{t}}-\sqrt{\nicefrac{1}{t}})^2/(2(\sqrt{t+t^2}-t)^2)]$, $L^t_\infoaci(f_p)=(\sqrt{1+\nicefrac{1}{t}}-\sqrt{\nicefrac{1}{t}})^2$.
\end{theorem}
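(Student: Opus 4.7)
The plan is to establish the upper bound $L^t_{\infoaci}(f_p) \leq (\sqrt{1+\nicefrac{1}{t}} - \sqrt{\nicefrac{1}{t}})^2$; together with the matching lower bound from \Cref{thm:lowerboundgene1} this gives the claimed equality. I would parameterize any $\pi \in \infoaci$ by the prior $\mu = \pi(\omega=1)$ together with the homogeneous conditional signal probabilities $k = \pi(S_i=L \mid \omega=1)$ and $l = \pi(S_i=L \mid \omega=0)$, whose associated posteriors $b_L, b_H$ determine each expert's recommendation through the threshold $1/(t+1)$. The natural case split is by regime: (i) $b_L \geq 1/(t+1)$, so both experts always recommend action $1$; (ii) $b_H < 1/(t+1)$, so both always recommend action $0$; and (iii) $b_L < 1/(t+1) \leq b_H$, the non-degenerate case where recommendations track signals.

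In Regimes (i) and (ii) the aggregator $f_p$ outputs a constant action independent of $p$, so the regret is a $p$-free quantity: the probability-weighted loss suffered on the signal pairs where the benchmark's joint posterior $\pi(\omega=1 \mid s_1,s_2)$ crosses the threshold in the opposite direction from the consensus action. I would write this explicitly as a function of $(\mu, k, l, t)$, note that the worst case is attained on the boundary $b_L = 1/(t+1)$ (resp.\ $b_H = 1/(t+1)$), and verify that the resulting maximum is exactly $(\sqrt{1+\nicefrac{1}{t}} - \sqrt{\nicefrac{1}{t}})^2$. This mirrors the $t=1$ computation behind \Cref{prop:upperboundhomomixed} and coincides with the hard instance used to prove \Cref{thm:lowerboundgene1}. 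In Regime (iii), I would decompose the regret into four signal-pair contributions: the concordant pairs $(L,L)$ and $(H,H)$ yield the same kind of boundary terms as in Regimes (i) and (ii), while the discordant pairs $(L,H)$ and $(H,L)$ contribute a term that is linear in $p$ weighted by the discordant joint probabilities, reflecting how often the stochastic output aligns with the benchmark on those pairs.

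The hypothesis $p \in [0.5,\; (\sqrt{1+\nicefrac{1}{t}} - \sqrt{\nicefrac{1}{t}})^2 / (2(\sqrt{t+t^2} - t)^2)]$ is precisely calibrated so that the linear-in-$p$ discordant contribution, added to the concordant losses, never exceeds the target bound: the lower endpoint $0.5$ places enough mass on action $1$ to avoid amplifying regret on $\omega=1$ states (which pay the larger utility $t$), while the upper endpoint is chosen so that regret on $\omega=0$ states, where \Cref{thm:lowerboundgene2}'s quantity $(\sqrt{t+t^2}-t)^2$ governs the per-unit-$p$ loss, also stays within budget. The main obstacle will be the constrained maximization in Regime (iii), since the objective is non-linear in $(\mu,k,l)$ and the feasible region is cut out by strict inequalities. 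I expect the supremum to be attained on the boundary $b_L \to 1/(t+1)^-$ or $b_H \to 1/(t+1)$, so by continuity Regime (iii) degenerates into Regime (i) or (ii) and inherits the already verified bound. The remaining work is algebraic verification that the two endpoints of the admissible $p$-interval are exactly the thresholds beyond which the discordant contribution would tip one of the two state-specific regrets past the target value.
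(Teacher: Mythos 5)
Your overall plan coincides with the paper's proof: split into the three regimes according to whether $b_L$ and $b_H$ clear the threshold $1/(t+1)$, bound the regret in each regime by an explicit constrained maximization over $(\mu,k,l)$, and combine with the matching lower bound from \Cref{thm:lowerboundgene1}. Two points need repair. First, in Regime (iii) the concordant pairs contribute \emph{zero} regret, not ``the same kind of boundary terms as in Regimes (i) and (ii)'': under $t\mu k\leq(1-\mu)l$ and $t\mu(1-k)\geq(1-\mu)(1-l)$ one checks $\pi(\omega=1\mid L,L)\leq 1/(t+1)\leq\pi(\omega=1\mid H,H)$, so $f_p$ agrees with the benchmark whenever the experts agree. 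This is harmless but should be stated correctly, since otherwise you would be adding a spurious concordant term to the Regime (iii) budget.

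Second, and more seriously, the proposed shortcut for Regime (iii) --- that the supremum is attained as $b_L\to 1/(t+1)$ or $b_H\to 1/(t+1)$, so the regime ``degenerates into Regime (i) or (ii) and inherits the already verified bound by continuity'' --- does not work. The map $\pi\mapsto L^t(f_p,\pi)$ is not continuous across the regime boundary, because the aggregator's behaviour on discordant signal pairs jumps from a deterministic consensus action to the randomization $p$. The correct Regime (iii) bound, obtained by optimizing the two discordant-pair programs separately, is $\max\bigl\{2(1-p)\bigl(\sqrt{1+\nicefrac{1}{t}}-\sqrt{\nicefrac{1}{t}}\bigr)^2,\ 2p(\sqrt{t+t^2}-t)^2\bigr\}$; the factor $2$ comes from the two symmetric discordant pairs $(L,H)$ and $(H,L)$, and the factors $1-p$ and $p$ from the probability that $f_p$ errs in each direction. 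The worst-case parameters happen to sit on the regime boundary, but the regret value there is \emph{not} the Regime (i) or (ii) value --- it carries the extra factor $2(1-p)$ or $2p$. This is exactly where the hypothesis on $p$ enters: $p\geq 0.5$ makes the first term at most $\bigl(\sqrt{1+\nicefrac{1}{t}}-\sqrt{\nicefrac{1}{t}}\bigr)^2$, and the upper endpoint of the interval makes the second term at most the same quantity (using that this quantity dominates $(\sqrt{t+t^2}-t)^2$ when $t\geq 1$). Your final paragraph does gesture at precisely this calibration, so the fix is to drop the continuity argument and carry out the linear-in-$p$ computation you describe there; with that replacement the argument matches the paper's.
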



\begin{theorem}\label{thm:upperboundgene2}
    For any $t\leq 1$, and $p\in [1-(\sqrt{t+t^2}-t)^2/(2(\sqrt{1+\nicefrac{1}{t}}-\sqrt{\nicefrac{1}{t}})^2),0.5]$, $L^t_\infoaci(f_p)=(\sqrt{t+t^2}-t)^2$.
\end{theorem}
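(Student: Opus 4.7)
The plan is to combine two halves. The lower bound $L^t_{\infoaci}(f_p)\geq(\sqrt{t+t^2}-t)^2$ is free from \Cref{thm:lowerboundgene2}, since $f_p\in\Delta(\aggnopred)\subseteq\Delta(\aggwithpred)$. The substantive task is the matching upper bound: I fix an arbitrary $\pi\in\infoaci$ and bound $L^t(f_p,\pi)$ by a case analysis on the positions of $b_L,b_H$ relative to the threshold $1/(t+1)$.

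In case (i) $b_H<1/(t+1)$, both experts recommend $0$ deterministically, so $f_p$ always outputs $0$ and regret arises only from signal $HH$ where the benchmark's joint posterior crosses the threshold. Parameterizing via $\alpha=tq(1-k)$, $\beta=(1-q)(1-l)$ and running the one-variable calculus that appears in the proof of \Cref{thm:lowerboundgene2}, the worst-case loss in this branch is exactly $(\sqrt{t+t^2}-t)^2$. In case (ii) $b_L\geq1/(t+1)$, both experts recommend $1$ and $f_p$ outputs $1$; the dual optimization yields the worst-case loss $(\sqrt{1+\nicefrac{1}{t}}-\sqrt{\nicefrac{1}{t}})^2$, which under the hypothesis $t\leq1$ is bounded by $(\sqrt{t+t^2}-t)^2$.

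The only $p$-sensitive branch is case (iii) with $b_L<1/(t+1)\leq b_H$, where the experts split on signals $LH$ and $HL$. Setting $X\coloneqq tqk(1-k)$ and $Y\coloneqq(1-q)l(1-l)$, a direct calculation of the loss contribution from split signals evaluates to $2(1-p)(X-Y)$ when the benchmark picks action $1$ (equivalently $X\geq Y$) and to $2p(Y-X)$ when it picks $0$. The key observation is that $X-Y$ is strictly increasing in $q$ for fixed $(k,l)$, so its supremum over case (iii) is attained as $q\to l/(tk+l)^-$, i.e., at the case-(ii) boundary $tqk=(1-q)l$; there $X-Y=tkl(l-k)/(tk+l)$, and a one-variable optimization (optimum at $l=1$, $k=(\sqrt{1+t}-1)/t$) evaluates this to $(\sqrt{1+\nicefrac{1}{t}}-\sqrt{\nicefrac{1}{t}})^2$. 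A mirror argument, pushing $q$ to the case-(i) boundary $tq(1-k)=(1-q)(1-l)$ and substituting $k\leftrightarrow 1-l$, $l\leftrightarrow 1-k$, bounds $\sup(Y-X)$ by $(\sqrt{t+t^2}-t)^2$.

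Aggregating the four branches, $L^t_{\infoaci}(f_p)\leq\max\{(\sqrt{t+t^2}-t)^2,\,2(1-p)(\sqrt{1+\nicefrac{1}{t}}-\sqrt{\nicefrac{1}{t}})^2,\,2p(\sqrt{t+t^2}-t)^2\}$. The constraint $p\leq0.5$ kills the third term, and the lower endpoint $p\geq 1-(\sqrt{t+t^2}-t)^2/(2(\sqrt{1+\nicefrac{1}{t}}-\sqrt{\nicefrac{1}{t}})^2)$ kills the second, leaving exactly $(\sqrt{t+t^2}-t)^2$, which matches the lower bound. The main delicacy is the case-(iii) boundary reduction: I must verify that the feasible range of $q$ remains non-empty as $q$ is pushed to the case-(ii) boundary (and symmetrically to the case-(i) boundary), which after simplification reduces to the inequality $k\leq l$ implied by the signal ordering, so no additional obstruction arises.
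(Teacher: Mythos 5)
Your proposal is correct and follows essentially the same route as the paper: the paper proves this theorem by invoking the three-case decomposition from \Cref{thm:upperboundgene1} (both agree, both disagree, split), arriving at exactly your bound $\max\{(\sqrt{t+t^2}-t)^2,\,2(1-p)(\sqrt{1+\nicefrac{1}{t}}-\sqrt{\nicefrac{1}{t}})^2,\,2p(\sqrt{t+t^2}-t)^2\}$ and killing the last two terms with the stated range of $p$, then matching against \Cref{thm:lowerboundgene2}. The only cosmetic difference is inside the split case, where the paper relaxes $X-Y\leq X$ and uses $\mu\leq 1/(tk+1)$ rather than your monotonicity-in-prior boundary reduction; both yield the same optimum $\frac{tk(1-k)}{tk+1}$ at $l=1$, $k=(\sqrt{1+t}-1)/t$.
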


The techniques used to analyze the aggregator's regret in \Cref{thm:upperboundgene1,thm:upperboundgene2} parallel those employed in the proof of \Cref{prop:upperboundhomomixed}. These are deferred to \Cref{proof:upperboundgene1,proof:upperboundgene2}. We also notice from the above two theorems that the uniform aggregator $(p = 0.5)$ guarantees the lowest regret among aggregators without second-order information for any utility function.

\subsection{Non-Degenerate Signals: Predictions are Useful}
Similarly, following the exploration in \Cref{sec:homoplus}, we now extend our analysis to encompass non-degenerate signals. We begin by establishing a lower bound for random aggregators lacking second-order information.

\begin{theorem}\label{thm:lowerboundgene3}
    For every random aggregator $\aggmix(a_1,a_2)\in \Delta(\aggnopred)$, \[L_{\infodaci}(\aggmix)\geq \frac{2\left(\sqrt{1+\nicefrac{1}{t}}-\sqrt{\nicefrac{1}{t}}\right)^2(\sqrt{t+t^2}-t)^2}{(\sqrt{t+t^2}-t)^2+\left(\sqrt{1+\nicefrac{1}{t}}-\sqrt{\nicefrac{1}{t}}\right)^2}.\]
\end{theorem}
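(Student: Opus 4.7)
The plan is to apply Yao's minimax principle (Lemma~\ref{thm:minimax}) with a two-point prior $D=\lambda\,\delta_{\pi_1}+(1-\lambda)\delta_{\pi_2}$ supported on $\infodaci$ and to show that every deterministic aggregator $\aggpure\in\aggnopred$ satisfies $\E_{\pi\sim D}[L^t(\aggpure,\pi)]\geq \tfrac{2xy}{x+y}$, where $x\coloneqq(\sqrt{t+t^2}-t)^2$ and $y\coloneqq(\sqrt{1+\nicefrac{1}{t}}-\sqrt{\nicefrac{1}{t}})^2$. I would first collapse $\aggnopred$: by a straightforward domination argument, any optimal deterministic aggregator satisfies $\aggpure(0,0)=0$ and $\aggpure(1,1)=1$, leaving four candidates distinguished by the outputs on the two disagreement inputs. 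By the $a_1\leftrightarrow a_2$ symmetry of any homogeneous structure, the two follow-an-expert rules produce identical expected loss and moreover $L^t(f_{ffe},\pi)=\tfrac{1}{2}(L^t(f_0,\pi)+L^t(f_1,\pi))$ on any symmetric $\pi$, where $f_0,f_1$ respectively output $0$ and $1$ on disagreement. Hence it suffices to bound the minimum of $\E_D[L^t(f_0,\cdot)]$ and $\E_D[L^t(f_1,\cdot)]$, since the follow-an-expert rule will automatically match their common value.

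Next I would construct the two extremal structures $\pi_1,\pi_2\in\infodaci$. I would tune the homogeneous conditionally independent parameters $(\mu_1,k_1,l_1)$ of $\pi_1$ so that the joint posterior $\pi_1(\omega=1\mid H,L)$ sits just below the threshold $1/(t+1)$; the benchmark then recommends action $0$ on the disagreement inputs, giving $L^t(f_0,\pi_1)=0$, while
\[
L^t(f_1,\pi_1)\;=\;2\bigl[\pi_1(\omega=0,H,L)-t\,\pi_1(\omega=1,H,L)\bigr].
\]
Maximizing this over the non-degenerate region $b_L<1/(t+1)\leq b_H$ together with the case condition $\pi_1(\omega=1\mid H,L)<1/(t+1)$ reduces, after pushing $k_1\to 0$ and working along the non-degeneracy boundary $b_H=1/(t+1)$, to a single-variable optimization in $l_1$ whose optimal value comes out to $2x$. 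A mirror construction (obtained by swapping the roles of signals and states) produces $\pi_2$ with $L^t(f_1,\pi_2)=0$ and $L^t(f_0,\pi_2)=2y$.

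With these extremal values in hand, setting $\lambda\cdot 2x=(1-\lambda)\cdot 2y$ fixes $\lambda=y/(x+y)$ and forces both $\E_D[L^t(f_0,\cdot)]$ and $\E_D[L^t(f_1,\cdot)]$ to equal $\tfrac{2xy}{x+y}$; the follow-an-expert candidate matches this value via the averaging identity above, completing the Yao argument. The main obstacle is the single-variable calculus step: identifying the argmax in $l_1$ and recognizing that the value comes out in the clean closed form $2(\sqrt{t+t^2}-t)^2$ requires careful algebraic manipulation (at $t=1$ the argmax is $l_1^*=2-\sqrt{2}$, $\mu_1^*=1-1/\sqrt{2}$, yielding $2x=6-4\sqrt{2}$, consistent with Theorem~\ref{thm:lowerboundhomouse}). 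The symmetric analysis for $\pi_2$ must be carried out separately since the threshold $1/(t+1)$ is asymmetric around $1/2$ when $t\neq 1$, and it is where the asymmetric dependence on $t$ in the two formulas $(\sqrt{t+t^2}-t)^2$ and $(\sqrt{1+\nicefrac{1}{t}}-\sqrt{\nicefrac{1}{t}})^2$ enters.
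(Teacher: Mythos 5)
Your proposal is correct and follows essentially the same route as the paper's proof: Yao's principle applied to the same two extremal homogeneous structures (one with $b_H=1$ and $b_L$ just below the threshold $1/(t+1)$, one with $b_L=0$ and $b_H$ at the threshold, yielding disagreement losses $2y$ and $2x$ respectively), with the mixing weights balanced in the ratio $y:x$ so that the aggregator is indifferent on the disagreement inputs and incurs the common value $2xy/(x+y)$. The only differences are presentational — you derive the structures via a limiting optimization and make the reduction to four deterministic candidates and the follow-an-expert averaging identity explicit, whereas the paper writes the structures in closed form (with an $\epsilon$-perturbation to keep $b_L$ strictly below the threshold) and treats the indifference argument directly.
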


We proceed to demonstrate that for any utility function, there exists a specific value of $p$ such that the prob-$p$ aggregator attains the lowest regret among all aggregators without second-order information in $\infodaci$.

\begin{theorem}\label{thm:upperboundgene3}
    For any $t$, 
    \[
     L^t_\infodaci(f_p)=\frac{2\left(\sqrt{1+\nicefrac{1}{t}}-\sqrt{\nicefrac{1}{t}}\right)^2(\sqrt{t+t^2}-t)^2}{(\sqrt{t+t^2}-t)^2+\left(\sqrt{1+\nicefrac{1}{t}}-\sqrt{\nicefrac{1}{t}}\right)^2}
    \]
    for 
    \[
    p= \frac{\left(\sqrt{1+\nicefrac{1}{t}}-\sqrt{\nicefrac{1}{t}}\right)^2}{(\sqrt{t+t^2}-t)^2+\left(\sqrt{1+\nicefrac{1}{t}}-\sqrt{\nicefrac{1}{t}}\right)^2}.
    \]
\end{theorem}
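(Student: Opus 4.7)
The plan is to compute $L^t(f_p,\pi)$ explicitly on every $\pi\in\infodaci$ and maximize over $\pi$, then verify the worst case equals the target value at the prescribed $p$. Under \Cref{asp:useful} each expert's recommendation is a deterministic function of their signal, namely $a_i=\mathbbm{1}\{s_i=H\}$, so $f_p$ agrees with both experts when they concur and outputs $1$ with probability $p$ on a mixed signal profile $(L,H)$ or $(H,L)$. Since $k\leq l$, the joint posterior $\pi(\omega=1\mid s_1,s_2)$ is monotone in the signal profile, so $a^*$ follows one of four monotone patterns. I would first rule out the two patterns where the benchmark is constant: if $a^*(H,H)=0$ then $t\mu(1-k)^2<(1-\mu)(1-l)^2$, but the non-degeneracy bound $b_H\geq 1/(t+1)$ together with $1-k\geq 1-l$ forces $t\mu(1-k)^2\geq(1-\mu)(1-l)^2$, a contradiction; the case $a^*(L,L)=1$ is ruled out symmetrically.

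In the two remaining monotone patterns, the benchmark and $f_p$ agree on the ``pure'' profiles $(L,L)$ and $(H,H)$, so all regret comes from the mixed profiles. Expanding the expected utilities gives
\[
L^t(f_p,\pi)=\begin{cases}2p\bigl[(1-\mu)l(1-l)-t\mu k(1-k)\bigr] & \text{case (II):}\ a^*(\text{mixed})=0, \\[2pt] 2(1-p)\bigl[t\mu k(1-k)-(1-\mu)l(1-l)\bigr] & \text{case (III):}\ a^*(\text{mixed})=1.\end{cases}
\]
I would then maximize each bracketed expression over $\infodaci$. In case (II), the non-degeneracy bound $t\mu(1-k)\geq(1-\mu)(1-l)$ is stronger than the case constraint (using $k\leq l$), so it is the only binding inequality; eliminating $\mu$ via tightness and then inspecting the sign of the remaining partial derivatives pushes the optimum to $k=0$. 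Setting $m:=1-l$ reduces the problem to maximizing $tm(1-m)/(t+m)$, whose critical equation $m^2+2tm-t=0$ gives $m^*=\sqrt{t(t+1)}-t$; using $t+m^*=\sqrt{t(t+1)}$, the optimum simplifies to $B_t:=(\sqrt{t+t^2}-t)^2$. A symmetric argument in case (III) (tightening $b_L<1/(t+1)$ at $l=1$ and optimizing over $\mu$) yields $\mu^*=1/\sqrt{t+1}$ and maximum $A_t:=(\sqrt{1+1/t}-\sqrt{1/t})^2$.

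Combining, $L^t_{\infodaci}(f_p)=\max\bigl(2pB_t,\,2(1-p)A_t\bigr)$, and the prescribed $p=A_t/(A_t+B_t)$ equalizes the two terms at $2A_tB_t/(A_t+B_t)$, matching both the stated upper bound and the lower bound in \Cref{thm:lowerboundgene3}. The main obstacle I anticipate is justifying that the optimizer truly lies on the boundary ($k=0$ for case (II), $l=1$ for case (III)): because the non-degeneracy constraint couples the three parameters $\mu,k,l$, naive joint partial derivatives can be misleading, and the cleanest route is to eliminate $\mu$ using tightness first and then reduce to a single-variable calculus in which the boundary is transparent. After this reduction the remaining algebra is routine, but it has to be arranged so that the compact forms $A_t$ and $B_t$ emerge naturally from the critical points.
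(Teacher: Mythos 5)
Your proposal is correct and follows essentially the same route as the paper: the paper's proof simply invokes Case 3 of the proof of \Cref{thm:upperboundgene1}, which contains exactly your argument --- regret arises only on mixed signal profiles, splits into the two subcases according to the benchmark's action there, yields the two maxima $2pB_t$ and $2(1-p)A_t$ with optimizers $k=0$ (resp.\ $l=1$), and is then balanced by the stated $p$ and matched against \Cref{thm:lowerboundgene3}. The only (immaterial) difference is how the inner optimization is solved: the paper uses a chain of inequalities with an AM--GM-type step, while you eliminate $\mu$ via the tight non-degeneracy constraint and reduce to single-variable calculus.
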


Regarding aggregators with second-order information, we demonstrate that they are capable of achieving a lower regret, akin to the scenario of $t = 1$. On this side, we present a series of aggregators derived from the online learning algorithm discussed in \citet{unpub2} for varying $t$ in \Cref{Fig.main2}. To analyze their regrets, we use the same method as we prove \Cref{prop:upperbounddaci2} and present the results in \Cref{table:genefunc}. Here, note that \Cref{lem:half} and \Cref{prop: optimalagg}(a)(b) still work for general functions; thus we only draw the upper-left triangle cases with $a_1 = 1$, $a_2 = 0$, and $p_1 \geq p_2$. The case of $f(0, 1, p_1, p_2)$ with $p_2 \geq p_1$ is symmetric.  


\begin{figure*}[htb]
    \centering
    \subfigure[Aggregator for ratio 0.1]{
        \includegraphics[height=4cm, keepaspectratio]{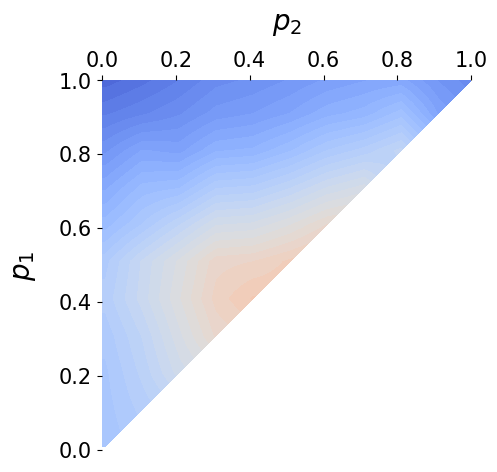}}
    \subfigure[Aggregator for ratio 0.2]{
        \includegraphics[height=4cm, keepaspectratio]{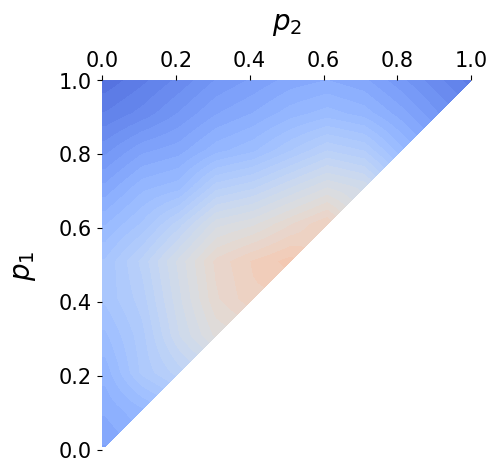}}
     \subfigure[Aggregator for ratio 0.5]{
        \includegraphics[height=4cm, keepaspectratio]{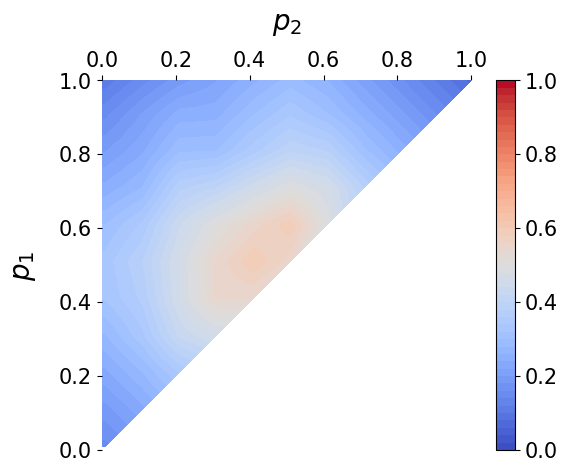}}
        
    \subfigure[Aggregator for ratio 2]{
        \includegraphics[height=4cm, keepaspectratio]{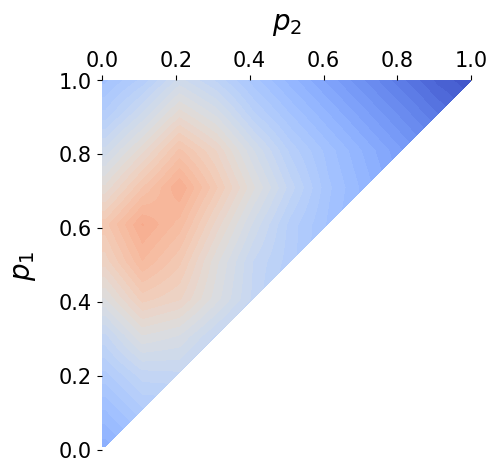}}
    \subfigure[Aggregator for ratio 5]{
        \includegraphics[height=4cm, keepaspectratio]{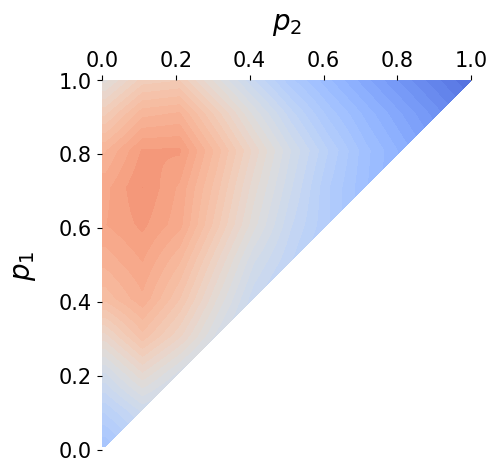}}
    \subfigure[Aggregator for ratio 10]{
        \includegraphics[height=4cm, keepaspectratio]{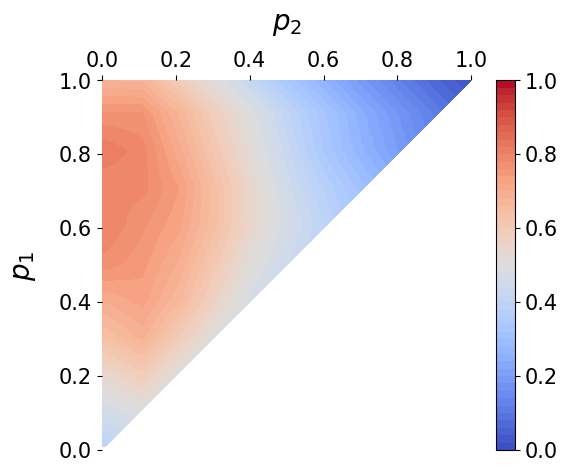}}
          
    \caption{Contour graphs of different aggregators for general utility functions when the first expert recommends $1$ and the second expert recommends $0$. Colder to hotter shades represent the range of $f(1,0,p_1,p_2)$ from $0$ to $1$. $f(0, 1, p_1, p_2) = f(1, 0, p_2, p_1)$. }
    \label{Fig.main2}
\end{figure*}

\begin{table}[htb]
\centering
\begin{tabular}{c p{8em}<{\centering} p{8em}<{\centering}}
    \toprule
    \multirow{2}*{Ratio} & \multicolumn{2}{c}{Regret of aggregators} \\
    \cmidrule{2-3}
     & In \Cref{thm:upperboundgene3} & In \Cref{Fig.main2} \\ 
    \midrule
    0.1 & $0.0330$ & $0.0233$\\
    0.2 & $0.0591$ & $0.0432$\\
    0.5 & $0.1152$ & $0.0956$\\
    1 & $0.1716$ & $0.1673$ \\
    2 & $0.2304$ & $0.1927$ \\
    5 & $0.2954$ & $0.2157$ \\ 
    10 & $0.3300$ & $0.2293$ \\
    \bottomrule
\end{tabular}
\caption{Comparison between the regrets of the best aggregators without second-order information and our aggregators with second-order information.}
\label{table:genefunc}
\end{table}

Our findings reveal that, across all six ratios, aggregators utilizing second-order information exhibit significant improvements over the best aggregators without this additional information. Furthermore, similar to the aggregator for ratio 1 in \Cref{fig:utility1}, their output distribution still spans the range between 0 and a value below 1, and this upper bound increases as the ratio grows. Additionally, the point with the highest output value gradually moves from the center toward the edge that corresponds to the expert with less prediction accuracy when recommending 0. 

We should also notice that when $t \geq 1$, the regret of any aggregator facing the ratio $t$ instance is $t$ times the regret of its mirror (defined in \Cref{sec:homoplus}) facing the ratio $1/t$ instance. Therefore, by \Cref{prop:ratio} given any robust aggregator against ratio $t$, we can naturally construct a robust aggregator against the ratio $1/t$.  
These aggregators collectively highlight the potential of second-order information in reducing regret in worst-case scenarios within $\infodaci$.

\section{Conclusion and Discussion}\label{sec:conclusion}

In this work, we study the benefit of second-order information in decision aggregation with two experts. Specifically, we investigate binary actions, binary states, and binary signals, examining the optimal deterministic and random aggregators, both with and without second-order information. Our research provides insight into the crucial question about the effectiveness of second-order information in mitigating regret, with various underlying assumptions.
Furthermore, we extend our findings to encompass more general utility functions, thereby broadening our results' applicability scope.
Future research directions of this work include conducting real-world experiments, exploring the scenario with more experts, and considering more complex signal settings. 





\newpage

\bibliographystyle{plainnat}

\bibliography{reference}

\newpage
\appendix
\section{Missing Proofs in Section \ref{sec:warmup}}
\label{proof:sec3}
\subsection{Proof of Theorem \ref{thm:lowerboundall}}\label{proof:lowerboundall}

By \Cref{thm:minimax}, to bound the expected regret of any random aggregator in $\Delta(\aggwithpred)$, it suffices to analyze the expected regret of the deterministic aggregator in $\aggwithpred$ relative to the same benchmark, where information structures are drawn from a carefully selected distribution.
We now give a distribution $D\in\Delta(\infoall)$ over two information structures, regarding which any aggregator in $\aggwithpred$ cannot achieve a regret below $1/2$. These two information structures are as follows: 
\begin{center}
\begin{tabular}{ccc}
    \toprule
    $\mu_1=1/2 $& $\pi_1(s_1,s_2\mid \omega=1)$ & $\pi_1(s_1,s_2\mid \omega=0)$ \\ 
    \midrule
    $(L,L)$ &  $0$ & $1/2+\epsilon$   \\
    
    $(L,H)$ & $1/2-\epsilon$ & $0$  \\
    
    $(H,L)$ & $1/2-\epsilon$ & $0$\\
    
    $(H,H)$&$2\epsilon$ &$1/2-\epsilon$\\
    \bottomrule
\end{tabular}
\qquad
\begin{tabular}{ccc}
    \toprule
    $\mu_2=1/2 $& $\pi_2(s_1,s_2\mid \omega=1)$ & $\pi_2(s_1,s_2\mid \omega=0)$ \\ 
    \midrule
    $(L,L)$ & $1/2-\epsilon$   &  $2\epsilon$ \\
    
    $(L,H)$  & $0$ & $1/2-\epsilon$ \\
    
    $(H,L)$ & $0$& $1/2-\epsilon$ \\
    
    $(H,H)$ &$1/2+\epsilon$&$0$\\
    \bottomrule
\end{tabular}
\end{center}
Note that each of them has a marginal distribution 
\[\mu=\frac{1}{2},\pi(s=L \mid \omega=1)=\frac{1}{2}-\epsilon,\pi(s=H \mid \omega=0)=\frac{1}{2}+\epsilon.\] 
Therefore, when observing signal $L$, the expert will recommend action 0; when observing signal $H$, the expert will recommend action 1. In the following discussion, we consider $\epsilon$ as a small positive number less than $1/2$.

In the first information structure, two experts always see the same signal when the real state is 0, which means $(H,H)$ happens with probability $1/2-\epsilon$ and $(L,L)$ happens with probability $1/2+\epsilon$. When the real state is 1, two experts see different signals most of the time. In detail, $(H,L),(L,H)$ each happens with probability $1/2-\epsilon$ and $(H,H)$ happens with probability $2\epsilon$. 

In the second information structure, $(H,H)$ happens with probability $1/2+\epsilon$, and $(L,L)$ happens with probability $1/2-\epsilon$ when the real state is 1. When the real state is 0, $(H,L),(L,H)$ each happens with probability $1/2-\epsilon$, and $(L,L)$ happens with probability $2\epsilon$. 

Notice that inputs $(1,0,1/2+\epsilon,1/2-\epsilon)$ and $(0,1,1/2-\epsilon,1/2+\epsilon)$ each happens with probability $1/4-\epsilon/2$ in both information structures. However, the real state behind these inputs is 1 in the first information structure but 0 in the second. Also, inputs $(1,1,1/2+\epsilon,1/2+\epsilon)$ and $(0,0,1/2-\epsilon,1/2-\epsilon)$ are with real state 0 most of the time in the first information structure and with real state 1 most of the time in the second one.

Now consider the distribution that the real information structure can be the first or the second one with equal probability. The optimal aggregator in $\aggwithpred$ is independent of the specific outputs generated by $(1,0,1/2+\epsilon,1/2-\epsilon)$ and $(0,1,1/2-\epsilon,1/2+\epsilon)$. Also, the optimal aggregator outputs 1 for $(1,1,1/2+\epsilon,1/2+\epsilon)$ and outputs 0 for $(1,1,1/2+\epsilon,1/2+\epsilon)$. However, the benchmark can always identify the more possible state according to the knowledge of the real information structure. Thus, the relative regret of any aggregator in $\aggwithpred$ regarding this distribution of information structures is at least
\[
\frac{1}{2}\times2\times(\frac{1}{4}-\frac{\epsilon}{2})\times (1-0)+2\times \frac{1}{2}\times (\frac{1}{4}-\frac{\epsilon}{2}-\frac{\epsilon}{2})\times (1-0)=\frac{1}{2}-\frac{3\epsilon}{2}.
\]
Since $\epsilon$ can be arbitrarily small, the theorem holds.

\subsection{Proof of Theorem \ref{thm:regretofpurenaive}}\label{proof:regretofpurenaive}

Since the first expert's probability of recommending a bad action is $\min\{\mu,1-\mu\}$ without additional information, and this probability decreases when the signal provides useful state information, the aggregator's regret from always following the first expert is bounded above by $\min\{\mu,1-\mu\}\leq 1/2$. Combining this with the lower bound from \Cref{thm:lowerboundall} establishes that $L_{\infoall}(\purenaive) = 1/2$.

\section{Missing Proofs in Section \ref{sec:hete}}
\label{proof:sec4}
\subsection{Proof of Theorem \ref{thm:lowerboundpureheteno}}\label{proof:lowerboundpureheteno}

We now present two information structures in $\infoci$ and demonstrate that for any deterministic aggregator in $\aggnopred$, there exists an input that will inevitably result in a regret of $0.5$ for one of these information structures, regardless of the output of the deterministic aggregator for that input. These two information structures are as follows: 
\begin{center}
\begin{tabular}{ccc}
    \toprule
    $\mu_1=0.5+\epsilon $& $\pi_1(s=L\mid \omega=1)$ & $\pi_1(s=L\mid \omega=0)$ \\ 
    \midrule
    Expert 1 &  $0.5$ & $0.5$   \\
    Expert 2 & $0$ & $1$  \\
    \bottomrule
\end{tabular}
\qquad
\begin{tabular}{ccc}
    \toprule
    $\mu_2=0.5-\epsilon$ & $\pi_2(s=L\mid \omega=1)$ & $\pi_2(s=L\mid \omega=0)$ \\ 
    \midrule
    Expert 1 & $0$ & $1$ \\
    Expert 2 & $0.5$ & $0.5$ \\
    \bottomrule
\end{tabular}
\end{center}

For the first information structure, we set $(\mu, k_1, k_2, l_1, l_2) = (0.5 + \epsilon, 0.5, 0, 0.5, 1)$, and the posterior is $(b_{1L},b_{2L},b_{1H},b_{2H})=(0.5+\epsilon,0,0.5+\epsilon,1)$. Here, we suppose $\epsilon$ is a small positive number less than $0.5$. In this information structure, the first expert has no information beyond the prior, and thus will always recommend action $1$ regardless of the observed signal. On the other hand, the second expert possesses complete knowledge of the true state.

The second information structure is symmetric to the first one, with $(\mu, k_1, k_2, l_1, l_2) = (0.5 - \epsilon, 0, 0.5, 1, 0.5)$ and posterior $(b_{1L},b_{2L},b_{1H},b_{2H})=(0, 0.5-\epsilon,1,0.5-\epsilon)$. In this scenario, the first expert is omniscient, while the second expert has no information beyond the prior and always recommends action $0$.

Notice that the input $(1,0)$ happens with probability $0.5-\epsilon$ in both information structures. However, the real state is $1$ in the first information structure and $0$ in the second one with input $(1, 0)$. Also, the benchmark always knows the real state from the knowledge of information structure. Therefore, whatever the deterministic aggregator outputs for this input, it always causes a regret of $0.5-\epsilon$ regarding one of the information structures. Since $\epsilon$ can be arbitrarily small, we derive the theorem. 



\subsection{Proof of Theorem \ref{thm:lowerboundpureheteyes}}\label{proof:lowerboundpureheteyes}



We, again, present two information structures in $\infoci$ and demonstrate that for any deterministic aggregator in $\aggwithpred$, there exists an input that will inevitably result in a regret of $1/3$ for one of these information structures.

\begin{center}
\begin{tabular}{p{9em}<{\centering} cc}
\toprule
$\mu_1=(1+2\epsilon)/(3+2\epsilon) $& $\pi_1(s=L\mid \omega=1)$ & $\pi_1(s=L\mid \omega=0)$ \\ 
\midrule
Expert 1 &  $0$ & $0.5+\epsilon$   \\
Expert 2 & $0$ & $1$  \\
\bottomrule
\end{tabular}
\qquad
\begin{tabular}{p{9em}<{\centering} cc}
\toprule
$\mu_2=2/(3+2\epsilon)$ & $\pi_2(s=L\mid \omega=1)$ & $\pi_2(s=L\mid \omega=0)$ \\ 
\midrule
Expert 1 & $0$ & $1$ \\
Expert 2 & $0.5-\epsilon$ & $1$ \\
\bottomrule
\end{tabular}
\end{center}

For the first information structure, we set $(\mu, k_1, k_2, l_1, l_2) = ((1+2\epsilon)/(3+2\epsilon), 0, 0, 0.5 + \epsilon, 1)$, and the posterior is $(b_{1L},b_{2L},b_{1H},b_{2H})=(0,0,0.5+\epsilon,1)$. We suppose $\epsilon \to 0^+$. In this information structure, when observing signal $L$, the first expert possesses complete certainty regarding the state. However, for small $\epsilon$, its level of certainty diminishes significantly when observing signal $H$. On the other hand, the second expert is omniscient.

We construct a symmetric one for the second information structure to the first. Now $(\mu,k_1,k_2,l_1,l_2)=(2/(3+2\epsilon), 0, 0.5-\epsilon ,1,1)$, which leads to the posterior $(b_{1L},b_{2L},b_{1H},b_{2H})=(0.5-\epsilon,0,1,1)$. In this information structure, however, the first expert is always sure about the state. On the other hand, the second expert possesses complete certainty regarding the state when the realized signal is $H$. However, for small $\epsilon$, its level of certainty diminishes significantly when observing signal $L$. 

Notice that the input $(1,0,0.5+\epsilon,0.5-\epsilon)$ happens with $(1-2\epsilon)/(3+2\epsilon)$ in both information structures. However, the real state is $0$ in the first information structure and $1$ in the second one. Also, the benchmark always knows the real state by identifying the omniscient expert. Therefore, whatever the deterministic aggregator outputs for this input, there should be a regret of $(1-2\epsilon)/(3+2\epsilon)$ regarding one of the information structures. We finish the proof of the theorem when $\epsilon \to 0^+$.

\subsection{Proof of Theorem \ref{prop:threshold}}\label{proof:threshold}

Before proving the theorem, we give a lemma to demonstrate specific numeric relationships for information structure parameters, contributing to the proof's brevity.
\begin{lemma}\label{lemma: infopara}
    For any information structure in $\infoci$, $k_i\leq l_i$ for $i=1,2$.
\end{lemma}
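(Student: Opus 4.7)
\begin{aproof}{sketch of \Cref{lemma: infopara}}
The plan is to reduce the inequality $k_i \leq l_i$ directly to the monotonicity condition on posteriors, $b_{iL} \leq b_{iH}$, which is built into the problem statement (an $L$ signal indicates a lower likelihood of state $1$ than an $H$ signal). Nothing about the conditional independence structure $\infoci$ should actually be needed here; the inequality is a property of any single expert's signal-state joint distribution.

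Concretely, I would apply Bayes' rule to rewrite the posteriors in terms of the signal-conditional probabilities and the prior $\mu$:
\[
b_{iL} = \frac{k_i\,\mu}{k_i\,\mu + l_i\,(1-\mu)}, \qquad b_{iH} = \frac{(1-k_i)\,\mu}{(1-k_i)\,\mu + (1-l_i)\,(1-\mu)}.
\]
Then I would substitute these expressions into $b_{iL} \leq b_{iH}$, clear denominators (they are nonnegative, and I handle the degenerate zero cases separately below), and cancel the common term $k_i(1-k_i)\mu(1-\mu)$ that appears on both sides of the resulting cross-multiplied inequality. What remains simplifies to $k_i(1-l_i)(1-\mu) \leq (1-k_i) l_i (1-\mu)$, and a further expansion cancels the $k_i l_i$ terms to yield $k_i \leq l_i$, as desired.

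The only subtlety, and the closest thing to an obstacle, is dealing with boundary priors and degenerate signals, where one or both denominators above could vanish. If $\mu \in \{0,1\}$, the posteriors are pinned at $\mu$ regardless of the signal, so the monotonicity is vacuous; in that case I would argue the conclusion $k_i \le l_i$ can be taken without loss of generality by perturbing $\mu$ into the open interval and passing to a limit (or by noting that any choice of $k_i, l_i$ is consistent with the model and we may relabel the signals). Similarly, if a signal realization has zero marginal probability, the corresponding posterior is unconstrained and we may choose it to satisfy monotonicity. After these checks the algebraic argument above carries through uniformly, giving $k_i \leq l_i$ for $i=1,2$.
\end{aproof}
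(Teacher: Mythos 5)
Your proposal is correct and follows essentially the same route as the paper: both reduce $k_i \leq l_i$ to the posterior-monotonicity assumption via Bayes' rule and elementary algebra, and neither uses conditional independence. The only cosmetic difference is that the paper compares each posterior to the prior (using $b_{iL} \leq \mu \leq b_{iH}$, which makes the cancellation a one-liner), whereas you cross-multiply $b_{iL} \leq b_{iH}$ directly; these are equivalent since $\mu$ is a convex combination of the two posteriors.
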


\begin{proof}[Proof of \Cref{lemma: infopara}]
    To prove the lemma, we first recall the key parameters of an information structure as follows: 
\begin{gather*}
    \mu = \pi(\omega = 1); \\
    k_1 = \pi(S_1 = L \mid \omega = 1), \quad l_1 = \pi(S_1 = L \mid \omega = 0); \\
    k_2 = \pi(S_2 = L \mid \omega = 1), \quad l_2 = \pi(S_2 = L \mid \omega = 0). 
\end{gather*} 
Now, by our notations, $\pi(\omega = 1 \mid S_i = L) \leq \mu \leq \pi(\omega = 1 \mid S_i = H)$ for both $i = 1, 2$.
Under the above notations, that is to say
\[\frac{\mu k_i}{\mu k_i + (1 - \mu)l_i} \leq \mu \leq \frac{\mu (1 - k_i)}{\mu (1 - k_i) + (1 - \mu)(1 - l_i)} \Longleftrightarrow k_i \leq l_i, \quad \forall i = 1, 2. \]
\end{proof}

We come back to our proof. Recall the definition of regret for any information structure $\pi$: 
\begin{align*}
   L(f,\pi) &= \sum_{\omega\in\Omega, s_1,s_2\in S} \pi(\omega, s_1, s_2) (u(\phi(\pi(\omega = 1 \mid s_1, s_2)),\omega) 
   - f(a_1(s_1), a_2(s_2), p_1(s_1), p_2(s_2))u(1,\omega)\\
   &\enspace -(1-f(a_1(s_1), a_2(s_2), p_1(s_1), p_2(s_2)))u(0,\omega)).  
\end{align*}

To compute the maximum regret of $\puregood$, we now consider all possible seven different cases of $\pi$. Here, notice that $k_1 \leq l_1$ and $k_2 \leq l_2$ hold due to \Cref{lemma: infopara}. In each of the cases, the corresponding program with strictly bounded Lipschitz continuity is solved by Wolfram Mathematica. This is the same for other positive results with similar proof. 

\paragraph{Case 1: $\mu k_1 > (1 - \mu) l_1, \mu k_2 > (1 - \mu)l_2$. }
In this case, we have $a_i(s) = 1$ for all $i = 1, 2$ and $s= L, H$, and the aggregator always chooses action 1. Meanwhile, we notice in this case that
\begin{align*}
    \pi(\omega = 1 \mid S_1 = L, S_2 = H)&= \frac{\mu k_1 (1 - k_2)}{\mu k_1 (1 - k_2) + (1 - \mu) l_1 (1 - l_2)} \geq \frac{1}{2}, \\
    \pi(\omega = 1 \mid S_1 = H, S_2 = L) &= \frac{\mu (1 - k_1) k_2}{\mu (1 - k_1) k_2 + (1 - \mu) (1 - l_1) l_2} \geq \frac{1}{2}, \\
    \pi(\omega = 1 \mid S_1 = H, S_2 = H)&= \frac{\mu (1 - k_1) (1 - k_2)}{\mu (1 - k_1) (1 - k_2) + (1 - \mu) (1 - l_1) (1 - l_2)} \geq \frac{1}{2}. \\
\end{align*}
Therefore, the only possible difference between the threshold aggregator and the benchmark is under the condition that $S_1 = S_2 = L$, and the regret is bounded by the following program: 
\begin{gather*}
    \max \quad -\mu k_1 k_2 + (1 - \mu) l_1 l_2, \\
    {\rm s.t.} \quad \mu k_1 k_2 \leq (1 - \mu) l_1 l_2, \\
    \mu k_1 \geq (1 - \mu) l_1, \mu k_2 \geq (1 - \mu)l_2, \\
    0 \leq k_1 \leq l_1 \leq 1, 0 \leq k_2 \leq l_2 \leq 1, 0 \leq \mu \leq 1. 
\end{gather*}
And the optimum of the above is $3 - 2\sqrt{2}\approx 0.1716$, which is reached when $\mu = \sqrt{2}/2, k_1 = k_2 = \sqrt{2} - 1, l_1 = l_2 = 1$. 

\paragraph{Case 2: $\mu (1 - k_1) < (1 - \mu) (1 - l_1), \mu (1 - k_2) < (1 - \mu)(1 - l_2)$. }
This case is equivalent to Case 1 by substituting $\mu$ with $1 - \mu$ and $k_i$ with $1 - l_i$ for $i = 1, 2$. Thus, they have the same optimum $3 - 2\sqrt{2}$. 

\paragraph{Case 3: $\mu k_1 \leq (1 - \mu) l_1, \mu (1 - k_1) \geq (1 - \mu) (1 - l_1), \mu (1 - k_2) < (1 - \mu)(1 - l_2)$. }
In this case, we have $a_1(L) = 0, a_1(H) = 1$, and $a_2(L) = a_2(H) = 0$. We also observe that
\begin{align*}
    \pi(\omega = 1 \mid S_1 = L, S_2 = L) &= \frac{\mu k_1 k_2}{\mu k_1 k_2 + (1 - \mu) l_1 l_2} \leq \frac{1}{2}.\\
    \pi(\omega=1\mid S_1=L, S_2=H)
    &=\frac{\mu k_1(1-k_2)}{\mu k_1(1-k_2)+(1-\mu)l_1(1-l_2)}\leq \frac{1}{2}.
\end{align*}
Thus, the threshold aggregator matches the benchmark when $S_1 = L$. On the other hand, when $S_1 = H$, there is a split between two experts, with peer predictions $p_1 = 0, p_2 < 1$. Hence, our threshold aggregator would choose action 1. We further have
\begin{align*}
    \pi(\omega = 1 \mid S_1 = H, S_2 = H)&= \frac{\mu (1 - k_1) (1-k_2)}{\mu (1 - k_1) (1-k_2) + (1 - \mu) (1 - l_1) (1-l_2)} \geq \frac{1}{2},
\end{align*}
As a result, the regret only comes from the circumstance that $S_1 = H, S_2 = L$, and the regret is bounded by the following program:
\begin{gather*}
    \max \quad -\mu (1-k_1)k_2 + (1 - \mu) (1-l_1) l_2, \\
    {\rm s.t.} \quad \mu (1-k_1) k_2 \leq (1 - \mu) (1-l_1)l_2, \\
    \mu k_1 \leq (1 - \mu) l_1, \mu (1 - k_1) \geq (1 - \mu) (1 - l_1), \mu (1 - k_2) \leq (1 - \mu)(1 - l_2), \\
    0 \leq k_1 \leq l_1 \leq 1, 0 \leq k_2 \leq l_2 \leq 1, 0 \leq \mu \leq 1. 
\end{gather*}
The above program has a maximum value of $3 - 2\sqrt{2}$ when $\mu = 1-\sqrt{2}/2, k_1 = k_2 = 0, l_1 = l_2 = 2-\sqrt{2}$. 

\paragraph{Case 4: $\mu k_1 \leq (1 - \mu) l_1, \mu (1 - k_1) \geq (1 - \mu) (1 - l_1), \mu k_2 > (1 - \mu)l_2$. }
This case is equivalent to Case 3 only by substituting $\mu$ with $1-\mu$, $k_i$ with $1 - l_i$ for $i=1,2$, and they have the same optimum $3 - 2\sqrt{2}$. 

\paragraph{Case 5: $\mu (1 - k_1) < (1 - \mu)(1 - l_1), \mu k_2 \leq (1 - \mu) l_2, \mu (1 - k_2) \geq (1 - \mu) (1 - l_2)$. }
This case is equivalent to Case 3 by swapping the order of two experts, and we omit it. 

\paragraph{Case 6: $\mu k_1 > (1 - \mu)l_1, \mu k_2 \leq (1 - \mu) l_2, \mu (1 - k_2) \geq (1 - \mu) (1 - l_2)$. }
This case is equivalent to Case 4 by swapping the order of two experts, and we omit it. 

\paragraph{Case 7: $\mu k_1 \leq (1 - \mu) l_1, \mu (1 - k_1) \geq (1 - \mu) (1 - l_1), \mu k_2 \leq (1 - \mu) l_2, \mu (1 - k_2) \geq (1 - \mu) (1 - l_2)$. }
This case is the most complex one. We have $a_1(L) = a_2(L) = 0$ and $a_1(H) = a_2(H) = 1$. Meanwhile, 
\begin{align*}
    \pi(\omega = 1 \mid S_1 = L, S_2 = L) &= \frac{\mu k_1 k_2}{\mu k_1 k_2 + (1 - \mu) l_1 l_2} \leq \frac{1}{2}, \\
    \pi(\omega = 1 \mid S_1 = H, S_2 = H) &= \frac{\mu (1 - k_1) (1 - k_2)}{\mu (1 - k_1) (1 - k_2) + (1 - \mu) (1 - l_1) (1 - l_2)} \geq \frac{1}{2}. 
\end{align*}
Therefore, it is guaranteed that the threshold aggregator is identical to the benchmark when $S_1 = S_2$. We now consider the case when $S_1 \neq S_2$, i.e., there is a split between the two experts. We now formalize their predictions: 
\begin{gather*}
    p_1^+ \coloneqq \pi(a_2 = 1 \mid S_1 = L) = \pi(S_2 = H \mid S_1 = L) = \frac{\mu k_1 (1 - k_2) + (1 - \mu) l_1 (1 - l_2)}{\mu k_1 + (1 - \mu) l_1}, \\
    p_1^- \coloneqq \pi(a_2 = 1 \mid S_1 = H) = \frac{\mu (1 - k_1) (1 - k_2) + (1 - \mu) (1 - l_1) (1 - l_2)}{\mu (1 - k_1) + (1 - \mu) (1 - l_1)}, \\
    p_2^+ \coloneqq \pi(a_1 = 1 \mid S_2 = H) = \frac{\mu (1 - k_1) (1 - k_2) + (1 - \mu) (1 - l_1) (1 - l_2)}{\mu (1 - k_2) + (1 - \mu) (1 - l_2)}, \\
    p_2^- \coloneqq \pi(a_1 = 1 \mid S_2 = L) = \frac{\mu (1 - k_1) k_2 + (1 - \mu) (1 - l_1) l_2}{\mu k_2 + (1 - \mu) l_2}. \\
\end{gather*}

Now, when $p_1^+ + p_2^+ > 1$ and $p_1^- + p_2^- > 1$ both hold, the threshold aggregator outputs action 0 when two experts split, and the regret is bounded in this sub-case by
\begin{gather*}
    \max \quad (\mu k_1 (1 - k_2) - (1 - \mu) l_1 (1 - l_2))^+ + (\mu (1 - k_1) k_2 - (1 - \mu) (1 - l_1) l_2)^+, \\
    {\rm s.t.} \quad p_1^+ + p_2^+ \geq 1, p_1^- + p_2^- \geq 1, \\
    \mu k_1 \leq (1 - \mu) l_1, \mu (1 - k_1) \geq (1 - \mu) (1 - l_1), \\
    \mu k_2 \leq (1 - \mu) l_2, \mu (1 - k_2) \geq (1 - \mu) (1 - l_2), \\
    0 \leq k_1 \leq l_1 \leq 1, 0 \leq k_2 \leq l_2 \leq 1, 0 \leq \mu \leq 1. 
\end{gather*}
The above reaches the optimum of $1/3$ when $\mu = 2/3, k_1 = 0, k_2 = 0.5, l_1 = l_2 = 1$ or $\mu = 3/4, k_1 = k_2 = 1/3, l_1 = l_2 = 1$. 

Conversely, when $p_1^+ + p_2^+ \leq 1$ and $p_1^- + p_2^- \leq 1$ both hold, the threshold aggregator outputs action 1 with split decisions, and the regret is bounded by
\begin{gather*}
    \max \quad (- \mu k_1 (1 - k_2) + (1 - \mu) l_1 (1 - l_2))^+ + (-\mu (1 - k_1) k_2 + (1 - \mu) (1 - l_1) l_2)^+, \\
    {\rm s.t.} \quad p_1^+ + p_2^+ \leq 1, p_1^- + p_2^- \leq 1, \\
    \mu k_1 \leq (1 - \mu) l_1, \mu (1 - k_1) \geq (1 - \mu) (1 - l_1), \\
    \mu k_2 \leq (1 - \mu) l_2, \mu (1 - k_2) \geq (1 - \mu) (1 - l_2), \\
    0 \leq k_1 \leq l_1 \leq 1, 0 \leq k_2 \leq l_2 \leq 1, 0 \leq \mu \leq 1. 
\end{gather*}
This program has a maximum of $1/3$ when $\mu = 1/4, k_1 = k_2 = 0, l_1 = l_2 = 2/3$ or $\mu = 1/3, k_1 = k_2 = 0, l_1 = 1, l_2 = 0.5$. 

For the rest cases, we derive by symbolic computation that: 
\begin{align*}
    p_1^+ + p_2^+ - 1 &= \frac{(\mu k_1 (1 - k_2) + (1 - \mu) l_1 (1 - l_2))((1 - l_1 - l_2)(1 - \mu) + (1 - k_1 - k_2)\mu)}{(\mu k_1 + (1 - \mu) l_1)(\mu (1 - k_2) + (1 - \mu) (1 - l_2))}, \\
    p_1^- + p_2^- - 1 &= \frac{(\mu (1 - k_1) k_2 + (1 - \mu) (1 - l_1) l_2)((1 - l_1 - l_2)(1 - \mu) + (1 - k_1 - k_2)\mu)}{(\mu (1 - k_1) + (1 - \mu) (1 - l_1))(\mu k_2 + (1 - \mu) l_2)}. 
\end{align*}

Thus, we know that $(p_1^+ + p_2^+ - 1)(p_1^- + p_2^- - 1) \geq 0$, and we are only left with the cases that $p_1^+ + p_2^+ = 1, p_1^- + p_2^- > 1$ and $p_1^+ + p_2^+ > 1, p_1^- + p_2^- = 1$, which are symmetric. We consider the first case, under which our algorithm outputs action $1$ when $S_1 = L$ and $S_2 = H$, and outputs action $0$ when $S_1 = H$ and $S_2 = L$. We have
\begin{align*}
    p_1^+ + p_2^+ = 1 \Longleftrightarrow (l_1(1 - l_2)(1 - \mu) + k_1(1 - k_2)\mu)((1 - l_1 - l_2)(1 - \mu) + (1 - k_1 - k_2)\mu) = 0. 
\end{align*}
Now that $0 < \mu < 1$, we have $l_1 > 0$ and $k_2 < 1$, and there are two remaining possibilities:
\begin{itemize}
    \item[1.] $k_1 = 0, l_2 = 1$. In this case, the program's upper bound becomes:
    \begin{gather*}
        \max \quad (\mu k_2 - (1 - \mu) (1 - l_1))^+, \\
        {\rm s.t.} \quad p_1^- + p_2^- \geq 1, \\
        \mu \geq (1 - \mu) (1 - l_1), \mu k_2 \leq 1 - \mu, \\
        0 \leq l_1 \leq 1, 0 \leq k_2 \leq 1, 0 \leq \mu \leq 1. 
    \end{gather*}
    And the optimum value is $1/3$, reached at $\mu = 2/3, k_2 = 0.5, l_1 = 1$. 
    \item[2.] $(1 - l_1 - l_2)(1 - \mu) + (1 - k_1 - k_2)\mu = 0$. Under this condition, we achieve that $p_1^- + p_2^- = 1$ holds as well, which is a contradiction. 
\end{itemize}
Thus, we conclude for Case 7 that the maximum regret is $1/3$. 

Synthesizing all 7 cases, we achieve that $L(\puregood) \leq 1/3$. Combining with the lower bound in \Cref{thm:lowerboundpureheteyes}, we finish the proof of $L(\puregood) = 1/3$.

\subsection{Proof of Theorem \ref{thm:lowerboundmixhete}}\label{proof:lowerboundmixhete}



We also prove by \Cref{thm:minimax}. To bound the expected regret of any random aggregator in $\Delta(\aggwithpred)$, it suffices to analyze the expected regret of the deterministic aggregator in $\aggwithpred$ relative to the same benchmark, where information structures are drawn from a carefully selected distribution.

We then give a distribution $D\in\Delta(\infoci)$ over two conditionally independent information structures, regarding which any aggregator in $\aggwithpred$ cannot achieve a regret below $1/4$.

\begin{center}
\begin{tabular}{ccc}
    \toprule
    $\mu_1=0.5 $& $\pi_1(s=L\mid \omega=1)$ & $\pi_1(s=L\mid \omega=0)$ \\ 
    \midrule
    Expert 1 &  $0$ & $1$   \\
    Expert 2 & $0.5-\epsilon$ & $0.5+\epsilon$  \\
    \bottomrule
\end{tabular}
\qquad
\begin{tabular}{ccc}
    \toprule
    $\mu_2=0.5$ & $\pi_2(s=L\mid \omega=1)$ & $\pi_2(s=L\mid \omega=0)$ \\ 
    \midrule
    Expert 1 & $0.5-\epsilon$ & $0.5+\epsilon$ \\
    Expert 2 & $0$ & $1$ \\
    \bottomrule
\end{tabular}
\end{center}

The first information structure has parameters $(\mu,k_1,k_2,l_1,l_2)=(0.5, 0, 0.5-\epsilon,1,0.5 + \epsilon)$ and leads to the posterior $(b_{1L},b_{2L},b_{1H},b_{2H})=(0,0.5-\epsilon,1,0.5+\epsilon)$. Again, we suppose $\epsilon$ is a small positive number that is less than $0.5$.
In this information structure, the first expert is omniscient since it can determine the real state completely from the signal it received, while the second expert is nearly ignorant when $\epsilon$ is close to 0.

The second information structure is symmetric with the first information structure, with $(\mu,k_1,k_2,l_1,l_2)=(0.5, 0.5-\epsilon, 0, 0.5 + \epsilon, 1)$ and posteriors $(b_{1L},b_{2L},b_{1H},b_{2H})=(0.5-\epsilon,0,0.5+\epsilon,1)$. 
In this information structure, the first expert is nearly ignorant when $\epsilon$ is close to 0, and the second expert is omniscient.

Now consider the distribution that the real information can be the first or the second one with equal probability, which means the more informed expert is chosen to be expert 1 or expert 2 with equal probability.
Consider the case when two experts receive different signals. 
When $(s_1,s_2)=(L,H)$, the agent always observes the input $(a_1,a_2,p_1,p_2)=(0,1,0.5-\epsilon,0.5+\epsilon)$ regardless of the real information structure;
and when $(s_1,s_2)=(H,L)$, the agent always observes $(a_1,a_2,p_1,p_2)=(1,0,0.5+\epsilon,0.5-\epsilon)$.
Since the agent does not know who the omniscient expert is, The optimal aggregator in $\aggwithpred$ is independent of the specific outputs generated by these inputs.
However, the benchmark can always identify the omniscient expert according to the knowledge of the real information structure.
Thus, the relative regret of any aggregator in $\aggwithpred$ regarding this distribution of information structures is at least
\[
    2\times\frac{1}{2}\times\frac{1}{2}\times(\frac{1}{2}-\epsilon)\times(1-0)=\frac{1}{4}-\epsilon.
\]
Since $\epsilon$ can be arbitrarily small, any aggregator in $\aggwithpred$ cannot guarantee a regret of less than $0.25$ regarding this distribution of information structures, which implies the theorem.

\subsection{Proof of Theorem \ref{thm:regretofmixnaive}}\label{proof:regretofmixnaive}

Similar to the proof of \Cref{prop:threshold}, to compute the maximum loss of $\mixnaive$, we now consider all seven possible different cases of $\pi$. 

\paragraph{Case 1: $\mu k_1 > (1 - \mu) l_1, \mu k_2 > (1 - \mu)l_2$. }
In this case, we have $a_i(s) = 1$ for all $i = 1, 2$ and $s= L, H$, and the uniform aggregator always chooses action 1. Meanwhile, we notice in this case that
\begin{align*}
    \pi(\omega = 1 \mid S_1 = L, S_2 = H)&= \frac{\mu k_1 (1 - k_2)}{\mu k_1 (1 - k_2) + (1 - \mu) l_1 (1 - l_2)} \geq \frac{1}{2}, \\
    \pi(\omega = 1 \mid S_1 = H, S_2 = L) &= \frac{\mu (1 - k_1) k_2}{\mu (1 - k_1) k_2 + (1 - \mu) (1 - l_1) l_2} \geq \frac{1}{2}, \\
    \pi(\omega = 1 \mid S_1 = H, S_2 = H)&= \frac{\mu (1 - k_1) (1 - k_2)}{\mu (1 - k_1) (1 - k_2) + (1 - \mu) (1 - l_1) (1 - l_2)} \geq \frac{1}{2}. \\
\end{align*}
Therefore, the only possible difference between the uniform aggregator and the benchmark is under the condition that $S_1 = S_2 = L$, and the regret is bounded by the following program: 
\begin{gather*}
    \max \quad -\mu k_1 k_2 + (1 - \mu) l_1 l_2, \\
    {\rm s.t.} \quad \mu k_1 k_2 \leq (1 - \mu) l_1 l_2, \\
    \mu k_1 \geq (1 - \mu) l_1, \mu k_2 \geq (1 - \mu)l_2, \\
    0 \leq k_1 \leq l_1 \leq 1, 0 \leq k_2 \leq l_2 \leq 1, 0 \leq \mu \leq 1. 
\end{gather*}
And the optimum of the above is $3 - 2\sqrt{2}\approx 0.1716$, which is reached when $\mu = \sqrt{2}/2, k_1 = k_2 = \sqrt{2} - 1, l_1 = l_2 = 1$. 

\paragraph{Case 2: $\mu (1 - k_1) < (1 - \mu) (1 - l_1), \mu (1 - k_2) < (1 - \mu)(1 - l_2)$. }
This case is equivalent to Case 1 by substituting $\mu$ with $1 - \mu$ and $k_i$ with $1 - l_i$ for $i = 1, 2$. Thus, they have the same optimum $3 - 2\sqrt{2}$. 

\paragraph{Case 3: $\mu k_1 \leq (1 - \mu) l_1, \mu (1 - k_1) \geq (1 - \mu) (1 - l_1), \mu (1 - k_2) < (1 - \mu)(1 - l_2)$. }
In this case, we have $a_1(L) = 0, a_1(H) = 1$, and $a_2(L) = a_2(H) = 0$. We also observe that
\begin{align*}
    \pi(\omega = 1 \mid S_1 = L, S_2 = L)&= \frac{\mu k_1 k_2}{\mu k_1 k_2 + (1 - \mu) l_1 l_2} \leq \frac{1}{2}, \\
    \pi(\omega = 1 \mid S_1 = L, S_2 = H) &= \frac{\mu k_1 (1 - k_2)}{\mu k_1 (1 - k_2) + (1 - \mu) l_1 (1 - l_2)} \leq \frac{1}{2}. 
\end{align*}
Thus, the uniform aggregator matches the benchmark when $S_1 = L$. On the other hand, when $S_1 = H$, there is a split between two experts. Hence, our uniform aggregator would choose action 0.5. We further have
\begin{align*}
    \pi(\omega = 1 \mid S_1 = H, S_2 = H)&= \frac{\mu (1 - k_1) (1-k_2)}{\mu (1 - k_1) (1-k_2) + (1 - \mu) (1 - l_1) (1-l_2)} \geq \frac{1}{2},
\end{align*}
As a result, the benchmark will adopt action $1$ when $S_1=H, S_2=H$. But the action of benchmark when $S_1=H, S_2=L$ is unsure. The regret is bounded by the following program:
\begin{gather*}
    \max \quad \frac{1}{2}\lvert\mu (1-k_1)k_2 - (1 - \mu) (1-l_1)l_2\rvert+\frac{1}{2}(\mu(1-k_1)(1-k_2)-(1-\mu)(1-l_1)(1-l_2)), \\
    \mu k_1 \leq(1 - \mu) l_1, \mu (1 - k_1) \geq (1 - \mu) (1 - l_1), \mu (1 - k_2)\leq (1 - \mu)(1 - l_2), \\
    0 \leq k_1 \leq l_1 \leq 1, 0 \leq k_2 \leq l_2 \leq 1, 0 \leq \mu \leq 1. 
\end{gather*}
The above program has a maximum value of $0.25$ when $\mu=0.5,k_1 = 0,k_2=0.5, l_1 =1$ and $l_2 = 0.5$.

\paragraph{Case 4: $\mu k_1 \leq (1 - \mu) l_1, \mu (1 - k_1) \geq (1 - \mu) (1 - l_1), \mu k_2 > (1 - \mu)l_2$. }
This case is equivalent to Case 3 by substituting $\mu$ with $1-\mu$ and $k_i$ with $1 - l_i$ for $i=1,2$, and they have the same optimum $1/4$. 

\paragraph{Case 5: $\mu (1 - k_1) < (1 - \mu)(1 - l_1), \mu k_2 \leq (1 - \mu) l_2, \mu (1 - k_2) \geq (1 - \mu) (1 - l_2)$. }
This case is equivalent to Case 3 by swapping the order of two experts, and we omit it. 

\paragraph{Case 6: $\mu k_1 > (1 - \mu)l_1, \mu k_2 \leq (1 - \mu) l_2, \mu (1 - k_2) \geq (1 - \mu) (1 - l_2)$. }
This case is equivalent to Case 4 by swapping the order of two experts, and we omit it. 

\paragraph{Case 7: $\mu k_1 \leq (1 - \mu) l_1, \mu (1 - k_1) \geq (1 - \mu) (1 - l_1), \mu k_2 \leq (1 - \mu) l_2, \mu (1 - k_2) \geq (1 - \mu) (1 - l_2)$. }
We have $a_1(L) = a_2(L) = 0$ and $a_1(H) = a_2(H) = 1$. Meanwhile, 
\begin{align*}
    \pi(\omega = 1 \mid S_1 = L, S_2 = L) &= \frac{\mu k_1 k_2}{\mu k_1 k_2 + (1 - \mu) l_1 l_2} \leq \frac{1}{2}, \\
    \pi(\omega = 1 \mid S_1 = H, S_2 = H) &= \frac{\mu (1 - k_1) (1 - k_2)}{\mu (1 - k_1) (1 - k_2) + (1 - \mu) (1 - l_1) (1 - l_2)} \geq \frac{1}{2}. 
\end{align*}
Thus, the uniform aggregator agrees with the benchmark when $S_1=S_2$. On the other hand, when $S_1\neq S_2$, there is a split between two experts. Hence, our uniform aggregator will adopt action $0.5$. Also, the action of the benchmark is unsure. The following program bounds the regret:
\begin{gather*}
    \max \quad \frac{1}{2}\lvert\mu (1-k_1)k_2 - (1 - \mu) (1-l_1)l_2\rvert+\frac{1}{2}\lvert \mu k_1(1-k_2)-(1-\mu)l_1(1-l_2)\rvert, \\
    \mu k_1 \leq (1 - \mu) l_1, \mu (1 - k_1) \geq (1 - \mu) (1 - l_1)\\
    \mu k_2\leq (1-\mu)l_2, \mu (1 - k_2)\geq (1 - \mu)(1 - l_2), \\
    0 \leq k_1 \leq l_1 \leq 1, 0 \leq k_2 \leq l_2 \leq 1, 0 \leq \mu \leq 1. 
\end{gather*}
The above program has a maximum value of $0.25$ when $\mu=0.5,k_1=0.5,k_2=0,l_1=0.5,l_2=1$.

Synthesizing all $7$ cases, we achieve that $L(\mixnaive)\leq 0.25$. Combining with \Cref{thm:lowerboundmixhete}, we finish the proof of $L(\mixnaive)=0.25$.

\section{Missing Proofs in Section \ref{sec:homo}}
\label{proof:sec5}
\subsection{Proof of Theorem \ref{thm:lowerboundpurehomo}}\label{proof:lowerboundpurehomo}

We now give two carefully selected information structures in $\infoaci$ and demonstrate that no deterministic aggregator in $\aggwithpred$ can guarantee a regret less than $3-2\sqrt{2}$ regarding both information structures.
\begin{center}
\begin{tabular}{p{8em}<{\centering} cc}
    \toprule
    $\mu_1=1-\sqrt{2}/2 $& $\pi_1(s=L\mid \omega=1)$ & $\pi_1(s=L\mid \omega=0)$ \\ 
    \midrule
    Experts &  $0$ & $2-\sqrt{2}$   \\
    \bottomrule
\end{tabular}
\qquad
\begin{tabular}{p{8em}<{\centering} cc}
    \toprule
    $\mu_2=\sqrt{2}/2$ & $\pi_2(s=L\mid \omega=1)$ & $\pi_2(s=L\mid \omega=0)$ \\ 
    \midrule
    Experts & $3\sqrt{2}-4$ & $2\sqrt{2}-2$ \\
    \bottomrule
\end{tabular}
\end{center}

For the first information structure, we set $(\mu,k,l)=(1-\sqrt{2}/2,0,2-\sqrt{2})$, which leads to the posterior $(b_L,b_H)=(0,1/2)$. In this information structure, both experts demonstrate absolute certainty in determining the state when they observe the signal $L$. However, when they observe the signal $H$, neither expert exhibits a preference or inclination towards either state. Therefore, they will recommend action 0 when observing $L$ and action 1 when observing $H$.

For the second information structure, we let $(\mu,k,l)=(\sqrt{2}/2, 3\sqrt{2}-4, 2\sqrt{2}-2)$, leading to the posterior $(b_L,b_H)=(\sqrt{2}-1,\sqrt{2}-1/2)$. In this information structure, both experts hold partial knowledge about the state. Also, they will recommend action 0 when observing $L$ and action 1 when observing $H$. 

Notice that inputs $(1,0,\sqrt{2}/2,\sqrt{2}-1)$ and $(0,1,\sqrt{2}-1,\sqrt{2}/2)$ appear under both information structures. However, in the first information structure, these inputs only occur when the real state is $1$, and each happens with probability $3-2\sqrt{2}$. In the second information structure, with probability $17\sqrt{2}-24$, each input happens with the real state 0. Also, with probability $27-19\sqrt{2}$, each input happens with the real state 1. Therefore, if the deterministic aggregator outputs 0 for both inputs, it will cause a regret of $102-72\sqrt{2}\approx 0.1766\geq 3-2\sqrt{2}\approx 0.1716$ in the second information structure. If it outputs 1 for both inputs, it will cause a regret of $6-4\sqrt{2}\approx 0.3431$ in the first information structure. If it outputs 1 for one of the inputs and 0 for another, it will cause a regret of $3-2\sqrt{2}$. Therefore, no deterministic aggregator can guarantee a regret of less than $3-2\sqrt{2}$ regarding both information structures above, which implies the theorem.

\subsection{Proof of Theorem \ref{prop:upperboundhomo}}\label{proof:upperboundhomo}

Since the experts are homogeneous, the follow-the-first-expert is equivalent to the uniform aggregator regarding any information structure in $\infoaci$. Therefore, $L_{\infoaci}(\purenaive)=3-2\sqrt{2}$ by \Cref{prop:upperboundhomomixed}. 

\subsection{Proof of Theorem \ref{thm:lowerboundhomo}}\label{proof:lowerboundhomo}

We here give a special information structure in $\infoaci$,
regarding which any aggregator in $\Delta(\aggwithpred)$ cannot achieve a regret below $3-2\sqrt{2}$, which implies the lower bound.
\begin{center}   
\begin{tabular}{ccc}
\toprule
$\mu_1=\sqrt{2}/2$ & $\pi_1(s=L\mid \omega=1)$ & $\pi_1(s=L\mid \omega=0)$ \\ 
\midrule
Experts &  $\sqrt{2}-1$ & $1$   \\
\bottomrule
\end{tabular}
\end{center}
For the information structure, we set $(\mu,k,l)=(\sqrt{2}/2,\sqrt{2}-1,1)$, which leads to the posterior $(b_L,b_H)=(1/2,1)$.
In this information structure, two experts always recommend action 1 regardless of the realized signal.

Thus, the agent always observes the input $(1,1,1,1)$ and can only give the same action output regardless of the realized signal.
Since the prior is larger than $1/2$, the optimal aggregator in $\Delta(\aggwithpred)$ should output $1$ for this input.
However, when $(s_1,s_2)=(L, L)$, the benchmark will adopt the action $0$, which leads to the relative regret of any aggregator in $\Delta(\aggwithpred)$ regarding this information structure at least
\[
    (1-\frac{\sqrt{2}}{2})\times(1-0)+\frac{\sqrt{2}}{2}\times(\sqrt{2}-1)^2\times(0-1)=3-2\sqrt{2},
\]
which implies the theorem.

\subsection{Proof of Theorem \ref{prop:upperboundhomomixed}}\label{proof:upperboundhomomixed}

Similar to previous proofs, to compute the maximum regret of $\mixnaive$, we now consider all possible three different cases of $\pi$. Again, we recall that $k_1 \leq l_1$ and $k_2 \leq l_2$ hold by \Cref{lemma: infopara}.

\paragraph{Case 1: $\mu k > (1 - \mu) l$. }
In this case, we have $a_i(s)=1$ for all $i=1,2$ and $s=L,H$. The uniform aggregator always chooses action $1$. Also, we can obtain
\begin{align*}
    \pi(\omega = 1 \mid S_1 = L, S_2 = H)&= \frac{\mu k (1 - k)}{\mu k (1 - k) + (1 - \mu) l (1 - l)} \geq \frac{1}{2}, \\
    \pi(\omega = 1 \mid S_1 = H, S_2 = L) &= \frac{\mu (1 - k) k}{\mu (1 - k) k + (1 - \mu) (1 - l) l} \geq \frac{1}{2}, \\
    \pi(\omega = 1 \mid S_1 = H, S_2 = H)&= \frac{\mu (1 - k) (1 - k)}{\mu (1 - k) (1 - k) + (1 - \mu) (1 - l) (1 - l)} \geq \frac{1}{2}. \\
\end{align*}
Therefore, the only possible difference between the uniform aggregator and the benchmark is under the condition that $S_1 = S_2 = L$, and the regret is bounded by the following program: 
\begin{gather*}
    \max \quad -\mu k^2 + (1 - \mu) l^2, \\
    {\rm s.t.} \quad \mu k^2 \leq (1 - \mu) l^2, 
    \mu k \geq (1 - \mu) l\\
    0 \leq k \leq l \leq 1, 0 \leq \mu \leq 1. 
\end{gather*}
And the optimum of the above is $3 - 2\sqrt{2}\approx 0.1716$, which is reached when $\mu = \sqrt{2}/2, k= \sqrt{2} - 1, l= 1$. 

\paragraph{Case 2: $\mu (1 - k) < (1 - \mu) (1 - l)$. }
This case is equivalent to Case 1 by substituting $\mu$ with $1 - \mu$ and $k$ with $1 - l$. Thus they have the same optimum $3 - 2\sqrt{2}$. 

\paragraph{Case 3: $\mu k \leq (1 - \mu) l, \mu (1 - k) \geq (1 - \mu) (1 - l)$. }
We have $a_1(L) = a_2(L) = 0$ and $a_1(H) = a_2(H) = 1$. Meanwhile, 
\begin{align*}
    \pi(\omega = 1 \mid S_1 = L, S_2 = L) &= \frac{\mu k^2}{\mu k^2 + (1 - \mu) l^2} \leq \frac{1}{2}, \\
    \pi(\omega = 1 \mid S_1 = H, S_2 = H) &= \frac{\mu (1 - k)^2}{\mu (1 - k)^2 + (1 - \mu) (1 - l)^2} \geq \frac{1}{2}. 
\end{align*}
Thus, the uniform aggregator agrees with the benchmark when $S_1=S_2$. On the other hand, when $S_1\neq S_2$, there is a split between two experts. Hence, our uniform aggregator will adopt action $0.5$. Also, the action of the benchmark is unsure. The following program bounds the regret:
\begin{gather*}
    \max \quad \lvert\mu (1-k)k - (1 - \mu) (1-l)l\rvert, \\
    \mu k \leq (1 - \mu) l, \mu (1 - k) \geq (1 - \mu) (1 - l)\\
    0 \leq k \leq l \leq 1, 0 \leq \mu \leq 1. 
\end{gather*}
The above program has a maximum value of $3-2\sqrt{2}$ when $\mu=\sqrt{2}/2,k=\sqrt{2}-1,l=1$.

Synthesizing all three cases, we achieve that $L(\mixnaive)\leq 3-2\sqrt{2}$. Combining with \Cref{thm:lowerboundhomo}, we finish the proof of $L(\mixnaive)=3-2\sqrt{2}$.

\section{Missing Proofs in Section \ref{sec:homoplus}}
\label{proof:sec6}
\subsection{Proof of Theorem \ref{thm:lowerboundhomouse}}\label{proof:lowerboundhomouse}

By \Cref{thm:minimax}, to bound the expected regret of any random aggregator in $\Delta(\aggnopred)$, it suffices to analyze the expected regret of aggregators in $\aggnopred$ relative to the same benchmark, where information structures are drawn from a carefully selected distribution.

We then give a distribution $D\in\Delta(\infodaci)$ over two information structures in $\infodaci$, regarding which any aggregator in $\aggnopred$ cannot achieve a regret below $3-2\sqrt{2}\approx 0.1716$.
\begin{center}
\begin{tabular}{p{8em}<{\centering} cc}
\toprule
$\mu_1=\sqrt{2}/2 $& $\pi_1(s=L\mid \omega=1)$ & $\pi_1(s=L\mid \omega=0)$ \\ 
\midrule
Experts &  $\sqrt{2}-1-\epsilon$ & $1$   \\
\bottomrule
\end{tabular}
\qquad
\begin{tabular}{p{8em}<{\centering} cc}
\toprule
$\mu_2=1-\sqrt{2}/2$ & $\pi_2(s=L\mid \omega=1)$ & $\pi_2(s=L\mid \omega=0)$ \\ 
\midrule
Experts & $0$ & $2-\sqrt{2}+\epsilon$ \\
\bottomrule
\end{tabular}
\end{center}


For the first information structure, we set $(\mu,k,l)=(\sqrt{2}/2,\sqrt{2}-1-\epsilon,1)$, which leads to the posterior $(b_{L},b_{H})=((2-\sqrt{2}-\sqrt{2}\epsilon)/(4-2\sqrt{2}-\sqrt{2}\epsilon),1)$. $\epsilon$ is a small positive number that is less than $\sqrt{2}-1$.
Experts are sure about the state under this information structure when observing signal $H$. However, when observing signal $L$, they are uncertain about the state when $\epsilon$ is close to 0.

We then construct the second information structure, which is symmetric with the first one. Specifically, we take $(\mu,k,l)=(1-\sqrt{2}/2,0,2-\sqrt{2}+\epsilon)$ and $(b_{L},b_{H})=(0,(2-\sqrt{2})/(4-2\sqrt{2}-\sqrt{2}\epsilon))$. 
Experts completely know the state in this information structure when observing signal $L$; while when observing signal $H$, the two states are indistinguishable when $\epsilon$ is close to 0.

Now consider the distribution that the real information can be the first or the second one with equal probability, which means the more informed signal is $L$ or $H$ with equal probability.
Consider the case when two experts receive different signals.
When $(s_1,s_2)=(L,H)$, the agent always observes the input $(a_1,a_2)=(0,1)$ regardless of the real information structure; vice versa for $(s_1,s_2)=(H,L)$.
Since the agent does not know which the more informed signal is, the optimal aggregator in $\aggnopred$ is independent of the specific outputs generated by these inputs.
However, the benchmark can always identify the more informed signal according to the knowledge of the real information structure.
Thus, the relative regret of any aggregator in $\aggnopred$ against this distribution of information structures is at least
\[
    \frac{1}{2}\times 2\times\frac{\sqrt{2}}{2}\times (\sqrt{2}-1-\epsilon)\times(2-\sqrt{2}+\epsilon)\times(1-0)=3-2\sqrt{2}-(\frac{3\sqrt{2}}{2}-2)\epsilon-\frac{\sqrt{2}}{2}\epsilon^2.
\]
Since $\epsilon$ can be arbitrarily small, no aggregator in $\aggnopred$ can guarantee a lower regret than $3-2\sqrt{2}$ regarding this distribution of information structures. This finishes the proof. 



\subsection{Proof of Theorem \ref{thm:lowerboundhomose2}}\label{proof:lowerboundhomose2}



Again, we use \Cref{thm:minimax}, and now give a distribution $D\in\Delta(\infodaci)$ over two information structures in $\infodaci$, regarding which any aggregator in $\aggnopred$ cannot achieve a regret below $1/6 \approx 0.1667$.

\begin{center}
\begin{tabular}{p{8em}<{\centering} p{8em}<{\centering} p{8em}<{\centering}}
\toprule
$\mu_1=3/4-\epsilon $& $\pi_1(s=L\mid \omega=1)$ & $\pi_1(s=L\mid \omega=0)$ \\ 
\midrule
Experts &  $1/3-8\epsilon/(9-12\epsilon)$ & $1$   \\
\bottomrule
\end{tabular}

\medskip

\begin{tabular}{p{8em}<{\centering} p{8em}<{\centering} p{8em}<{\centering}}
\toprule
$\mu_2=1/4+\epsilon$ & $\pi_2(s=L\mid \omega=1)$ & $\pi_2(s=L\mid \omega=0)$ \\ 
\midrule
Experts & $0$ & $2/3+8\epsilon/(9-12\epsilon)$ \\
\bottomrule
\end{tabular}
\end{center}

For the first information structure, we set $(\mu,k,l)=(3/4-\epsilon,1/3-8\epsilon/(9-12\epsilon),1)$, which leads to the posterior $(b_{L},b_{H})=((3-4\epsilon-8(3\epsilon-4\epsilon^2)/(3-4\epsilon))/(6+8\epsilon-8(3\epsilon-4\epsilon^2)/(3-4\epsilon)),1)$. $\epsilon < 1/4$.
In this information structure, the experts know exactly the state when observing signal $H$. However, when observing signal $L$, it is hard for them to clarify the state when $\epsilon$ is close to 0.

The second information structure is symmetric with the first one. We take $(\mu,k,l)=(1/4+\epsilon,0,2/3+8\epsilon/(9-12\epsilon))$ and $(b_{L},b_{H})=(0,(3+12\epsilon)/(6+8\epsilon-8(3\epsilon-4\epsilon^2)/(3-4\epsilon)))$. 
In this information structure, the experts are clear about the state when observing signal $L$. However, when observing signal $H$, they are uncertain about the state when $\epsilon$ is close to 0.

Now consider the distribution that the real information can be the first or the second one with equal probability, i.e., the more informed signal is $L$ or $H$ with equal probability.
Consider the case when two experts receive different signals.
When $(s_1,s_2)=(L,H)$, the agent always observes the input $(a_1,a_2)=(0,1,1/3-8\epsilon/(9-12\epsilon),2/3+8\epsilon/(9-12\epsilon))$ regardless of the real information structure.
Instead, when $(s_1,s_2)=(H,L)$, the agent always observes the input $(a_1,a_2)=(1,0,2/3+8\epsilon/(9-12\epsilon),1/3-8\epsilon/(9-12\epsilon))$.
Similar to the proof of \Cref{thm:lowerboundhomouse}, since the agent does not know which the more informed signal is, the optimal aggregator in $\aggwithpred$ is independent of the specific outputs generated by these inputs.
However, the benchmark can always identify the more informed signal according to the knowledge of the real information structure.
Consequently, the relative regret of any aggregator in $\aggwithpred$ against this distribution of information structures is at least
\[
   \frac{1}{2}\times 2\times(\frac{3}{4}-\epsilon)\times(\frac{1}{3}-\frac{8\epsilon}{9-12\epsilon})\times(\frac{2}{3}+\frac{8\epsilon}{9-12\epsilon})\times(1-0)=\frac{1}{6}-\frac{2\epsilon}{9}-\frac{18\epsilon-32\epsilon^2}{9(3-4\epsilon)^2}.
\]
Since $\epsilon$ can be arbitrarily small, no aggregator in $\aggwithpred$ can guarantee a lower regret than $1/6\approx 0.1667$ by \Cref{thm:minimax}.



\subsection{Proof of Lemma \ref{lem:half}}\label{proof:half}

Notice that the prediction of the expert who recommends action 0 is
\[
\frac{\mu k(1-k)+(1-\mu)l(1-l)}{\mu k+(1-\mu)l},
\]
which is referred to as $p^0$.
On the other hand, the prediction of the expert who recommends action 1 is
\[
\frac{\mu(1-k)^2+(1-\mu)(1-l)^2}{\mu(1-k)+(1-\mu)(1-l)},
\]
which is referred to as $p^1$.
Thus, $p^1\geq p^0$ naturally holds, and the equality happens when $k=l$.

\subsection{Proof of Proposition \ref{prop: optimalagg}}\label{proof:optimalagg}

For proof of (a), by \Cref{lemma: infopara}, $k\leq l$ holds. For every information structure in $\infodaci$, we have $a_1(H)=a_2(H)=1$ and $a_1(L)=a_2(L)=0$, which implies $\mu k< (1-\mu)l$ and $\mu(1-k)\geq(1-\mu)(1-l)$. Therefore, $\mu k^2<(1-\mu)l^2$ and $\mu(1-k)^2\geq (1-\mu)(1-l)^2$ holds. When $a_1=a_2=1$, two experts must observe signal $H$, and the benchmark will obtain a posterior higher than $1/2$. In this way, the best strategy for the aggregator is to adopt action 1. The proof is similar when $a_1=a_2=0$.

For proof of (b), we notice that there must exist a random aggregator $f$ that achieves the lowest regret and satisfies (a), and from there, construct another random strategy aggregator $f^\prime$: $f^\prime(a_1,a_2,p_1,p_2)=f(a_2,a_1,p_2,p_1)$ for any input $(a_1,a_2,p_1,p_2)$. Since two experts are homogeneous, we have
\begin{align*}
L(f,\pi)&=\sum_{\omega,s_1,s_2}\pi(\omega,s_1,s_2)(u(\phi(\pi(\omega=1 \mid s_1,s_2)),\omega)-f(a_1(s_1),a_2(s_2),p_1(s_1),p_2(s_2))u(1,\omega)
\\&\enspace -(1-f(a_1(s_1),a_2(s_2),p_1(s_1),p_2(s_2))u(0,\omega))\\
&=\sum_{\omega,s_1,s_2}\pi(\omega,s_2,s_1)(u(\phi(\pi(\omega=1 \mid s_2,s_1)),\omega)-f(a_1(s_1),a_2(s_2),p_1(s_1),p_2(s_2))u(1,\omega)
\\&\enspace -(1-f(a_1(s_1),a_2(s_2),p_1(s_1),p_2(s_2))u(0,\omega))\\
& =\sum_{\omega,s_1,s_2}\pi(\omega,s_2,s_1)(u(\phi(\pi(\omega=1 \mid s_2,s_1)),\omega)-f^\prime(a_1(s_2),a_2(s_1),p_1(s_2),p_2(s_1))u(1,\omega)
\\&\enspace -(1-f^\prime(a_1(s_2),a_2(s_1),p_1(s_2),p_2(s_1))u(0,\omega))\\
&=L(f^\prime,\pi).
\end{align*}
Therefore, $f$ and $f^\prime$ achieve the same regret regarding any information structure, which implies $f^\prime$ is also the best random aggregator. Further, notice that the loss function is linear of $f$, thus
\[
L\left(\frac{f+f^\prime}{2},\pi\right)=\frac{1}{2}L(f,\pi)+\frac{1}{2}L(f^\prime,\pi),
\]
which implies $(f+f^\prime)/2$ also achieves the lowest regret. Since $(f+f^\prime)/2$ satisfies (b), we finish the proof of (b).

For proof of (c), now that there must exist a random aggregator $f$ that achieves the lowest regret and satisfies (a) and (b), we construct another random strategy aggregator $f^\circ$: $f^\circ(a_1,a_2,p_1,p_2)=1-f(1-a_1,1-a_2,1-p_1,1-p_2)=1-f(1-a_2,1-a_1,1-p_2,1-p_1)$ for any input $(a_1,a_2,p_1,p_2)$. Also, for any information structure $\pi$, we can construct information structure $\pi^\circ$ by substituting $\mu,k,l$ with $1-\mu, 1-l, 1-k$. We obtain
\begin{align*}
L(f,\pi)&=\sum_{\omega,s_1,s_2}\pi(\omega,s_1,s_2)(u(\phi(\pi(\omega=1 \mid s_1,s_2)),\omega)-f(a_1(s_1),a_2(s_2),p_1(s_1),p_2(s_2))u(1,\omega)
\\&\enspace-(1-f(a_1(s_1),a_2(s_2),p_1(s_1),p_2(s_2))u(0,\omega))\\
&=\sum_{\omega,s_1,s_2}\pi^\circ(1-\omega,\bar{s}_1,\bar{s}_2)(u(\phi(\pi(\omega=1 \mid \bar{s}_1,\bar{s}_2)),1-\omega)-f^\circ(a_1(\bar{s}_1),a_2(\bar{s}_2),p_1(\bar{s}_1),p_2(\bar{s}_2))u(1,1-\omega)
\\&\enspace-(1-f^\circ(a_1(\bar{s}_1),a_2(\bar{s}_2),p_1(\bar{s}_1),p_2(\bar{s}_2))u(0,1-\omega))\\
&=L(f^\circ,\pi^\circ),
\end{align*}
where $\bar{s}$ represents the complement signal of $s$. 
Therefore, $f$ and $f^\circ$ achieve the same regret, which implies $f^\circ$ also attains the lowest regret. Thus,
\[
L\left(\frac{f+f^\circ}{2},\pi\right)=\frac{1}{2}L(f,\pi)+\frac{1}{2}L(f^\circ,\pi),
\]
which implies $(f+f^\circ)/2$ is also the optimal random aggregator. Since $(f+f^\circ)/2$ satisfies (c), we finish the proof of (c).

At last, (d) holds naturally by (b) and (c).

\subsection{Proof of Theorem \ref{prop: upperbounddaci}}\label{proof:upperbounddaci}

To compute the maximum regret of $\mixgood$, we now consider all possible cases of $\pi$ under the conditions. Here, besides the conditions that $k_1\leq l_1$ and $k_2\leq l_2$ as given by \Cref{lemma: infopara}, from the definition of $\infodaci$, we also know that $\mu k<(1-\mu)l$ and $\mu (1-k)\geq (1-\mu)(1-l)$.

Further, when $S_1=S_2=L$, two experts both recommend action 0, so our aggregator adopts action 0, which is the same as the benchmark. When $S_1=S_2=H$, the aggregator and the benchmark both take action 1. Therefore, no regret will be caused when $S_1=S_2$.

When two experts observe different signals and recommend different actions, it is without loss of generality to assume that $a_1=1,a_2=0$ due to the symmetry of the aggregator and information structures.
Therefore, we have
\begin{align*}
p_1=\frac{\mu(1-k)^2+(1-\mu)(1-l)^2}{\mu(1-k)+(1-\mu)(1-l)}, \quad
p_2=\frac{\mu k(1-k)+(1-\mu)l(1-l)}{\mu k+(1-\mu)l}.
\end{align*}
and $p_1\geq p_2$ always holds.
\paragraph{Case 1: $p_1+p_2<0.98$}
The aggregator outputs $\min\{1,(p_1-0.6)^2+(p_2-0.4)^2+0.5\}$ in this case; however, the benchmark's output is unsure. Therefore, the regret is bounded by the two programs below:
    \begin{gather*}
    \max \quad ((1-\mu)l(1-l)-\mu k(1-k))\min\{1,(p_1-0.6)^2+(p_2-0.4)^2+0.5\}, \\
    {\rm s.t.} \quad \mu k(1-k) \leq (1 - \mu) l(1-l), 
    \mu k \leq (1 - \mu) l,\mu(1-k)\geq (1-\mu)(1-l)\\
    0 \leq k \leq l \leq 1, 0 \leq \mu \leq 1,\\
    p_1+p_2\leq 0.98.
\end{gather*}
\begin{gather*}
    \max \quad (\mu k(1-k)-(1-\mu)l(1-l))(1-\min\{1,(p_1-0.6)^2+(p_2-0.4)^2+0.5\}), \\
    {\rm s.t.} \quad \mu k^(1-k) \geq (1 - \mu) l(1-l), 
    \mu k \leq (1 - \mu) l,\mu(1-k)\geq (1-\mu)(1-l)\\
    0 \leq k \leq l \leq 1, 0 \leq \mu \leq 1,\\
    p_1+p_2\leq 0.98.
\end{gather*}
The first program achieves the maximized value of $0.08407$ when $\mu=0.2424,k=0,l=0.68$. The second program achieves the maximized value of $0.08402$ when $\mu=0.7175,k=0.3938,l=1$. Therefore, the maximum regret in this case is $0.0841$.

\paragraph{Case 2: $p_1+p_2>1.02$}
 The aggregator outputs $\max\{0,0.5-(p_1-0.6)^2-(p_2-0.4)^2+0.5\}$ in this case, however the benchmark's output is unsure. Therefore, the regret is bounded by the two programs below:
    \begin{gather*}
    \max \quad ((1-\mu)l(1-l)-\mu k(1-k))\max\{0,0.5-(p_1-0.6)^2-(p_2-0.4)^2\}, \\
    {\rm s.t.} \quad \mu k(1-k) \leq (1 - \mu) l(1-l), 
    \mu k \leq (1 - \mu) l,\mu(1-k)\geq (1-\mu)(1-l)\\
    0 \leq k \leq l \leq 1, 0 \leq \mu \leq 1,\\
    p_1+p_2\geq 1.02.
\end{gather*}
\begin{gather*}
    \max \quad (\mu k(1-k)-(1-\mu)l(1-l))(1-\max\{0,0.5-(p_1-0.6)^2-(p_2-0.4)^2\}), \\
    {\rm s.t.} \quad \mu k^(1-k) \geq (1 - \mu) l(1-l), 
    \mu k \leq (1 - \mu) l,\mu(1-k)\geq (1-\mu)(1-l)\\
    0 \leq k \leq l \leq 1, 0 \leq \mu \leq 1,\\
    p_1+p_2\geq 1.02.
\end{gather*}
The first program achieves the maximized value of $0.08402$ when $\mu=0.2825,k=0,l=0.6062$. The second program achieves the maximized value of $0.08078$ when $\mu=0.7575,k=0.32,l=1$. Therefore, the maximum regret in this case is $0.08402$.

\paragraph{Case 3: $0.98\leq p_1+p_2\leq 1.02$}
In this case, the aggregator outputs $0.5$; however, the benchmark's output is unsure. Therefore, the regret is bounded by the two programs below:
    \begin{gather*}
    \max \quad 0.5((1-\mu)l(1-l)-\mu k(1-k)), \\
    {\rm s.t.} \quad \mu k(1-k) \leq (1 - \mu) l(1-l), 
    \mu k \leq (1 - \mu) l,\mu(1-k)\geq (1-\mu)(1-l)\\
    0 \leq k \leq l \leq 1, 0 \leq \mu \leq 1,\\
    0.98\leq p_1+p_2\leq 1.02.
\end{gather*}
\begin{gather*}
    \max \quad 0.5(\mu k(1-k)-(1-\mu)l(1-l)), \\
    {\rm s.t.} \quad \mu k^(1-k) \geq (1 - \mu) l(1-l), 
    \mu k \leq (1 - \mu) l,\mu(1-k)\geq (1-\mu)(1-l)\\
    0 \leq k \leq l \leq 1, 0 \leq \mu \leq 1,\\
    0.98\leq p_1+p_2\leq 1.02.
\end{gather*}
The first program achieves the maximized value of $0.08409$ when $\mu=0.2574,k=0,l=0.6533$. The second program achieves the maximized value of $0.08409$ when $\mu=0.7426,k=0.3467,l=1$. Therefore, the maximum regret in this case is $0.08409$.

Synthesizing all three cases, we obtain that $L_{\infodaci}(\mixgood) \leq 2\times 0.0841=0.1682$. Also, when $\mu=0.7426,k=0.3467,l=1$, the regret of the aggregator regarding this information structure is $0.1682$, which implies the result.

\subsection{Proof of Theorem \ref{prop:upperbounddaci2}}\label{proof:upperbounddaci2}

To compute the maximum regret of $\mixalgo$, we now consider all possible cases of $\pi$, under the conditions as pointed out by the proof of \Cref{prop: upperbounddaci}. Further, it suffices to consider the scenario with $a_1 = 1, a_2 = 0$. 

We now divide all possible predictions into 100 intervals: $\{(p_1,p_2):lb_1<=p_1<=lb_1+0.1,lb_2<=p_2<=lb_2\}$ for all $lb_1,lb_2\in\{0,0.1,0.2,\cdots,0.9\}$.
For each interval, the aggregator is linear, with 
\begin{align*}
f(1,0,lb_1+\delta_1,lb_2+\delta_2)=\delta_1\delta_2f(1,0,lb_1+0.1,lb_2+0.1)+\delta_1(1-\delta_2)f(1,0,lb_1+0.1,lb_2)\\
+(1-\delta_1)\delta_2f(1,0,lb_1,lb_2+0.1)+(1-\delta_1)(1-\delta_2)f(0,1,lb_1,lb_2).
\end{align*}
Here, $f(1,0,lb_1+0.1,lb_2+0.1),f(1,0,lb_1+0.1,lb_2),f(1,0,lb_1,lb_2+0.1)$ and $f(1,0,lb_1,lb_2)$ are obtained by the algorithm in advance. Therefore, the maximum regret in this interval can be bounded by two programs
\begin{gather*}
    \max \quad f(1,0,p_1,p_2)((1-\mu)l(1-l)-\mu k(1-k)), \\
    {\rm s.t.} \quad \mu k(1-k) \leq (1 - \mu) l(1-l), 
    \mu k \leq (1 - \mu) l,\mu(1-k)\geq (1-\mu)(1-l)\\
    0 \leq k \leq l \leq 1, 0 \leq \mu \leq 1,\\
    lb_1\leq p_1\leq lb_1+0.1,lb_2\leq p_2\leq lb_2+0.1.
\end{gather*}
\begin{gather*}
    \max \quad (1-f(1,0,p_1,p_2)(\mu k(1-k)-(1-\mu)l(1-l)), \\
    {\rm s.t.} \quad \mu k^(1-k) \geq (1 - \mu) l(1-l), 
    \mu k \leq (1 - \mu) l,\mu(1-k)\geq (1-\mu)(1-l)\\
    0 \leq k \leq l \leq 1, 0 \leq \mu \leq 1,\\
    lb_1\leq p_1\leq lb_1+0.1,lb_2\leq p_2\leq lb_2+0.1.
\end{gather*}

By symmetry, the regret of the aggregator can then be bounded by two times the maximum value of all the programs, which is $0.1673$ when $\mu=0.763,k=0.3106,l=1$, given by Wolfram Mathematica.

\section{Missing Proofs in Section \ref{sec:extension}}
\label{proof:sec7}
\subsection{Proof of Proposition \ref{prop:ratio}}\label{proof:ratio}
Since $\phi^t$ is only decided by $t$, $a_1,a_2,p_1,p_2$ should be the same for every $s_1,s_2$ when considering $\pi$, which implies $L(\aggmix,\pi,u)=\Delta {u_0}\cdot L^t (\aggmix,\pi)$.

\subsection{Proof of Theorem \ref{thm:lowerboundgene1}}\label{proof:lowerboundgene1}


We here give a special information structure in $\infoaci$,
regarding which any aggregator in $\Delta(\aggwithpred)$ cannot achieve a regret below $(\sqrt{1+1/t}-\sqrt{1/t})^2$, which implies the lower bound.
\begin{center}
\begin{tabular}{ccc}
    \toprule
    $\mu_1= \sqrt{\nicefrac{1}{t + 1}}$& $\pi_1(s=L\mid \omega=1)$ & $\pi_1(s=L\mid \omega=0)$ \\ 
    \midrule
    Experts &  $(\sqrt{t+1}-1)/t$ & $1$   \\
    \bottomrule
\end{tabular}
\end{center}

For the information structure, we set $(\mu,k,l)=(\sqrt{\nicefrac{1}{t+1}},\frac{\sqrt{t+1}-1}{t},1)$, which leads to the posterior $(b_L,b_H)=(1/(t+1),1)$.
In this information structure, two experts always recommend action 1 regardless of the realized signal.

Therefore, the agent always observes the input $(1,1,1,1)$ and can only give the same action output regardless of the realized signal.
Since the prior is larger than $1/(t+1)$, the optimal aggregator in $\Delta(\aggwithpred)$ should output $1$ for this input.
However, when $(s_1,s_2)=(L, L)$, the benchmark will adopt the action $0$, which leads to the relative regret of any aggregator in $\Delta(\aggwithpred)$ regarding this information structure at least
\[
    (1-\frac{1}{\sqrt{t+1}})\times(1-0)+\frac{1}{\sqrt{t+1}}\times\left(\frac{\sqrt{t+1}-1}{t}\right)^2\times(0-t)=\left(\sqrt{1+\frac{1}{t}}-\sqrt{\frac{1}{t}}\right)^2. 
\] 
This implies the theorem.

\subsection{Proof of Theorem \ref{thm:lowerboundgene2}}\label{proof:lowerboundgene2}


Similarly, we give a special information structure in $\infoaci$,
regarding which any aggregator in $\Delta(\aggwithpred)$ cannot achieve a regret below $ (\sqrt{t+t^2}-t)^2$, which implies the lower bound.
\begin{center}
\begin{tabular}{ccc}
\toprule
$\mu_1=1-\sqrt{\nicefrac{t}{t + 1}} $& $\pi_1(s=L\mid \omega=1)$ & $\pi_1(s=L\mid \omega=0)$ \\ 
\midrule
experts &  $0$ & $1-\sqrt{t}(\sqrt{t+1}-\sqrt{t})-\epsilon$   \\
\bottomrule
\end{tabular}
\end{center}

For the information structure, we set 
\[(\mu,k,l)=\left(1-\sqrt{\nicefrac{t}{t+1}},0,1-\frac{t(\sqrt{t+1}-\sqrt{t})}{\sqrt{t}}-\epsilon\right),\]
which leads to the posterior 
\[(b_L,b_H)=\left(0,\frac{\sqrt{t+1}-\sqrt{t}}{(t+1)(\sqrt{t+1}-\sqrt{t})+\sqrt{t}\epsilon}\right).\]
$\epsilon$ here is a small positive number less than $1-\sqrt{t}(\sqrt{t+1}-\sqrt{t})$.
In this information structure, two experts always recommend action 0 regardless of the realized signal.

Therefore, the agent always observes the input $(0,0,0,0)$ and can only give the same action output regardless of the realized signal.
Since the prior is smaller than $1/(t+1)$, the optimal aggregator in $\Delta(\aggwithpred)$ should output $0$ for this input.
However, when $(s_1,s_2)=(H, H)$, the benchmark will adopt the action $1$, which leads to the relative regret of any aggregator in $\Delta(\aggwithpred)$ regarding this information structure at least
\begin{align*}
    (1-\frac{t}{\sqrt{t+1}})\times(t-0)+\frac{t}{\sqrt{t+1}}\times\left(\frac{t(\sqrt{t+1}-\sqrt{t})}{\sqrt{t}}+\epsilon\right)^2\times(0-1) \\
    = (\sqrt{t+t^2}-t)^2-2t\sqrt{t+1}(\sqrt{t^2+t}-1)\epsilon-\frac{t}{\sqrt{t+1}}\epsilon^2.
\end{align*}
Since $\epsilon$ can be arbitrarily small, no aggregator in $\Delta(\aggwithpred)$ can guarantee a lower regret than $(\sqrt{t+t^2}-t)^2$, which implies the theorem.

\subsection{Proof of Theorem \ref{thm:upperboundgene1}}\label{proof:upperboundgene1}


To compute the maximum regret of $f_p$, we now consider all possible three different cases of $\pi$, under the conditions $k_1 \leq l_1$ and $k_2 \leq l_2$ given by \Cref{lemma: infopara}.

\paragraph{Case 1: $t\mu k > (1 - \mu) l$. }
In this case, we have $a_i(s)=1$ for all $i=1,2$ and $s=L,H$. The aggregator always chooses action $1$. Also, we can obtain
\begin{align*}
    \pi(\omega = 1 \mid S_1 = L, S_2 = H)&= \frac{\mu k (1 - k)}{\mu k (1 - k) + (1 - \mu) l (1 - l)} \geq \frac{1}{t+1}, \\
    \pi(\omega = 1 \mid S_1 = H, S_2 = L) &= \frac{\mu (1 - k) k}{\mu (1 - k) k + (1 - \mu) (1 - l) l} \geq \frac{1}{t+1}, \\
    \pi(\omega = 1 \mid S_1 = H, S_2 = H)&= \frac{\mu (1 - k) (1 - k)}{\mu (1 - k) (1 - k) + (1 - \mu) (1 - l) (1 - l)} \geq \frac{1}{t+1}. \\
\end{align*}
Therefore, the only possible difference between the aggregator and the benchmark is under the condition that $S_1 = S_2 = L$, and the regret is bounded by the following program: 
\begin{gather*}
    \max \quad -t\mu k^2 + (1 - \mu) l^2, \\
    {\rm s.t.} \quad t\mu k^2 \leq (1 - \mu) l^2, 
    t\mu k \geq (1 - \mu) l\\
    0 \leq k \leq l \leq 1, 0 \leq \mu \leq 1. 
\end{gather*}

We can bound the maximum value of the program as follows:
\begin{align*}
    -t\mu k^2+(1-\mu)l^2&\leq l^2\left(1-\mu-\frac{(1-\mu)^2}{t\mu}\right)\\
    &\leq 1-\mu-\frac{(1-\mu)^2}{t\mu}\\
    &=1+\frac{2}{t}-(1+\frac{1}{t})\mu-\frac{1}{t\mu}\\
    &\leq 1+\frac{2}{t}-2\sqrt{\frac{1}{t}(\frac{1}{t}+1)},
\end{align*}
which takes equality at $\mu=\sqrt{\nicefrac{1}{t+1}},k=\frac{\sqrt{t+1}-1}{t},l=1$.
Since the above value satisfies the constraints, the optimum of the above program is $(\sqrt{1+\nicefrac{1}{t}}-\sqrt{\nicefrac{1}{t}})^2$.

\paragraph{Case 2: $t\mu (1 - k) < (1 - \mu) (1 - l)$. }
Similarly, the regret is bounded by the following program:
\begin{gather*}
    \max \quad t\mu (1-k)^2 - (1 - \mu) (1-l)^2, \\
    {\rm s.t.} \quad t\mu (1-k)^2 \geq (1 - \mu) (1-l)^2, 
    t\mu (1-k) \leq (1 - \mu) (1-l)\\
    0 \leq k \leq l \leq 1, 0 \leq \mu \leq 1. 
\end{gather*}

We bound the maximum value of the program similarly as the above and obtain that the maximum value is $(\sqrt{t+t^2}-t)^2$ when $\mu=1-\sqrt{\nicefrac{t}{1+t}},k=0,l=1-\frac{t(\sqrt{t+1}-\sqrt{t})}{\sqrt{t}}$.

\paragraph{Case 3: $t\mu k \leq (1 - \mu) l, t\mu (1 - k) \geq (1 - \mu) (1 - l)$. }
We have $a_1(L) = a_2(L) = 0$ and $a_1(H) = a_2(H) = 1$. Meanwhile, 
\begin{align*}
    \pi(\omega = 1 \mid S_1 = L, S_2 = L) &= \frac{\mu k^2}{\mu k^2 + (1 - \mu) l^2} \leq \frac{1}{t+1}, \\
    \pi(\omega = 1 \mid S_1 = H, S_2 = H) &= \frac{\mu (1 - k)^2}{\mu (1 - k)^2 + (1 - \mu) (1 - l)^2} \geq \frac{1}{t+1}. 
\end{align*}
Thus, the aggregator agrees with the benchmark when $S_1=S_2$. On the other hand, when $S_1\neq S_2$, there is a split between two experts. Hence, our prob-$p$ aggregator will adopt action $p$. Also, the action of the benchmark is unsure. The following two programs bound the regret:
\begin{gather*}
    \max \quad 2\left(t\mu (1-k)k - (1 - \mu) (1-l)l\right)(1-f), \\
    {\rm s.t.} \quad t\mu k(1-k)\geq (1-\mu)l(1-l),\\
    t\mu k \leq (1 - \mu) l, t\mu (1 - k) \geq (1 - \mu) (1 - l),\\
    0 \leq k \leq l \leq 1, 0 \leq \mu \leq 1. 
\end{gather*}
and
\begin{gather*}
    \max \quad 2\left(-t\mu (1-k)k + (1 - \mu) (1-l)l\right)f, \\
    {\rm s.t.} \quad t\mu k(1-k)\leq (1-\mu)l(1-l),\\
    t\mu k \leq (1 - \mu) l, t\mu (1 - k) \geq (1 - \mu) (1 - l),\\
    0 \leq k \leq l \leq 1, 0 \leq \mu \leq 1. 
\end{gather*}

For the first program, we bound the maximum value as follows:
\begin{align*}
    2(t\mu k(1-k)-(1-\mu)l(1-l))(1-p)&\leq 2t\mu k(1-k)(1-p)\\
    &\leq 2tk(1-k)\frac{1}{tk+1}(1-p)\\
    &= 2(1-p)\left(-k+(1+\frac{1}{t})-\frac{1+\frac{1}{t}}{tk+1}\right)\\
    &\leq 2(1-p)\left(\sqrt{1+\frac{1}{t}}-\sqrt{\frac{1}{t}}\right)^2,
\end{align*}
which takes equality at $\mu=\sqrt{\nicefrac{1}{t+1}},k=\frac{\sqrt{1+t}-1}{t},l=1$. Since the above value satisfies the constraints, the optimum of the above program is $2(1-p)\left(\sqrt{1+\nicefrac{1}{t}}-\sqrt{\nicefrac{1}{t}}\right)^2$.

Similarly, the maximum value of the second program is $2p(\sqrt{t+t^2}-t)^2$ when $\mu=1-\sqrt{\nicefrac{t}{1+t}},k=0,l=1-\frac{t(\sqrt{t+1}-\sqrt{t})}{\sqrt{t}}$.

We also have that $\left(\sqrt{1+\nicefrac{1}{t}}-\sqrt{\nicefrac{1}{t}}\right)^2\geq (\sqrt{t+t^2}-t)^2$ for any $t\geq 1$ and $\left(\sqrt{1+\nicefrac{1}{t}}-\sqrt{\nicefrac{1}{t}}\right)^2\leq (\sqrt{t+t^2}-t)^2$ for any $t\leq 1$. 
Synthesizing all three cases, we achieve that $L(f_p)\leq \left(\sqrt{1+\nicefrac{1}{t}}-\sqrt{\nicefrac{1}{t}}\right)^2 $ for any $p\in \left[0.5,\frac{\left(\sqrt{1+\nicefrac{1}{t}}-\sqrt{\nicefrac{1}{t}}\right)^2}{2(\sqrt{t+t^2}-t)^2}\right]$. Combining with \Cref{thm:lowerboundgene1}, we finish the proof.

\subsection{Proof of Theorem \ref{thm:upperboundgene2}}\label{proof:upperboundgene2}
    
Similar to the proof in \Cref{thm:upperboundgene1}, the problem can be divided into three cases, and the solution is the same as above. 
Synthesizing all three cases, we achieve that $L^t_\infoaci(f_p)\leq (\sqrt{t+t^2}-t)^2 $ for any $p\in \left[1-\frac{(\sqrt{t+t^2}-t)^2}{2\left(\sqrt{1+\nicefrac{1}{t}}-\sqrt{\nicefrac{1}{t}}\right)^2},0.5\right]$. Combining with \Cref{thm:lowerboundgene2}, we finish the proof.

\subsection{Proof of Theorem \ref{thm:lowerboundgene3}}\label{proof:lowerboundgene3}

By \Cref{thm:minimax}, we now give a distribution $D\in\Delta(\infodaci)$ over two information structures in $\infodaci$, regarding which any aggregator in $\aggnopred$ cannot achieve a regret below $\frac{2\left(\sqrt{1+\nicefrac{1}{t}}-\sqrt{\nicefrac{1}{t}}\right)^2(\sqrt{t+t^2}-t)^2}{(\sqrt{t+t^2}-t)^2+\left(\sqrt{1+\nicefrac{1}{t}}-\sqrt{\nicefrac{1}{t}}\right)^2}$.

\begin{center}
\begin{tabular}{p{8em}<{\centering} p{8em}<{\centering} p{8em}<{\centering}}
    \toprule
    $\mu_1=\sqrt{\nicefrac{1}{t + 1}} $& $\pi_1(s=L\mid \omega=1)$ & $\pi_1(s=L\mid \omega=0)$ \\ \midrule
    Experts &  $(\sqrt{t+1}-1)/t-\epsilon$ & $1$   \\
    \bottomrule
\end{tabular}

\medskip

\begin{tabular}{p{8em}<{\centering} p{8em}<{\centering} p{8em}<{\centering}}
    \toprule
    $\mu_2=1-\sqrt{\nicefrac{t}{t + 1}}$ & $\pi_2(s=L\mid \omega=1)$ & $\pi_2(s=L\mid \omega=0)$ \\ \midrule
    Experts & $0$ & $1-\sqrt{t}(\sqrt{t+1}-\sqrt{t})$ \\
    \bottomrule
\end{tabular}
\end{center}

For the first information structure, we set $(\mu,k,l)=(\sqrt{\nicefrac{1}{t+1}},\frac{\sqrt{t+1}-1}{t}-\epsilon,1)$, which leads to the posterior $(b_{L},b_{H})=(\frac{\sqrt{t+1}-1-t\epsilon}{(t+1)(\sqrt{t+1}-1)-t\epsilon},1)$. $\epsilon$ can be any small positive number that is less than $\frac{\sqrt{t+1}-1}{t}$.
Experts are sure about the state in this information structure when observing signal $H$. However, when observing signal $L$, it is nearly uncertain which action is better.

We then construct the second information structure symmetric with the first one: $(\mu,k,l)=(1-\sqrt{\nicefrac{t}{t+1}},0,1-\frac{t(\sqrt{t+1}-\sqrt{t})}{\sqrt{t}})$ and $(b_{L},b_{H})=(0,\nicefrac{1}{t+1})$. 
Similarly, here, experts know exactly the state when observing signal $L$. Nevertheless, when observing signal $H$, it is nearly uncertain which action is better.

Now consider the distribution that the real information is the first one with the probability of 
\[
\frac{(\sqrt{t+t^2}-t)^2}{(\sqrt{t+t^2}-t)^2+\left(\sqrt{1+\frac{1}{t}}-\sqrt{\frac{1}{t}}\right)^2-\epsilon^2-(1-\frac{2(\sqrt{t+1}-1)}{t})\epsilon}
\]
and the second one with the probability of 
\[
\frac{\left(\sqrt{1+\frac{1}{t}}-\sqrt{\frac{1}{t}}\right)^2-\epsilon^2-(1-\frac{2(\sqrt{t+1}-1)}{t})\epsilon}{(\sqrt{t+t^2}-t)^2+\left(\sqrt{1+\frac{1}{t}}-\sqrt{\frac{1}{t}}\right)^2-\epsilon^2-(1-\frac{2(\sqrt{t+1}-1)}{t})\epsilon}.
\]

Consider the case when two experts receive different signals.
When $(s_1,s_2)=(L,H)$, the agent always observes the input $(a_1,a_2)=(0,1)$ regardless of the real information structure.
When $(s_1,s_2)={H,L}$, the agent always observes the input $(a_1,a_2)=(1,0)$ regardless of the real information structure.
Since the agent doesn't know which the more informed signal is, the optimal aggregator in $\aggnopred$ is independent of the specific outputs generated by these inputs.
However, the benchmark can always identify the more informed signal according to the knowledge of the real information structure.
Thus, the relative regret of any aggregator in $\aggnopred$ against this distribution of information structures is at least
\begin{align*}
     \frac{\left(\sqrt{1+\frac{1}{t}}-\sqrt{\frac{1}{t}}\right)^2-\epsilon^2-(1-\frac{2(\sqrt{t+1}-1)}{t})\epsilon}{(\sqrt{t+t^2}-t)^2+\left(\sqrt{1+\frac{1}{t}}-\sqrt{\frac{1}{t}}\right)^2-\epsilon^2-(1-\frac{2(\sqrt{t+1}-1)}{t})\epsilon}\\
     \times 2\times\sqrt{\frac{t}{t+1}}\times (1-\frac{t(\sqrt{t+1}-\sqrt{t})}{\sqrt{t}})\times \frac{t(\sqrt{t+1}-\sqrt{t})}{\sqrt{t}}\times(1-0)
\end{align*}
Since $\epsilon$ can be arbitrarily small, no aggregator in $\aggnopred$ can guarantee a lower regret than 
\[
 \frac{2\left(\sqrt{1+\frac{1}{t}}-\sqrt{\frac{1}{t}}\right)^2(\sqrt{t+t^2}-t)^2}{(\sqrt{t+t^2}-t)^2+\left(\sqrt{1+\frac{1}{t}}-\sqrt{\frac{1}{t}}\right)^2}
\]
regarding this distribution of information structures. This implies the theorem.

\subsection{Proof of Theorem \ref{thm:upperboundgene3}}

According to Case 3 in proof in \Cref{thm:upperboundgene1}, we achieve that  \[
 L^t_\infodaci(f_p)\leq \frac{2\left(\sqrt{1+\frac{1}{t}}-\sqrt{\frac{1}{t}}\right)^2(\sqrt{t+t^2}-t)^2}{(\sqrt{t+t^2}-t)^2+\left(\sqrt{1+\frac{1}{t}}-\sqrt{\frac{1}{t}}\right)^2}
\]
for 
\[
p= \frac{\left(\sqrt{1+\frac{1}{t}}-\sqrt{\frac{1}{t}}\right)^2}{(\sqrt{t+t^2}-t)^2+\left(\sqrt{1+\frac{1}{t}}-\sqrt{\frac{1}{t}}\right)^2}.
\] 
Combining with \Cref{thm:lowerboundgene3}, we finish the proof.

\end{document}